\documentclass[11pt]{article}

\usepackage{graphicx}
\usepackage{xcolor}
\usepackage{hyperref}
\usepackage{subcaption}

\usepackage{amsmath,amssymb,epsf,epic}
\usepackage{tikz,tkz-tab}

\setlength{\evensidemargin}{2cm}
\setlength{\oddsidemargin}{-.7cm}
\setlength{\topmargin}{-.3in} 
\textwidth 6.8in 
\textheight 8.5in

\numberwithin{equation}{section}


\newcommand{\un}{{\rm Id}}
\newcommand{\ra}{\rightarrow}
\newcommand{\tr}{\mbox{Tr}}

\newcommand{\bra}{\langle} 

\newcommand{\ket}{\rangle}

\renewcommand{\i}{{\rm i}}

\newcommand{\be}{\begin{equation}}
\newcommand{\ee}{\end{equation}}
\newcommand{\bea}{\begin{eqnarray}}
\newcommand{\eea}{\end{eqnarray}}

\newcommand{\eps}{\varepsilon}

\newcommand{\ffi}{\varphi}

\newcommand{\ode}{{\cal O}}
\newcommand{\e}{{\rm e}}
\newcommand{\grintl}{[\kern-.18em [}
\newcommand{\grintr}{]\kern-.18em ]}


\newcounter{resultcounter}[section]

\newtheorem{thm}[resultcounter]{Theorem}
\newtheorem{lem}[resultcounter]{Lemma}
\newtheorem{prop}[resultcounter]{Proposition}
\newtheorem{cor}[resultcounter]{Corollary}

\newtheorem{rem}[resultcounter]{Remark}

\newtheorem{ex}[resultcounter]{Example}


\def\cD{{\cal D}}  
 \def\cH{{\cal H}} 
 \def\cK{{\cal K}} \def\cL{{\cal L}}

 \def\cT{{\cal T}} 
  \def\cX{{\cal X}}


\newcommand{\R}{{\mathbb R}}
\newcommand{\N}{{\mathbb N}}

\newcommand{\C}{{\mathbb C}}
\newcommand{\Z}{{\mathbb Z}}


\def\proof{\noindent{\bf Proof:}\ \ }
\def\qed{\hfill $\Box$\medskip}




\begin{document}
\title{A Nonlinear Quantum Adiabatic Approximation}
 \author{ Clotilde Fermanian-Kammerer
 \footnote{Universit\'e Paris Est Cr\'eteil, Universit\'e Gustave Eiffel, CNRS, LAMA, UMR CNRS 8050,
61, avenue du G\'en\'eral de Gaulle, 94010 Cr\'eteil Cedex, France, 
}\and Alain Joye\footnote{ Univ. Grenoble Alpes, CNRS, Institut Fourier, F-38000 Grenoble, France} }

\date{ }

\maketitle

\maketitle
\vspace{-1cm}

\thispagestyle{empty}
\setcounter{page}{1}
\setcounter{section}{1}

\setcounter{section}{0}

\vskip 1cm

\noindent{\bf Abstract}:  { This paper is devoted to a generalisation of the quantum adiabatic theorem to a nonlinear setting. We consider a Hamiltonian operator which depends on the time variable and on a finite number of parameters and acts on a separable Hilbert space of which we select a fixed basis. We study an evolution equation in which this  Hamiltonian acts on the unknown vector, while depending on coordinates of the unknown vector in the selected basis,  thus making the equation nonlinear.  We prove existence of solutions to this equation and consider their asymptotics in the adiabatic regime, {\it i.e.} when the Hamiltonian is slowly varying in time. Under natural spectral hypotheses, we prove the existence of normalised time dependent vectors depending on the Hamiltonian only, called {\it instantaneous nonlinear eigenvectors}, and show the existence of solutions which remain close to these vectors, up to a rapidly oscillating phase, in the adiabatic regime. We first investigate the case of bounded operators and then exhibit a set of spectral assumptions under which the result extends to unbounded Hamiltonians. }

\medskip 

\noindent{\bf Keywords}: {Adiabatic approximation, nonlinear adiabatic theorem.} 

\medskip 

\noindent {\bf Acknowledgments}: This work is partially supported by the ANR grant NONSTOPS (ANR-17-CE40-0006-01), by the Von Neumann visiting professorship of the  Technische Universit\"at M\"unchen and the CNRS Project 80$|$Prime {\it AlgDynQua}. It has been partially written in the Mathematics Department of the Technische Universit\"at M\"unchen and during a visit of AJ at the Institut Mittag-Leffler. The authors thank these two institutions for their warm hospitality, with a special thought for the stimulating discussions with Simone Warzel and Herbert Spohn.

\section{Introduction}

We consider a time dependent Hamiltonian on a separable Hilbert space $\cH$ that depends on a finite number of real parameters taken  { in some open neighborhoods $\mathcal T$ and $\mathcal X$ of $[0,1]$:}
\be\label{defH}
\mathcal T \times \mathcal X^p\ni (t,x)\ra H(t,x)\in \cL(\cH),
\ee
where $H(t,x)$ is a   {smooth map, {\it i.e.} ${\mathcal C}^\infty$,} valued in the set of  self-adjoint operators on $\cH$. Let $\{e_j\}_{j\in \N}$ be a fixed orthonormal basis  of $\cH$ and for $f\in\cH$, we denote by $f_j$ its coordinate along  $e_j$, i.e. $f_j=\langle e_j | f\rangle$. 
 {We consider  solutions to the  nonlinear evolution equation 
\be\label{nlad}
i\eps \partial_t v^\eps(t)=H\Bigl(t,  |v_1^\eps(t)|^2, \dots, |v_p^\eps(t)|^2\Bigr)v^\eps(t), \ \ v^\eps(0)= v_0, 
\ee
 for $t\in \mathcal T$ and initial data $v_0\in \cH$ with  $\| v_0\|=1$,
in the  limit where the small parameter $\eps$ tends to zero. }

\medskip 

More precisely, we prove   {under natural spectral hypotheses} that for the systems  we consider, there exist an interval of times ${\mathcal T}_0\subset \mathcal T$ (containing $0$) and a  family of smooth nonlinear eigenvectors, {\it i.e.} two smooth maps $t\mapsto \omega(t)\in \cH$ and $t\mapsto \lambda(t)\in \R$, such that $\|\omega(t)\|\equiv1$, $\bra \omega(t)|\partial_t\omega(t)\ket\equiv 0$, and 
$$H(t, |\omega_1(t)|^2,\dots  ,|\omega_p(t)|^2)\omega(t)=\lambda(t)\omega(t),\;\;   {\forall t\in \mathcal T_0}$$
and we provide conditions under which the deviations  of $v^\eps(t)$ from $\omega(t)\e^{-{i\over\eps} \int_0^t \lambda(s) ds}$ 
are small as $\eps\ra 0$, in the case where the initial data is taken along $\omega(0)$ ($v^\eps(0)=\omega(0)$).  

\medskip

We stress that the evolution equation (\ref{nlad})  depends on the choice of  {the first $p$ vectors of the orthonormal basis $\{e_j\}_{j\in \N}$. It is also important to note that the norm of $v^\eps(t)$ is preserved and, because of the choice of a normalized initial data, $\| v^\eps(t)\|.$ In particular, this implies 
$$|v^\eps_j(t)|^2\in[0,1] \subset \mathcal X,\;\;\forall j\in\{1,\cdots p\}.$$  
}

 {
The limit $\eps\rightarrow 0$ that we consider is known as the {\it adiabatic limit} and consists in analyzing in finite time the evolution of slowly varying Hamiltonian: indeed, with the change of variable $t=\eps s$ and of unknown function $\tilde v^\eps(s)=v^\eps(t)$, equation (\ref{nlad}) is equivalent to
\[ i\partial_s \tilde v^\eps = H\Bigl(\eps s, |\tilde v_1^\eps(s)|^2, \dots, |\tilde v_p^\eps(s)|^2\Bigr)\tilde v^\eps(s), \ \ \tilde v^\eps(0)= v_0\]
where the map $s\mapsto H(\eps s,x)$ is slowly varying in $s\in \mathcal{ T}/\eps $. In the context of linear equations, such an analysis leads to the celebrated adiabatic theorem of quantum mechanics see {\it e.g.}~\cite{K1}. 
Our aim here is to provide a framework where one can prove an approximation of the 
solution to the nonlinear equation (\ref{nlad}) that bears some similarities with the well-known adiabatic results of the litterature.
}

\medskip 

The adiabatic theorem of quantum mechanics has found numerous extensions since its first formulations \cite{BF, K1} for self-adjoint time dependent Hamiltonians with an isolated eigenvalue. It was extended to accommodate isolated parts of spectrum \cite{N, ASY} and it was shown to be exponentially accurate for analytic time dependence \cite{JKP, JP, N2, J}.
 Then, it was extended to deal with gapless situations where the eigenvalue of interest is not isolated in the rest of the spectrum, \cite{AHS, AE, T}. Generalisations to non-self-adjoint generators were provided in \cite{A-SF, J2, AFGG}, leading to extensions to gapless, non self-adjoint generators  provided in \cite{Sch}. Also, formulations of the adiabatic approximation have been shown to hold true for unitary and non unitary discrete time evolutions, \cite{DKS, Ta, HJPR1, HJPR2}, and for extended
many body systems \cite{BDF}. From this perspective, we prove a generalisation of the adiabatic theorem to nonlinear non-autonomous evolution equations in a Hilbert space defined by (\ref{nlad}) and (\ref{defH}). 

\medskip 

Such nonlinear evolution equations occur for example in condensed matter Physics or nonlinear Optics 
 within certain parameter regimes. In particular, the analysis of Landau-Zener tuneling of a Bose-Einstein condensate between Bloch bands in an optical lattice or in double well potentials, as in~\cite{BQ} , \cite{PRL03,Kh10,KhRu05} or the study of optical waveguides known as nonlinear coherent couplers \cite{Je,A}, lead to systems of this form.  Indeed, within a certain regime, the relevant Hamiltonians take the explicit form~(\ref{nlad}) for $p=2$ with an explicit two by two matrix~$H(t,x_1,x_2)$,  see the book \cite{LLFY} for examples and more references. 
  {
 A concrete example is provided by the work \cite{MCWW} in which the dynamics of a Bose-Einstein condensate in a double-well potential are studied, within the mean-field and two-mode approximations. In this regime, one considers the Gross-Pitaevskii equation for the condensate, assuming the two wells of the potential are sufficiently deep and separated so that the condensate wave function can be expressed as a linear combination of the ground states in each of the two wells. Under suitable assumptions on the many-body interaction term, the resulting effective evolution of the coefficients $(v_1, v_2)\in {\mathbb C}^2$  of this expansion that describe the number of particles in each well, takes the form
\[
i\partial_t 
\begin{pmatrix} v_1 \cr v_2 \end{pmatrix}= \begin{pmatrix} \kappa |v_1|^2 & \Omega  \cr \Omega &  \kappa |v_2|^2 \end{pmatrix}\begin{pmatrix} v_1 \cr v_2 \end{pmatrix},
\]
where $\Omega$ and $\kappa$ are parameters which depend on time when the two-well potential depends on time. This yields a Hamiltonian of the type (\ref{defH}). While the quadratic dependence in the components $|v_j|$ of the Hamiltonian above is dictated by the Physics it describes, our results hold for Hamiltonians displaying arbitrary smooth dependence in $|v_j|^2$.
 }

\medskip 

Adiabatic issues have been already addressed in the PDE literature in a nonlinear setting  with different perspectives.
With a scattering point of view, the long time behaviour of nonlinear two by two problems with generators similar to those mentioned above was analysed  by \cite{CFK2}. In a PDE setting,  \cite{CFK1} and~\cite{Hari} study the adiabatic propagation of coherent states for  systems of Schr\"odinger equations with a non linearity and \cite{S} considers the adiabatic regime of the nonlinear Schr\"odinger equation for small data. A common feature of these works is that the effective nonlinearity is weak in the sense that it decays with $\eps$. This is not the case in~\cite{GG}  {which studies} a PDE with a nonlinearity  { of order $\ode(1)$
 as $\eps\rightarrow 0$}, for small initial data, 
 { but of size independent of~$\eps$}. The authors consider therein the time dependent Gross-Pitaevskii equation in a potential which varies slowly in time. Under suitable conditions on the potential, a unique ground state exists for the stationary linear equation parametrized by the time variable,  playing the role of a nonlinear  eigenvector in the sense of the previous paragraphs, and the solution to the Gross-Pitaevskii equation is shown to follow the instantaneous ground state, for large times. 

\medskip

 {Our approach here is closer to the latter reference. Indeed, we aim at providing }
 a general functional framework for nonlinear adiabatic evolution equations (\ref{nlad}) and (\ref{defH}), characterised by non linearities  {of order $\ode(1)$} as $\eps\ra 0$ and admitting solutions of norm  {strictly equal to} one, in contrast to the PDE results mentionned above. We then discuss a set  of reasonable spectral hypotheses on $H(t,x)$ allowing us to  provide an approximation of the solutions to (\ref{nlad}) as $\eps\ra 0$, for times $t$ of order one.
Our main result is first proven for bounded Hamiltonians, and then extended  to unbounded $H(t,x)$, under suitable spectral assumptions. In particular, the latter case applies to a certain type of nonlinear Schr\"odinger equation on $L^2(\R)$ that we discuss.

\medskip

Note that the matrix cases considered in \cite{CFK2} or \cite{LLFY} and in the references therein, appear as special cases of those that we consider, whereas our hypotheses excludes the PDE setup considered in \cite{CFK1,Hari, S,GG}. This is due to the fact that the nonlinearity in~(\ref{nlad}) depends on the norm of the projections of the wave function on some subset of the basis vectors of the Hilbert space, and not of the modulus of the wave function itself as in the Gross-Pita{i}evski equation or in Hartree equation.  In this sense, the nonlinearity that we consider is weaker.

\subsection{Setup and main result}

To ease notations, we will write from now on 
\be\label{allev}
H\Bigl(t,  |v_1|^2, \dots, |v_p|^2\Bigr)= :H\left(t, [v]\right),
\ee
for any vector $v\in\cH$, where $H$ depends on $p<\infty$  components of $v$ only. The form of the nonlinearity we choose, depending on the modulus of (certain components of) the solution, is reminiscent of that of the nonlinear Schr\"odinger equation. It entails in particular the fact that $H$ actually depends on $\{v_1, \bar v_1, v_2, \bar v_2, \dots v_p, \bar v_p\}$. 
This motivates the introduction of the anti-unitary complex conjugation $C$ on $\cH$ defined by
\be\label{compconj}
\forall \   v=\sum_j v_j e_j \in \cH, \ \ Cv=\sum_j \bar {v_j} e_j
\ee
to be used later on.  {Note that $C$ depends on the basis $\{e_j\}_{j\in\N}$ that is considered invariant under $C$.} For any $A\in\cL(\cH)$, we define the operator $\bar A=C A C\in \cL(\cH)$ and will call operators such that $\bar A=A$, real operators. 
We will work under the following general hypotheses.

\begin{enumerate}
\item[{\bf H$_0$}] The map 
$\cT \times \cX^p\ni (t,x)\mapsto H(t,x)\in \cL(\cH)$ is $C^\infty$  {in the operator norm topology}, where $\cT$ and~$\cX$ are open neighbourhoods of $[0,1]$. For all $(t,x)\in \cT\times \cX^p$, $H(t,x)=H^*(t,x)$.

\item[{\bf H$_1$}]  There exists $\delta >0$ such that $\|\partial_{x_j} H(x,t)\|\leq \delta $, for all $(t, x)\in  \cT \times \cX^p$ and $j\in \{1,\dots, p\}$.

\item[{\bf H$_2$}] For all  $(t,x)\in \cT \times \cX^p$, the spectrum $\sigma (H(x,t))$  { consists in $N$ distinct eigenvalues $\{\lambda_j(t,x)\}_{j=1}^N$, possibly degenerate,  that are separated from one another by a gap bounded below by $g>0$, uniformly in $(t,x)$. }

\item[{\bf H$_3$}] There exists $1\leq j_0\leq N$ such that $\lambda_{j_0}(x,t)$ is simple. 
\end{enumerate}

Consequently, the corresponding spectral decomposition of $H(t,x)$ reads
\be\label{specdec}
H(t,x)=\sum_{j=1}^N \lambda_j(t,x) P_j(t,x),
\ee
where the orthogonal spectral projectors $P_j(t,x)$ have  {constant rank which may be infinite}, while $\dim (P_{j_0}(t,x))\equiv 1$.  We shall make use of the following facts:  {the maps $(t,x)\mapsto P_j(t,x)$ are ${\mathcal C}^\infty$ and so are $(t,x)\mapsto \lambda_j(t,x)$.  }
Moreover, for $j=j_0$, there exists a  global smooth map $\cT\times \cX^p\ni (t,x)\mapsto \ffi_{j_0}(t,x)\in \cH$ such that  {$\|\ffi_{j_0}(t,x)\|\equiv 1$ and }
$$\forall (t,x)\in\cT \times \cX^d,\;\; H(t,x)\varphi_{j_0}(t,x) = \lambda_{j_0}(t,x) \varphi_{j_0}(t,x).$$
These facts are  briefly discussed in Section~\ref{sec:eigenvector} below.

\medskip

The form of the nonlinearity immediately implies a gauge invariance, 
which will turn out to be crucial later on.
 {Due to~\eqref{allev}}, we have for any $\theta\in \R$, any $v\in \cH$,
\be\label{gauge}
H(t,[\e^{i\theta}v])=H(t,[v]).
\ee
If {\bf H$_2$} and {\bf H$_3$}  hold as well, this implies
\be\label{gauge2}
P_j(t,[\e^{i\theta}v])=P_j(t,[v]), \ \ \lambda_j(t,[\e^{i\theta}v])=\lambda_j(t,[v]), \ \ \ffi_{j_0}(t,[\e^{i\theta}v])=\ffi_{j_0}(t,[v]).
\ee

We first note that  { the self-adjointness of $H(t,x)$  ensures that $\| v^\eps(t)\|=\|v_0\|=1$, whence } the existence of global solutions to (\ref{nlad}) via Cauchy-Lipschitz Theorem. Moreover, gauge invariance (\ref{gauge}) implies symmetries that we exploit below. These elementary properties are stated in the next Lemma with the convention (\ref{allev}). 

\begin{lem}\label{lem:elem}
Under assumption {\bf H$_0$},  {for any $v_0\in\mathcal H$},  the equation 
\be\label{nlad2}
i\eps \partial_t v^\eps(t)=H(t,  [v^\eps(t)])v^\eps(t), \ \ v^\eps(0)=v_0\in \cH, \ t\in \cT,
\ee
admits a unique global solution with   {$\| v^\eps(t)\|=\|v_0\|$}.\\
Besides, given a $C^0$ map $\cT\times \cX^p\ni (t,x)\ra \chi(t,x)\in \R$, and $v^\eps(t)$ a solution to (\ref{nlad2}), the solution to
$$
i\eps \partial_t s^\eps(t)=\Bigl(H\left(t,  [s^\eps(t)]\right)+\chi\left(t, [s^\eps(t)]\right)\un \Bigr)s^\eps(t) , \ \ s^\eps(0)=v_0\in \cH, \ t\in \cT
$$
reads
$$
s^\eps(t)=\e^{-i\int_0^t \chi(u, [v^\eps(u)])du/\eps}v^\eps(t), \ \ \forall \ t\in \cT.
$$
\end{lem}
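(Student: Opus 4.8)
\emph{Local well-posedness.} The statement reduces to Cauchy--Lipschitz theory combined with the self-adjointness in {\bf H$_0$} and the gauge invariance~(\ref{gauge}). I would first write~(\ref{nlad2}) as $\partial_t v^\eps=F(t,v^\eps)$ with $F(t,v):=-\tfrac{i}{\eps}H(t,[v])\,v$ and check that $F:\cT\times\cH\to\cH$ is continuous in $t$ and locally Lipschitz in $v$. The only point to watch is that, although $v\mapsto|v_j|$ is not differentiable at the origin, the nonlinearity enters only through $v\mapsto[v]=(|v_1|^2,\dots,|v_p|^2)$, and each $|v_j|^2=(\mathrm{Re}\,v_j)^2+(\mathrm{Im}\,v_j)^2$ is smooth, with $\big||v_j|^2-|w_j|^2\big|\le(\|v\|+\|w\|)\,\|v-w\|$; composing with $x\mapsto H(t,x)$, which by {\bf H$_0$} is $C^\infty$ (hence locally Lipschitz) and bounded, and with $v\mapsto v$, one obtains that $F$ is locally Lipschitz in $v$, uniformly for $t$ in compact subsets of $\cT$. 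Picard--Lindel\"of then gives a unique maximal solution.

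\emph{Norm conservation and global existence.} On its maximal interval the solution satisfies $\tfrac{d}{dt}\|v^\eps(t)\|^2=2\,\mathrm{Re}\,\bra v^\eps(t)\,|\,\partial_t v^\eps(t)\ket=-\tfrac{2}{\eps}\,\mathrm{Re}\big(i\,\bra v^\eps(t)\,|\,H(t,[v^\eps(t)])\,v^\eps(t)\ket\big)=0$, the last equality because $H(t,[v^\eps(t)])=H(t,[v^\eps(t)])^*$ makes the inner product real. Hence $\|v^\eps(t)\|\equiv\|v_0\|$, so $|v^\eps_j(t)|^2\in[0,\|v_0\|^2]$ stays in $\cX$ in the normalised case $\|v_0\|=1$ of interest (as $[0,1]\subset\cX$), $[v^\eps(t)]$ remains in $\cX^p$, and the trajectory stays in a fixed bounded set on which $F$ is bounded. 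The blow-up alternative then forbids finite-time blow-up, and the solution extends to all of $\cT$.

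\emph{The gauge identity.} Set $\Phi^\eps(t):=\tfrac1\eps\int_0^t\chi(u,[v^\eps(u)])\,du$ and $s^\eps(t):=\e^{-i\Phi^\eps(t)}v^\eps(t)$. Since $s^\eps(t)$ and $v^\eps(t)$ differ by the unimodular scalar $\e^{-i\Phi^\eps(t)}$, gauge invariance~(\ref{gauge}) yields $[s^\eps(t)]=[v^\eps(t)]$, hence $H(t,[s^\eps(t)])=H(t,[v^\eps(t)])$ and $\chi(t,[s^\eps(t)])=\chi(t,[v^\eps(t)])$. Differentiating and using~(\ref{nlad2}),
\[
i\eps\,\partial_t s^\eps(t)=\eps\,\dot\Phi^\eps(t)\,\e^{-i\Phi^\eps(t)}v^\eps(t)+\e^{-i\Phi^\eps(t)}\big(i\eps\,\partial_t v^\eps(t)\big)=\chi\big(t,[s^\eps(t)]\big)\,s^\eps(t)+H\big(t,[s^\eps(t)]\big)\,s^\eps(t),
\]
using $\eps\dot\Phi^\eps(t)=\chi(t,[v^\eps(t)])$ and the fact that scalars commute with $H$; moreover $s^\eps(0)=v_0$. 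So this $s^\eps$ solves the modified equation.

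\emph{Uniqueness and the main point.} Because $\chi$ is only $C^0$ in $x$, the modified right-hand side need not be Lipschitz and uniqueness for it is not immediate from Cauchy--Lipschitz; I would instead run the computation backwards. If $\tilde s^\eps$ is any solution of the modified equation, then $w^\eps(t):=\e^{\,i\tfrac1\eps\int_0^t\chi(u,[\tilde s^\eps(u)])\,du}\,\tilde s^\eps(t)$ satisfies, by the same manipulation and $[w^\eps(t)]=[\tilde s^\eps(t)]$, equation~(\ref{nlad2}) with $w^\eps(0)=v_0$; uniqueness there gives $w^\eps=v^\eps$, and since $[\tilde s^\eps(u)]=[w^\eps(u)]=[v^\eps(u)]$ we recover $\tilde s^\eps(t)=\e^{-i\Phi^\eps(t)}v^\eps(t)=s^\eps(t)$. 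The only (mild) obstacle is bookkeeping: observing that the $|v_j|^2$-nonlinearity is smooth so the non-smoothness of $|v_j|$ is harmless, that self-adjointness supplies the a priori norm bound needed for global existence, and that the second part's uniqueness must be routed through the gauge transformation back to the genuinely Lipschitz equation~(\ref{nlad2}).
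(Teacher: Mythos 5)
Your proof is correct and follows the same outline the paper gives (the lemma is treated as elementary and left without a separate proof): self-adjointness of $H$ yields norm conservation and hence global existence via Cauchy--Lipschitz, and the gauge invariance~(\ref{gauge}) of the nonlinearity yields the phase identity for $s^\eps$. Your observation that uniqueness for the $\chi$-modified equation, whose right-hand side is only $C^0$ in $x$, should be obtained by transporting any candidate solution back to the Lipschitz equation~(\ref{nlad2}) rather than by a direct Picard argument, is a correct refinement that the paper leaves implicit.
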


Our analysis focuses on solutions to (\ref{nlad}) that are tightly related to the simple eigenvalue $\lambda_{j_0}(t,x)$ and associated eigenvector 
$\ffi_{j_0}(t,x)$. Therefore, to simplify the notation, we drop the index $j_0$ for these spectral data from now on. 
We start by introducing a vector $\omega(t)\in \cH$ 
defined in a neighbourhood of $0\in\cT_0$ 
by 
$$
H(t, [\omega(t)])\omega(t)=\lambda(t,[\omega(t)])\omega(t), \ \ \forall \ t\in \cT_0.
$$
As discussed in Section \ref{sec:eigenvector}, this nonlinear equation   {(that does not involve any derivative of $\omega(t)$)} turns out to always have a local nontrivial solution when $\lambda(t,x)$ is a simple eigenvalue of $H(t,x)$.

\begin{prop} \label{prop:eigenvector}
Assume  {\bf H$_0$}, {\bf H$_1$},  {\bf H$_2$}  and {\bf H$_3$}. 
Then, for any $t_0\in \cT$, there exists a neighbourhood $\cT_0\subset \cT$ of $t_0$ such that for all $t\in\cT_0$, there exists a solution $\omega(t)\in\cH$ of norm one to the equation \be\label{defomega}
P(t, [\omega(t)])\omega(t)=\omega(t).
\ee
 { Moreover, there exists $\delta_0>0$ such that  for $\delta\in (0,\delta_0)$, the map $\cT_0 \ni t\mapsto \omega(t)$ is ${\mathcal C}^\infty$ and can be chosen to satisfy 
 \begin{equation}\label{choice_omega}
 \bra\omega(t) | \partial_t \omega(t)\ket\equiv 0,
 \end{equation}
 which makes it unique up to a  constant phase. }
\end{prop}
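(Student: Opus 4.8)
The plan is to solve the fixed-point equation \eqref{defomega} by an implicit-function-theorem argument, working in a slice transversal to the gauge orbit, and then to fix the residual phase freedom using the gauge invariance \eqref{gauge2} to enforce \eqref{choice_omega}. First I would fix $t_0\in\cT$ and set $x_0=[\varphi(t_0,x_0)]$... more precisely, I would look for a fixed point of the map $x\mapsto [\varphi(t_0,x)]$ in $\cX^p$: since $\varphi(t,x)$ is the normalised eigenvector for the simple eigenvalue $\lambda(t,x)$, the map $(t,x)\mapsto [\varphi(t,x)]=(|\varphi_1(t,x)|^2,\dots,|\varphi_p(t,x)|^2)$ is $C^\infty$ and valued in a compact subset of $\cX^p$. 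Define $F(t,x)=x-[\varphi(t,x)]$; then $\omega(t)=\varphi(t,x(t))$ solves \eqref{defomega} provided $F(t,x(t))=0$. At $\delta=0$ the Hamiltonian $H(t,x)$ does not depend on $x$, so $\partial_x F=\un$ is invertible; by {\bf H$_1$} the $x$-derivatives of $H$, hence of $P$ and of $\varphi$ (via the Riesz formula $P=\tfrac{1}{2\pi i}\oint (H-z)^{-1}dz$ and $g>0$ from {\bf H$_2$}), are $\ode(\delta)$, so $\partial_x F(t,x)=\un+\ode(\delta)$ is invertible for $\delta$ small. The implicit function theorem then yields $\delta_0>0$, a neighbourhood $\cT_0\ni t_0$, and a unique $C^\infty$ map $t\mapsto x(t)$ with $F(t,x(t))=0$; set $\omega(t)=\varphi(t,x(t))$, which is $C^\infty$, of norm one, and satisfies \eqref{defomega} because $P(t,[\omega(t)])=P(t,x(t))$ is exactly the rank-one projector onto $\varphi(t,x(t))$.

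Next I would address the phase normalisation \eqref{choice_omega}. The vector $\omega(t)$ just constructed satisfies $\bra\omega(t)|\partial_t\omega(t)\ket=:ia(t)$ with $a(t)\in\R$ (real because $\|\omega(t)\|\equiv1$). By the gauge invariance \eqref{gauge2}, for any $C^\infty$ real function $t\mapsto\theta(t)$ the vector $\tilde\omega(t)=\e^{i\theta(t)}\omega(t)$ still solves \eqref{defomega}, since $P(t,[\e^{i\theta(t)}\omega(t)])=P(t,[\omega(t)])$; and $\bra\tilde\omega(t)|\partial_t\tilde\omega(t)\ket=i(a(t)+\dot\theta(t))$. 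Choosing $\theta(t)=-\int_{t_0}^t a(s)\,ds$ gives \eqref{choice_omega}. Uniqueness up to a constant phase follows because any other norm-one solution of \eqref{defomega} with the same range of $P(t,[\cdot])$ equals $\e^{i\beta(t)}\omega(t)$ for some real $C^\infty$ $\beta(t)$ (the projector is rank one, so its range determines the vector up to phase, and smoothness of both vectors forces $\beta$ to be $C^\infty$), and imposing \eqref{choice_omega} on both forces $\dot\beta\equiv0$.

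The one point deserving care — and the main obstacle — is establishing the regularity and global smoothness of the eigenvector map $(t,x)\mapsto\varphi(t,x)$ with $\|\varphi\|\equiv1$ used as the input to the implicit function theorem; this is exactly the content flagged for Section~\ref{sec:eigenvector}. Since $\lambda$ is simple with a uniform gap, the spectral projector $P(t,x)$ is $C^\infty$ by the contour-integral representation, and one obtains a smooth normalised eigenvector \emph{locally} by parallel transport, $\varphi(t,x)=P(t,x)\varphi_0/\|P(t,x)\varphi_0\|$ for a reference vector $\varphi_0$ with $P(t,x)\varphi_0\neq0$; globality over $\cT\times\cX^p$ then uses that this domain is a product of intervals (hence simply connected) so the line bundle $\mathrm{Ran}\,P$ is trivial. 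Granting that, everything reduces to the quantitative estimate $\partial_x F=\un+\ode(\delta)$ from {\bf H$_1$}, which is routine. I would also remark that $\cT_0$ and $\delta_0$ can be taken uniform over $t_0$ in a compact subinterval by the usual compactness argument, though the statement only asks for a neighbourhood of a given $t_0$.
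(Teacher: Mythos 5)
Your route is genuinely different from the paper's and arguably cleaner in one respect: you observe that, because the nonlinearity depends only on the $p$ real coordinates $x=[v]$, the whole problem reduces to a fixed--point / implicit--function problem for the \emph{finite-dimensional} map $F(t,x)=x-[\ffi(t,x)]$ on $\cX^p\subset\R^p$, after which $\omega(t)=\ffi(t,x(t))$ is recovered. The paper instead works directly in the Hilbert space: Schauder's fixed-point theorem is applied to $S(v)=\ffi(t_0,[v])$ on the closed unit ball $B_1(\cH)$, whose image lies in the compact set $\{\ffi(t_0,x):x\in[0,1]^p\}$, and the implicit function theorem is then run on the infinite-dimensional map $J(t,v,\bar v)=(v-\ffi(t,[v]),\bar v-\bar\ffi(t,[v]))$ on $\cT\times\cH\times\cH$, showing $D_{v,\bar v}J$ is a rank-$p$ $O(\delta/g)$-perturbation of the identity. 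Both hinge on the same estimate $\|\partial_{x_j}\ffi\|=O(\delta/g)$; your finite-dimensional reduction is more transparent and dispenses with an infinite-dimensional fixed-point theorem.

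There is, however, one concrete gap in your write-up: the implicit function theorem propagates a zero of $F$ in $t$, but does not \emph{produce} the seed point $(t_0,x_0)$ with $F(t_0,x_0)=0$. You gesture at ``looking for a fixed point of $x\mapsto[\ffi(t_0,x)]$'' and correctly note continuity and compactness, but never close this. The missing ingredient is Brouwer's theorem: since $[\ffi(t_0,x)]_j=|\langle e_j,\ffi(t_0,x)\rangle|^2\in[0,1]$ and $[0,1]^p\subset\cX^p$, the map sends the compact convex set $[0,1]^p$ continuously into itself and thus has a fixed point $x_0$ --- with no smallness condition on $\delta$, matching the structure of the statement (existence for all $\delta$, smoothness only for $\delta<\delta_0$) and paralleling the paper's Schauder step. (A contraction argument would also furnish $x_0$, but only for $\delta$ small, which is a weaker existence statement than claimed.) Once this is inserted, your phase-adjustment and local-uniqueness arguments are correct and coincide with the paper's.
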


 { 
In the sequel, we shall always make the choice~\eqref{choice_omega} and we will call such a vector an {\it instantaneous nonlinear eigenvector}.}
We can now give our main statements which establish nonlinear adiabatic theorems in the considered framework.  {We first consider the case ${\mathcal H}=\C^N$.}

\begin{thm}\label{thm:main0} 
Assume  {${\mathcal H}=\C^N$}, {\bf H$_0$}, {\bf H$_1$} with $\delta$  small enough, and suppose that {\bf H$_2$} holds with all eigenvalues being simple.
Moreover, assume that $H(t,x)$ is real, that is $\overline{H}(t,x)= H(t,x)$, 
and generic in the sense that 
$\sigma(H(t,x)-\lambda(t,x))\cap \sigma(-H(t,x)+\lambda(t,x))=\{0\}$.
Let $\omega(t)$ be  { the instantaneous nonlinear eigenvector} defined in Proposition~\ref{prop:eigenvector}
 in a neighbourhood $\cT_0$ of $t_0=0$.
Then the solution $v^\eps(t)$ to~(\ref{nlad}) with $v^\eps(0)=\omega(0)$ satisfies for all $t\in\cT_0$
$$
v^\eps(t)=\e^{-{i\over\eps} \int_0^t \lambda(s,[\omega(s)]) ds}\omega(t)+ O_t(\eps).
$$
\end{thm}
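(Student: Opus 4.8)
The plan is to adapt the standard proof of the linear adiabatic theorem—construction of an approximate solution via the intertwining/parallel-transport operator, together with an integration by parts exploiting the spectral gap—to the nonlinear setting, using the key observation that the nonlinearity enters only through the frozen data $[\omega(t)]$ once we know the solution stays close to $\omega(t)$. Concretely, I would introduce the ansatz $\widetilde v^\eps(t)=\e^{-\frac{i}{\eps}\int_0^t\lambda(s,[\omega(s)])\,ds}\,\omega(t)$, the naive guess suggested by the statement, and study the error $r^\eps(t)=v^\eps(t)-\widetilde v^\eps(t)$. The first task is to write the equation satisfied by $r^\eps$: plugging $\widetilde v^\eps$ into \eqref{nlad} produces a source term $i\eps\,\partial_t\widetilde v^\eps-H(t,[\widetilde v^\eps])\widetilde v^\eps$. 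Since $H(t,[\widetilde v^\eps])=H(t,[\omega])$ by the gauge invariance \eqref{gauge} (the phase drops out), and $H(t,[\omega])\omega=\lambda(t,[\omega])\omega$, this source reduces to $i\eps\,\e^{-\frac{i}{\eps}\int\lambda}\,\partial_t\omega(t)$, which is $O(\eps)$.

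**Second step: closing the estimate.** The error solves a nonlinear equation of the schematic form
\begin{equation*}
i\eps\,\partial_t r^\eps = H(t,[v^\eps])\,r^\eps + \bigl(H(t,[v^\eps])-H(t,[\omega])\bigr)\widetilde v^\eps + i\eps\,\e^{-\frac{i}{\eps}\int\lambda}\,\partial_t\omega,\qquad r^\eps(0)=0.
\end{equation*}
The term $H(t,[v^\eps])r^\eps$ is skew-adjoint (times $i/\eps$) so it does not contribute to $\frac{d}{dt}\|r^\eps\|^2$; the last term contributes $O(\eps)$ after integrating in $t$ over the compact $\cT_0$. The delicate term is the middle one: $H(t,[v^\eps])-H(t,[\omega])$ is controlled, by {\bf H$_1$} and the mean value theorem, by $\delta\sum_j\bigl||v_j^\eps|^2-|\omega_j|^2\bigr|\le C\delta\,\|r^\eps\|$ (using $\|v^\eps\|=\|\omega\|=1$ to bound the difference of squared moduli by the difference of vectors). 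Hence $\frac{d}{dt}\|r^\eps\|^2\le \frac{C\delta}{\eps}\|r^\eps\|^2+C\|r^\eps\|$, and a naive Grönwall gives only $\|r^\eps\|=O(1)$, not $O(\eps)$—this is the main obstacle, and it is exactly why a cruder bound is insufficient and why the hypotheses ``$\delta$ small'', ``$H$ real'', and the genericity condition $\sigma(H-\lambda)\cap\sigma(-H+\lambda)=\{0\}$ appear.

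**Resolving the obstacle.** The cure is the usual adiabatic refinement: rather than the naive ansatz, one should build a better approximate solution incorporating the first adiabatic correction, i.e. solve (a nonlinear analogue of) the commutator equation $[H(t,[\omega]),X(t)]=(\bbbone-P(t,[\omega]))\dot\omega$-type relation, which is solvable precisely because of the gap {\bf H$_2$} and, in the real symmetric setting, the genericity condition guarantees the relevant Sylvester operator $X\mapsto (H-\lambda)X - X(\overline{H-\lambda})$ (coming from the fact that $H$ sees both $v_j$ and $\bar v_j$) is invertible on the off-diagonal block. Equivalently: write $r^\eps(t)=\e^{-\frac{i}{\eps}\int\lambda}\bigl(\eps\,b(t)+q^\eps(t)\bigr)$ where $b(t)$ is chosen by a solvable algebraic equation to cancel the $O(\eps)$ source to leading order; then $q^\eps$ satisfies an equation whose source is genuinely $O(\eps^2)$ plus a term from the nonlinear feedback of $b$, and whose linear part is, after using parallel transport, close to the generator of a unitary. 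One then runs Grönwall on $q^\eps$: the problematic $\frac{C\delta}{\eps}$ factor is absorbed because, for $\delta$ small enough, the contraction/a priori smallness of $r^\eps$ can be bootstrapped (a continuity argument: as long as $\|r^\eps(t)\|\le 1$, the estimates hold, and they in turn force $\|r^\eps(t)\|\le C\eps$, closing the bootstrap on $\cT_0$). I expect the self-adjointness of $H(t,[v^\eps])$—hence exact norm conservation $\|v^\eps(t)\|\equiv1$, Lemma~\ref{lem:elem}—to be used repeatedly to control moduli differences, and the freedom \eqref{choice_omega} ($\bra\omega|\partial_t\omega\ket=0$) to ensure the diagonal part of the correction vanishes so that only the gap-controlled off-diagonal part remains. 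Collecting the estimates yields $\|v^\eps(t)-\widetilde v^\eps(t)\|=O_t(\eps)$ uniformly on $\cT_0$, which is the claim.
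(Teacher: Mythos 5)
You correctly diagnose the obstacle: after linearising the nonlinearity around $\omega(t)$, the error equation contains a term linear in $(r^\eps,\overline{r^\eps})$ of size $\delta$, and a naive Gr\"onwall estimate then carries a coefficient $\sim\delta/\eps$, which destroys any hope of concluding $O(\eps)$ directly. The gap is in the proposed cure. Introducing a first-order corrector $b(t)$ only removes (to leading order) the source $i\eps\,\dot\omega$; it does nothing to the $O(\delta)$ \emph{linear-in-$r^\eps$} term. Even with a genuinely $O(\eps^2)$ source, Gr\"onwall for $q^\eps$ still produces a factor $\e^{C\delta t/\eps}$, which blows up as $\eps\to 0$ for \emph{any} fixed $\delta>0$; no bootstrap on ``smallness of $\delta$ independent of $\eps$'' can absorb an exponential in $1/\eps$. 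So the claim that ``the linear part is, after using parallel transport, close to the generator of a unitary'' is precisely the nontrivial assertion that needs to be proved, not assumed, and your sketch does not supply the mechanism.

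What the paper actually does is quite different in its technical heart, though it shares the same starting point. Setting $\Delta=v^\eps-\omega$ (after reducing to $\lambda\equiv 0$ by Lemma~\ref{lem:elem}), one obtains a coupled system for $(\Delta,\overline\Delta)\in\cH\times\cH$ driven by a non-self-adjoint operator $F(t)=F_0(t)+G(t)$ as in \eqref{defF}, with $F_0$ block-diagonal and $G$ finite-rank of norm $O(\delta)$. The crux (Lemma~\ref{lem:specft2}) is to show that, under the stated hypotheses, the spectrum of $F(t)$ is \emph{real} and \emph{semisimple} with constant multiplicities; this is where ``$H$ real'' and the genericity $\sigma(H-\lambda)\cap\sigma(-H+\lambda)=\{0\}$ enter, via the $z\mapsto\bar z$ and $z\mapsto -z$ symmetries of the Aronszajn--Weinstein determinant \eqref{AWdet}--\eqref{AWdetsym} combined with simplicity and perturbation theory. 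Only then does one know the propagator $T^\eps(t,s)$ generated by $F(t)/\eps$ is uniformly bounded in $\eps$ and well approximated by Kato's intertwining construction $V^\eps(t,s)$ (Lemma~\ref{lem:adiab}); the adiabatic cancellation $\int_0^t V^\eps(t,s)\,\chi_\omega(s)\,ds=O_t(\eps)$ (Corollary~\ref{lem:dotomega}) then handles the $\dot\omega$ source, and a quadratic-inequality bootstrap of the form $\delta_\tau^\eps\le a\eps+(b/\eps)\tau\,(\delta_\tau^\eps)^2$ closes the estimate. Your proposal never proves the uniform boundedness of the linearised propagator — and without it the argument collapses — while the role you assign to the genericity condition (invertibility of a Sylvester operator) is not how it is used: it guarantees reality/simplicity of $\sigma(F(t))$, not solvability of a commutator equation (that would only use the gap and simplicity of $\lambda$).
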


\begin{rem}\label{rem:main0}
i)  { Note that the condition on the smallness of $\delta$ is independent on $\eps$.}\\
ii) The genericity condition always holds if $\lambda(t,x)$ is the ground state or the largest eigenvalue of $H(t,x)$.
\end{rem}

 {
After a reduction to the case where $\lambda(t,x)=0$, the proof of this theorem relies on the analysis of the system satisfied by the element  $(\Delta(t),\overline{\Delta}(t))$ of $\mathcal H \times  \mathcal H$  where 
$$ \Delta(t)=v^\eps(t) -  \omega(t). $$
Setting  $\dot{\phantom{v}}=\partial_t $, a linearization process around $\omega(t)$ shows that the evolution of $\Delta(t)$ is driven by an evolution equation of the form 
\begin{equation}\label{eq:system}
i\eps 
\begin{pmatrix} \dot\Delta \\ \dot {\overline \Delta}\end{pmatrix}
= F(t) \begin{pmatrix}   \Delta \\ {\overline \Delta}\end{pmatrix}
-i\eps \begin{pmatrix} \dot \omega \\ \dot {\overline \omega}\end{pmatrix}
 +\begin{pmatrix}  r^\eps \\ -\overline r^\eps \end{pmatrix} ,\;\;
 \Delta(0)=0,
 \end{equation}
 with $ r^\eps(t) =O(\| \Delta(t)\|^2)$ and
 $F(t)= F_0(t)+G(t)$ for some finite rank non self-adjoint operator $G(t)$ and  
 \begin{equation}
 \label{defF0}
 F_0(t)= \begin{pmatrix}  H(t,[\omega(t)]) & 0 \\ 0 & - \overline H(t,[\omega(t)]) \end{pmatrix}.
 \end{equation}
 The smallness of $\Delta(t)$ is then proved thanks to a careful analysis of this equation in which the spectrum of $F(t)$ plays a crucial role. 
 The conditions on the spectrum of $H(t,x)-\lambda(t,x)$ that are assumed in Theorem~\eqref{thm:main0} allow to ensure that the operator $F(t)$ is semisimple with 
real eigenvalues of constant multiplicity for all $t\in \cT_0$, which is enough to develop an approach {\it \`a la}  Kato
 and prove
that there exist positive constants $c_0, c_1$ such that the norm of the remainder satisfies
\be\label{balancebound0}
\|\Delta(t)\|\leq \min({c_0 t, c_1\eps}), \ \forall t\in \cT_0.
\ee
These arguments are developed in Section~\ref{sec:proofmain} below} and show that  
 the previous theorem is a special case of the following one, which holds in infinite dimension and bounded operators $H(t,x)$. 

\begin{thm}\label{thm:main}  
Assume {\bf H$_0$}, {\bf H$_1$} with $\delta$ small enough, {\bf H$_2$} and {\bf H$_3$}.
Moreover, suppose that $H(t,x)$ is real, that is $\overline{H}(t,x)= H(t,x)$.
Let $\omega(t)$ be the instantaneous nonlinear eigenvector defined by Proposition~\ref{prop:eigenvector} in a neighbourhood $\cT_0$ of $t_0=0$.
Provided the operator $F(t)$ defined by (\ref{defF}) below is semisimple with 
real eigenvalues of constant multiplicity for all $t\in \cT_0$, 
the solution $v^\eps(t)$ to~(\ref{nlad}) with $v^\eps(0)=\omega(0)$ satisfies for all $t\in\cT_0$
$$
v^\eps(t)=\e^{-i\int_0^t \lambda(s,[\omega(s)]) ds/\eps}\omega(t)+ O_t(\eps).
$$
\end{thm}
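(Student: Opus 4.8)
The plan is to establish the quantitative bound \eqref{balancebound0}, from which the theorem follows at once. First I would reduce to a vanishing eigenvalue and linearize. Applying Lemma~\ref{lem:elem} with the $x$-independent function $\chi(t,x)=-\lambda(t,[\omega(t)])$ — which is ${\mathcal C}^\infty$ in $t$ since $t\mapsto\omega(t)$ is — the vector $s^\eps(t):=\e^{\frac i\eps\int_0^t\lambda(u,[\omega(u)])du}v^\eps(t)$ solves \eqref{nlad2} with $H$ replaced by $\widehat H(t,x)=H(t,x)-\lambda(t,[\omega(t)])\un$. As the shift is independent of $x$, $\widehat H$ still obeys {\bf H$_0$}--{\bf H$_3$} with the same $\delta$, is real, has the same spectral projectors, hence the same instantaneous nonlinear eigenvector $\omega$, and now $\widehat H(t,[\omega(t)])\omega(t)=0$ on $\cT_0$; so it suffices to prove $s^\eps(t)=\omega(t)+O_t(\eps)$ and multiply back by $\e^{-\frac i\eps\int_0^t\lambda(u,[\omega(u)])du}$. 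Renaming $\widehat H$ as $H$, from now on $\lambda(t,[\omega(t)])\equiv0$; I write $H_\omega(t):=H(t,[\omega(t)])$, a real self-adjoint operator with $0$ a simple eigenvalue and eigenvector $\omega(t)$, and I note that then $\overline{\omega(t)}\in\ker H_\omega(t)$, so $\overline{\omega(t)}=\e^{i\alpha(t)}\omega(t)$ with $\alpha$ forced constant by \eqref{choice_omega}; after this allowed constant-phase change I take $\omega(t)$ real. Setting $\Delta(t)=s^\eps(t)-\omega(t)$, $X(t)=(\Delta(t),\overline{\Delta(t)})^\top$, and expanding $x\mapsto H(t,x)$ to second order at $[\omega(t)]$, using $H_\omega(t)\omega(t)=0$ and $i\eps\dot\Delta=i\eps\dot s^\eps-i\eps\dot\omega$, I obtain the system \eqref{eq:system} with $r^\eps=O(\|\Delta\|^2)$, $F=F_0+G$, $F_0$ as in \eqref{defF0} (here $\overline{H_\omega}=H_\omega$), and $G(t)$ the real operator of rank $\le p$ with blocks $\widetilde G,\widetilde G,-\widetilde G,-\widetilde G$, where $\widetilde G(t)u=\sum_{j=1}^p\omega_j(t)\bra e_j|u\ket\,\partial_{x_j}H_\omega(t)\,\omega(t)$; by {\bf H$_1$}, $\|G(t)\|\le p\,\delta$.

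Next I would record the algebra following from $H_\omega\omega=0$: $(\omega,-\omega)\in\ker F(t)$, $F(t)^*(\omega,\omega)=0$, and the flip $\sigma(u,w)=(w,u)$ anticommutes with $F(t)$, so it exchanges the $\mu$- and $(-\mu)$-eigenspaces. The decisive observation is that the source $\Gamma(t):=(\dot\omega,\dot\omega)^\top$ is transverse to the eigenspace of $F(t)$ at $0$: it is orthogonal to $(\omega,\omega)$ because $\bra\omega|\dot\omega\ket\equiv0$ — this is exactly where \eqref{choice_omega} enters — and orthogonal to any vector with opposite components because its two components coincide; since by hypothesis $F(t)$ is semisimple, every element of $\ker F(t)^*$ is a combination of $(\omega,\omega)$ and such an opposite-component vector, so $\Pi_0(t)\Gamma(t)\equiv0$, where $\Pi_0(t)$ is the spectral projector of $F(t)$ onto the eigenvalue cluster near $0$. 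Moreover, for $\delta$ small $F(t)$ is a small finite-rank perturbation of the self-adjoint $F_0(t)$, whose spectrum has $0$ isolated and the rest at distance $\ge g$ from it by {\bf H$_2$}; hence the spectral projectors of $F(t)$ are uniformly bounded, ${\mathcal C}^\infty$ in $t$, with finitely many bands still separated by a gap $\ge g/2$, and by assumption $F(t)=\sum_k\mu_k(t)\Pi_k(t)$ with the $\mu_k(t)$ real and of constant multiplicity.

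Then, letting $W^\eps(t,s)$ solve $i\eps\partial_tW^\eps=F(t)W^\eps$, $W^\eps(s,s)=\un$, I would run the Kato adiabatic construction for this (non-self-adjoint) generator: because the dynamical phases $\e^{-\frac i\eps\int\mu_k}$ have modulus one ($\mu_k$ real) while the geometric factors stay bounded on the finite interval $\cT_0$, one gets $\|W^\eps(t,s)\|\le C_0$ uniformly in $\eps$ and $t,s\in\cT_0$; and because every band other than the one near $0$ has $|\mu_k(u)|\ge g/2$, integration by parts against these non-stationary phases gives $\|\int_s^tW^\eps(t,u)B(u)\,du\|=O(\eps)$ for any ${\mathcal C}^1$ family $B$ with $\Pi_0(u)B(u)\equiv0$. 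Plugging this into Duhamel's identity $X(t)=-\int_0^tW^\eps(t,s)\Gamma(s)\,ds-\frac i\eps\int_0^tW^\eps(t,s)(r^\eps,-\overline{r^\eps})^\top(s)\,ds$, the first term is $O(\eps)$ by the decoupling with $B=\Gamma$ (using $\Pi_0\Gamma\equiv0$ and $\Gamma\in{\mathcal C}^\infty$), and the second is $\le\frac C\eps\int_0^t\|X(s)\|^2\,ds$; a continuity argument, legitimate after shrinking $\cT_0$ (which keeps the above constants bounded), then yields $\|X(t)\|\le c_1\eps$ on $\cT_0$, which, together with the complementary short-time bound $\|X(t)\|\le c_0t$ from the same identity, is \eqref{balancebound0}. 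Undoing the reduction gives the theorem.

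The hard part will be the non-self-adjoint adiabatic analysis of the third step. All three pieces of the spectral hypothesis on $F(t)$ are used in an essential way — semisimplicity to exclude Jordan blocks, which would make $W^\eps$ grow like $t/\eps$; real eigenvalues to exclude exponential growth of $W^\eps$; constant multiplicity together with the smoothness of $H$ and $\omega$ to keep the band projectors $\Pi_k(t)$ regular and uniformly bounded despite non-self-adjointness — and this adiabatic decoupling must then be interlocked with the quadratic remainder $r^\eps$, whose $1/\eps$ prefactor is precisely what prevents closing the estimate in one shot and forces the short-time bootstrap.
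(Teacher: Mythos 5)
Your proposal mirrors the paper's architecture (reduce to a vanishing eigenvalue, linearize around $\omega$, Kato-type adiabatic analysis for the non-self-adjoint generator $F$, Duhamel and bootstrap), but the very first step — the reduction — contains a gap that cascades into the rest of the argument. You shift by the $x$-independent function $\chi(t,x)=-\lambda(t,[\omega(t)])$, whereas the paper shifts by the $x$-dependent eigenvalue $\chi(t,x)=-\lambda(t,x)$ (see the line following Lemma~\ref{lem:elem}). This looks like a cosmetic difference — both make $\widehat H(t,[\omega(t)])\omega(t)=0$ — but it changes the $x$-derivatives of the shifted Hamiltonian and hence the vectors $v_j$ in $G(t)$. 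With your shift, $v_j=\partial_{x_j}H(t,[\omega])\,\omega$, so by the Hellmann--Feynman relation $\bra\omega|v_j\ket=\partial_{x_j}\lambda(t,[\omega])\neq 0$ generically. With the paper's shift, $v_j=\partial_{x_j}(H-\lambda)(t,[\omega])\,\omega$, and the identity (\ref{vjperp}) — which is derived from $P(t,x)\widehat H(t,x)\equiv 0$ holding \emph{identically in $x$}, a property only the $x$-dependent shift has — gives $P(t,[\omega])v_j=0$, hence $\bra\omega|v_j\ket=0$. This orthogonality is not decorative: it yields $\tilde P_0(t)G(t)\equiv 0$, which is exactly what the paper uses to prove (i) that $0$ stays a double eigenvalue of $F$ with \emph{zero} eigennilpotent (Lemma~\ref{lem:specft2}~a)), (ii) the explicit formula (\ref{pertP_0}) for $\mathbb P_0(t)$, and (iii) $\mathbb P_0(t)\chi_\omega(t)\equiv 0$ (Corollary~\ref{lem:dotomega}).

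With your $G$, $\tilde P_0(t)G(t)=|(\omega,-\omega)\ket\bra\xi|$ with $\xi=\sum_j(\partial_{x_j}\lambda)\nu_j\neq 0$, and a first-order reduction shows the $2\times 2$ block of $F$ near $0$ acquires a genuine nilpotent part of size $\sum_j\omega_j^2\,\partial_{x_j}\lambda(t,[\omega])$, which is generically nonzero. Thus \emph{your} $F$ is not semisimple at $0$, even when the paper's $F$ from (\ref{defF}) (which the theorem's hypothesis is explicitly about) is. Your later argument for $\Pi_0\Gamma\equiv 0$ silently assumes $\dim\ker F^*=2$ with the $\sigma$-symmetric/antisymmetric splitting — which already presupposes semisimplicity of your $F$ — and the boundedness of $T^\eps$ needs it too, as you yourself correctly note at the end. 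So the reduction must be done as in the paper: shift by the full function $\lambda(t,x)$, accept that the resulting phase factor involves $\int\lambda(u,[v^\eps(u)])\,du$ rather than $\int\lambda(u,[\omega(u)])\,du$, and get the stated phase afterwards. In exchange for that extra book-keeping you gain the crucial structural identity $\tilde P_0 G\equiv 0$ that makes every subsequent step close cleanly.
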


As already mentioned, the assumptions of Theorem~\ref{thm:main0} guarantee the adequate spectral behavior of the operator $F(t)$ defined by (\ref{defF})  to get the conclusion of Theorem~\ref{thm:main}. In other words,   assuming in {\bf H$_2$} that all eigenvalues of the real operator $H(t,x)$ are of multiplicity one  is enough to obtain the assumption on the spectral decomposition of $F(t)$. In Section~\ref{sec:proofmain} we describe another set of assumptions which are sufficient to satisfy the hypothesis of Theorem~\ref{thm:main}  {in infinite dimension}  in the case $p=1$, see Lemma~\ref{lem:specft2}.

\subsection{Extension of the result to unbounded operators}

We now extend our results to the case where  the operator $H(t,x)$ on the separable Hilbert space $\cH$ is unbounded and takes the form $H(t,x)=H_0+W(t,x)$, with $W(t,x)\in\cL(\cH)$. We make the following regularity hypothesis:
\begin{enumerate}
\item[{\bf R}$_{0}$]  The self-adjoint operator $H_0$ is defined on a dense domain $\cD\subset \cH$, and the family of bounded operator $W(t,x)$ is self-adjoint for all  $(t,x)\in \cT \times \cX^p$. Moreover, $H_0$, and $W(t,x)$ are real operators.
\item[{\bf R}$_{1}$]  The map $\cT \times \cX^p\ni (t,x)\mapsto W(t,x)\in\cL(\cH)$ is strongly $C^\infty$.
\item[{\bf R$_2$}]  There exist $\delta >0$ such that 
$\| W(t,x)\|\leq \delta$,  $\|\partial_{x_j} W(t,x)\|\leq \delta $, for all $(t, x)\in  \cT \times \cX^p$ and $j\in \{1,\dots, p\}$.
\end{enumerate}
We also assume the  spectral hypothesis 
\begin{enumerate}
\item[{\bf S}$_1$] The spectrum of $H_0$ consists in an infinite increasing 
sequence of 
simple eigenvalues $\lambda_j\geq 0$, $j\in\N$, and there exists $c_0>0$ and $\alpha>1/2$ 
such that the gaps satisfy 
$$\forall j\in\N,\;\; \lambda_{j+1}-\lambda_j  \geq   c_0 \, j^\alpha.$$
\end{enumerate}

The operator $W(t,x)$ being bounded, if $\delta$ is small enough, perturbation theory implies that
for all  $(t,x)\in \cT \times \cX^p$, the self-adjoint operator $H(t,x)=H_0+W(t,x)$ defined on $\cD$ has spectrum $\sigma(H(t,x))=\{\lambda_j(t,x)\}_{j\in \N}$ consisting in simple eigenvalues $\lambda_j(t,x)$ only, and there exists $c_1>0$  
such that the gaps satisfy for $\alpha>1/2$
$$\forall (t,x)\in{\mathcal T}\times \cX^p,\;\; \forall j\in\N,\;\; \lambda_{j+1}(t,x)-\lambda_j(t,x)  \geq   c_1\, j^\alpha.$$
We pick some $j_0\in \N$ and assume the generic property:
\begin{enumerate}
\item[{\bf S}$_2$] 
For all $(t,x)\in {\mathcal T}\times \cX^p$, 
 $\displaystyle{\{\lambda_j(t,x)-\lambda_{j_0} (t,x),\,j\in\N\}\cap \{ -\lambda_j (t,x) +\lambda_{j_0}(t,x),\,j\in\N \}=\{0\}.}$
\end{enumerate}
Note that, since $H_0$ is bounded from below, this assumption concerns only a finite number of eigenvalues. Besides,  this property can be inherited from a similar assumption  on the eigenvalue $\lambda_{j_0}$ of $H_0$. 

\medskip

We consider 
for all $(t_0,x_0)\in{\mathcal T} \times {\mathcal X}^p$, 
the ${\mathcal C}^\infty$ map $(t,x)\mapsto \varphi(t,x)$ from $\cT\times\cX ^p $ to $\cD\subset \cH$ such that 
$$H(t,x)\varphi(t,x)= \lambda_{j_0}(t,x) \varphi(t,x).$$
We drop the index $j_0$ as before. Provided with these properties, we can develop the same analysis as in the situation addressed above, namely, the existence of a nonlinear eigenvector and an adiabatic approximation for the nonlinear evolution equation associated with $H(t,x)$:
We consider $p$ orthonormal vectors $\{e_1, \dots, e_p\}$  {that we take in $\cD$ for convenience},
and set 
$$\forall \psi\in \cH,\;\; [\psi]= ( |\bra e_1| \psi\ket|^2 ,\cdots, |\bra e_p| \psi\ket|^2 ).$$
Proposition \ref{prop:eigenvector} 
ensures that  for any $t_0\in \cT$, there exists a neighbourhood $\cT_0\subset \cT$ of $t_0$ such that for all $t\in\cT_0$, a solution $\omega(t)\in\cD$ of norm one to the algebraic equation~(\ref{defomega})
exists, see Remark~\ref{rem:unbounded}. 
 Moreover $\cT_0 \ni t\mapsto \omega(t)$ is ${\mathcal C}^\infty$ and can be chosen to satisfy $\bra\omega(t) | \dot \omega(t)\ket\equiv 0$.
Taking initial data $\omega(0)$ in~(\ref{nlad}) gives the 
  equation in which  we are interested, namely
\begin{equation}\label{eq:schroabs}
i\eps\partial_t \psi ^\eps (t) =  \Bigl(H_0+W\left(t,[\psi^\eps(t)]\right)\Bigr) \psi^\eps(t),\;\;\psi^\eps(0)= \omega(0),
\end{equation}
in the weak sense on $\cD$.   By solution in the weak sense on $\cD$ we mean the following, see \cite{RS}, vol. II, p. 284 for the linear case: 
For any $\chi\in \cD$,
\be\label{weakschroabs}
\i\eps\partial_t \bra \chi | \psi ^\eps (t)\ket=\bra (H_0 +W(t,[\psi^\eps(t)])) \chi | \psi ^\eps (t)\ket, \ \  \psi ^\eps (0)= \omega (0).
\ee

\begin{thm}\label{thm:unbounded}
\begin{enumerate}
\item Assume {\bf R$_0$} and  {\bf R$_1$}, then equation~(\ref{eq:schroabs}) admits a unique global solution in the weak sense of norm one.
\item Assume moreover  {\bf R$_2$} with $\delta$ small enough, {\bf S}$_1$ and {\bf S}$_2$ and let $\omega(t)$ be a  ${\mathcal C}^\infty$ solution to (\ref{defomega}) in a neighbourhood $\cT_0$ of $t_0=0$.
Then the solution $v^\eps(t)$ to~(\ref{nlad}) with $v^\eps(0)=\omega(0)$ satisfies for all $t\in\cT_0$
$$
\psi^\eps(t)= {\rm e}^{-{i\over \eps} \int_0^t \lambda(t,[\omega(t)]) dt} \omega(t) + O_t(\eps).
$$
\end{enumerate}
\end{thm}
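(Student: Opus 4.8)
\medskip
\noindent\textbf{Proof proposal.}\quad The two assertions are of different nature, and I would prove them separately, reducing the second as far as possible to Theorem~\ref{thm:main}. For the first one, since $H_0$ is self-adjoint on $\cD$ it generates the unitary group $U_0(t):=\e^{-iH_0t/\eps}$, which leaves $\cD$ invariant; I would pass to the interaction picture by setting $\tilde\psi(t):=U_0(-t)\psi^\eps(t)$. Using $[\psi^\eps(t)]=[U_0(t)\tilde\psi(t)]$ (the functionals $\langle e_j|U_0(t)\,\cdot\,\rangle$ are bounded), the weak equation~\eqref{weakschroabs} becomes, after a routine computation, the genuine bounded nonlinear ODE on $\cH$
\[
 i\eps\,\partial_t\tilde\psi(t)=\tilde W\bigl(t,\tilde\psi(t)\bigr)\,\tilde\psi(t),\qquad
 \tilde W(t,u):=U_0(-t)\,W\bigl(t,[U_0(t)u]\bigr)\,U_0(t),\qquad \tilde\psi(0)=\omega(0),
\]
where $\tilde W(t,u)$ is bounded self-adjoint and depends on $u$ through finitely many bounded functionals only. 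By \textbf{R$_1$}--\textbf{R$_2$} the right-hand side is continuous in $t$ and locally Lipschitz in $u$, uniformly on bounded sets, so Cauchy--Lipschitz yields a unique maximal $C^1$ solution; self-adjointness of $\tilde W$ gives $\tfrac{d}{dt}\|\tilde\psi(t)\|^2=\tfrac{2}{\eps}\,\mathrm{Im}\langle\tilde\psi(t)|\tilde W(t,\tilde\psi(t))\tilde\psi(t)\rangle=0$, whence $\|\tilde\psi(t)\|\equiv1$ and the solution is global. Then $\psi^\eps(t):=U_0(t)\tilde\psi(t)$ has norm one, and differentiating $\langle\chi|\psi^\eps(t)\rangle=\langle U_0(-t)\chi|\tilde\psi(t)\rangle$ for $\chi\in\cD$ shows it solves~\eqref{weakschroabs}; uniqueness for $\tilde\psi$ gives uniqueness of the weak solution.

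For the second assertion I would first use the gauge reduction (the unbounded analogue of Lemma~\ref{lem:elem}, applied with $\chi(t,x)=-\lambda(t,x)$, a bounded real function of $(t,x)$, so that the reduced Hamiltonian is again $H_0$ plus a bounded real self-adjoint term) to assume $H(t,[\omega(t)])\omega(t)=0$; consequently the Feynman--Hellmann identity gives $\langle\omega(t)|\partial_{x_j}W(t,[\omega(t)])\omega(t)\rangle\equiv0$ for $j=1,\dots,p$. Writing $\psi^\eps(t)=\omega(t)+\Delta(t)$ and expanding $[\omega+\Delta]_j=|\omega_j|^2+2\,\mathrm{Re}(\bar\omega_j\Delta_j)+|\Delta_j|^2$, a first-order Taylor expansion of $W$ (legitimate by \textbf{R$_1$}--\textbf{R$_2$}) together with $\overline H=H$ leads to the system~\eqref{eq:system} for $(\Delta,\overline\Delta)\in\cH\times\cH$, with $F(t)=F_0(t)+G(t)$, $F_0$ given by~\eqref{defF0}, $G(t)$ the bounded finite-rank operator built from the $p$ vectors $\partial_{x_j}W(t,[\omega(t)])\omega(t)$ and the functionals $\langle e_j|\,\cdot\,\rangle$, and $r^\eps(t)=O(\|\Delta(t)\|^2)$. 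Here one needs $\omega(t)\in\cD$ (so that $G(t)$ is indeed bounded) and $\dot\omega(t)\in\cD$, uniformly on $\cT_0$; both follow from \textbf{S$_1$} and the fact, recorded in Remark~\ref{rem:unbounded}, that $\omega(t)$ is close to the linear eigenvector $\varphi(t,[\omega(t)])\in\cD$.

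Next comes the spectral analysis of $F(t)$. Because $H$ is real, $\sigma(F_0(t))=\sigma(H(t,[\omega]))\cup\bigl(-\sigma(H(t,[\omega]))\bigr)$; after the reduction the eigenvalue $0$ of $H(t,[\omega])$ is simple, so $0\in\sigma(F_0(t))$ has geometric multiplicity two, with eigenvectors essentially $(\omega,0)$ and $(0,\overline\omega)$, and \textbf{S$_2$} guarantees that the two copies do not collide anywhere else, which together with the growing gaps \textbf{S$_1$} keeps $0$ isolated in $\sigma(F_0(t))$ with a gap bounded below uniformly on $\cT_0$. For $\delta$ small the bounded finite-rank perturbation $G(t)$ then leaves $F(t)$ semisimple with real eigenvalues of constant multiplicity on $\cT_0$ (the $p=1$ content of Lemma~\ref{lem:specft2}, adapted to the present situation), which is exactly the hypothesis of Theorem~\ref{thm:main}. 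From there the Kato-type scheme of Section~\ref{sec:proofmain} applies verbatim: the propagator of $i\eps\partial_t=F(t)$ is uniformly bounded in $\eps$ and carries an adiabatic splitting along the spectral projector onto $\ker F(t)$; in Duhamel form, the linear inhomogeneity $-i\eps(\dot\omega,\dot{\overline\omega})$ contributes $O(\eps)$ after one integration by parts using the gap — its component along $\ker F(t)$ being controlled by the choice~\eqref{choice_omega} and by the vanishing of the nonlinear eigenvalue — while $r^\eps=O(\|\Delta\|^2)$ is absorbed by a continuity/bootstrap argument on $\{\|\Delta\|\le K\eps\}$ started from $\Delta(0)=0$ and the crude bound $\|\Delta(t)\|\le c_0t$; this yields~\eqref{balancebound0}, and undoing the gauge gives the stated approximation.

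\emph{Main obstacle.} The delicate point is that the self-interaction remainder $r^\eps=O(\|\Delta\|^2)$ enters the evolution of $\Delta$ divided by $\eps$, so that a direct Gr\"onwall estimate diverges and the result hinges on the bootstrap on $\{\|\Delta\|\le K\eps\}$. This bootstrap rests on two facts that are genuinely delicate in the unbounded setting: first, that the component of $\Delta$ along $\ker F(t)$ — morally the ``phase'' or gauge direction — does not drift at the $O(\delta)$ rate a naive bound would suggest, which uses both the choice~\eqref{choice_omega} and the Feynman--Hellmann cancellation made available by the reduction to a vanishing nonlinear eigenvalue; and second, the uniform-in-$t$ spectral stability of $F(t)=F_0(t)+G(t)$, i.e.\ that the bounded finite-rank $G(t)$, precisely the term capable of creating Jordan blocks or complex eigenvalues at $0$, remains a small enough perturbation of an operator with infinitely many eigenvalue pairs $\pm(\lambda_j(t,[\omega])-\lambda_{j_0}(t,[\omega]))$ accumulating only at $+\infty$.
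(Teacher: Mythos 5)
Your Part~1 is essentially the paper's argument (interaction picture, fixed point/contraction, norm conservation, then verification of the weak formulation); the paper works with the mild/integral formulation and a contraction on intervals of length $\tau\eps$, but that is a cosmetic difference from applying Cauchy--Lipschitz in the interaction picture, and both routes give the same result.

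For Part~2 your high-level plan is the right one and matches the paper's, but the phrase ``the Kato-type scheme of Section~\ref{sec:proofmain} applies verbatim'' hides precisely the points that make the unbounded case a separate theorem, and your ``main obstacle'' paragraph points at the wrong difficulty. Two essential unbounded-specific steps are missing. First, you take for granted that $i\eps\partial_t T^\eps(t,s)=F(t)T^\eps(t,s)$ has a two-parameter propagator strongly $C^1$ on $\cD\times\cD$ together with its adjoint; since $F(t)=F_0+B(t)+G(t)$ is an unbounded \emph{non-self-adjoint} time-dependent generator, this is not automatic, and the paper devotes Lemma~\ref{domT} to it (reduction to a contraction semigroup plus a Dyson-series argument that $T^\eps(t,s)^*$ is also strongly differentiable on $\widetilde\cD$). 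Second, you never invoke the quantitative part of {\bf S}$_1$, namely $\alpha>1/2$, yet this is exactly what replaces the compactness of the spectral sum when $\#\sigma(F_0(t))=\infty$: it gives the Riesz-basis structure of the eigenprojectors ${\mathbb P}_j(t)$ (Kato's slightly non-self-adjoint theory, Thm.~4.16 and Remark~4.17 of~\cite{K2}), the uniform bounds~(\ref{toto3})--(\ref{normriesz}), the well-definedness of $K(t)=i\sum_j\dot{\mathbb P}_j(t){\mathbb P}_j(t)$ as an operator series, and the summability $\sum_j\langle j\rangle^{-2\alpha}<\infty$ that makes the integration-by-parts estimate $\|{\mathbb P}_j(s)\Omega(t,s)-{\mathbb P}_j(s)\|\le c\,\eps\,\langle j\rangle^{-\alpha}\,|||\Omega|||$ summable in $j$. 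Saying the eigenvalues ``accumulate only at $+\infty$'' is not enough; without the $j^\alpha$, $\alpha>1/2$, growth of the gaps the adiabatic estimate fails to close. Your remarks on the gauge reduction, the Feynman--Hellmann orthogonality $P(t,[\omega])v_j\equiv v_j$, the uniform gap around $0$ guaranteed by {\bf S}$_2$, and the bootstrap on $\{\|\Delta\|\le K\eps\}$ are all correct and coincide with the paper, but the proposal as written does not constitute a proof of the unbounded case.
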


 {
The proof of Theorem~\ref{thm:unbounded} contains two things: the existence of global solutions in the weak sense, and an adiabatic approximation. The proof of the latter follows the same strategy as the one developed in Theorems~\ref{thm:main0} and~\ref{thm:main}. However, additional difficulties  come from the fact that  the spectrum of $F_0(t)$ (as defined in~\eqref{defF0}) consists now in an infinite sequence of eigenvalues, while one has to work in the weak topology and to be careful with domain issues when constructing Kato's operators. These points are documented in Section~\ref{sec:gen} below.} 

 {Before closing this section and discussing properties of these adiabatic solutions, we give a concrete example satisfying the assumptions of Theorem~\ref{thm:unbounded}.}

\begin{ex}
Consider ${\mathcal H}= L^2(\R_y)$ and the operator
$$H_0=-{1\over 2} \Delta_y +V_0(y)$$
with domain ${\mathcal D}\subset L^2(\R_y)$, 
 {where $V_0$ is a polynomial in $|y|$ with highest degree $\beta >6$.
Then, as revealed by the Bohr-Sommerfeld formula~\cite{V}, $H_0$ satisfies the assumptions~{\bf R}$_0$  and~{\bf S}$_1$ above.}
Consider  $x$-dependent self-adjoint perturbations of this operator ($x\in{\mathcal X}^p$)
$$H(t,x)= -{1\over 2} \Delta_y +V_0(y) + W(t,y,x)$$ 
where $W$ is such that 
the map $(t, y, x)\mapsto W(t,y,x)$ is  a bounded function from ${\mathcal C}^\infty(\cT\times \R_y \times \cX^p, \R)$ 
 and there exists $\delta >0$ such that $$\forall(t, y,x)\in  \cT \times \R_y\times \cX^p,\;\;\forall j\in \{1,\dots, p\},\;\;|W(t,y,x)|+  |\partial_{x_j} W(t,y,x)|\leq \delta.$$
 Then, $H(t,x)$ satisfies assumptions {\bf R}$_1$ and {\bf R}$_2$ above. 

\medskip 

 {
To be quite concrete, we can take as orthonormal basis $\{e_k\}_{k\in\N^*}$ of $L^2(\R)$  the set of eigenstates of the harmonic oscillator, $V_0(y)=y^8$, $p=2$ and $W(t,y,x)=-(y-a(t)x_1)^2b(t)e^{x_2}$, where $a, b\in C^\infty(\R,\R)$ with $b(t)>0$. Depending on the sign of $a(t)$, for positive values of $x_1, x_2$, the potential displays one or two wells, of various depths. The nonlinearity manifests itself by emphasising these features, depending on the amplitude of the coefficients of the solution on the first two basis vectors $e_1, e_2$, given by a Gaussian times a Hermite polynomial in $y$.
}
\end{ex}

\medskip

 {The example above is admittedly not motivated by applications to Physics, but is intended to demonstrate the adaptability of our functional framework to the unbounded setup.}

\subsection{ Energy content of the solutions}

We close this introduction by discussing briefly an important feature of the solutions provided by Theorems \ref{thm:main0}
 and \ref{thm:main}. A physically relevant quantity for the nonlinear equation (\ref{nlad}) we consider is the instantaneous energy content of a solution $v^\eps(t)$, defined for all $t\in \cT_0$ by
$$
 E_{v^\eps}(t)=\bra v^\eps(t) | H(t,[v^\eps(t)])v^\eps(t)\ket.
$$
For bounded operators $\cT\times \cX^p\ni (t,x)\mapsto H(t,x)\in \cL(\cH)$, and $\eps-$independent initial conditions $v^\eps(0)=v(0)$ the energy content satisfies the uniform bound 
$$|E_{v^\eps}(t)| \leq \sup_{(t,x)\cT\times \cX^p}\|H(t,x)\|\|v(0)\|^2.$$ 
For a solution of the form $v^\eps(t)=\e^{-i\int_0^t \lambda(s,[\omega(s)]) ds/\eps}\omega(t)+ O_t(\eps)$, 
the energy content simply coincides, to leading order, with the energy content of the corresponding instantaneous nonlinear eigenvalue 
$$
E_{v^\eps}(t)=\bra \omega(t) | H(t,[\omega(t)]) \omega(t) \ket + O_t(\eps)=\lambda(t,[\omega(t)])+ O_t(\eps).
$$
In general, the behaviour in time of the energy content  of a solution does not necessarily admit such a regular behaviour in the limit $\eps \ra 0$, 
 {which makes this property a  specific feature of the adiabatic solutions.}

\medskip 

Let us illustrate this point on the following simple example. Let $\R\ni t\mapsto \gamma(t)\geq \gamma_0>0$ and consider
$$
H(t,x)=\begin{pmatrix} 0 & \gamma(t)x \\ \gamma(t)x & 0
\end{pmatrix}
$$
on $\cH=\mathbb C^2$. The evolution equation (\ref{nlad}) reads
\be\label{exple}
i\eps\partial_t \begin{pmatrix} v_1 \cr v_2 \end{pmatrix}=H(t,|v_1|^2)\begin{pmatrix} v_1 \cr v_2 \end{pmatrix}=\gamma(t)|v_1|^2\begin{pmatrix} v_2 \cr v_1 \end{pmatrix},  
\ee
with initial conditions $\begin{pmatrix} v_1(0) \cr v_2(0) \end{pmatrix}$, and the energy content of the solutions reads
$$
E_{v}(t)=\gamma(t)|v_1|^2 2\Re (v_1\overline{v_2})(t).
$$
 The corresponding real normalised nonlinear eigenvectors $\omega_\pm(t)$ are time-independent,
$$
\omega_\pm(t)=\frac{1}{\sqrt{2}} \begin{pmatrix} 1 \cr \pm 1 \end{pmatrix}, 
$$
and associated to the eigenvalues $\lambda_\pm(t,[\omega_\pm])=\pm \gamma(t)/2$. Hence, the approximate solutions provided by  Theorem~\ref{thm:main0}
read
\be\label{appex}
v_\pm(t)=e^{\mp{ i\over 2\eps} \int_0^t\gamma(u)du}\omega_\pm(t),
\ee
which turn out to be exact solutions for all $t\in \R$, since $\omega_\pm$ are time-independent. Their energy contents are thus given by
 {
$$
E_{v_\pm}(t)= E_{\omega_\pm}(t)= \pm\gamma(t)/2, 
$$}
 which is $\eps$-independent.
However, for general solutions $v^\eps(t)$ the situation is different, as stated in the next Lemma which is proved in Appendix~B.

\begin{lem}\label{lem:energycontent}
Let $v^\eps(t)$ be a solution of equation~(\ref{exple}) with real-valued initial data such that
$v_1(0)>0$, $v_2(0)\neq 0$. Then the energy content reads
$$E_{v^\eps}(t)=2 \,\gamma(t)v_1(0)^3v_2(0)\left[ \cos^2\left({\aleph(t)\over \eps}\right) +\Big(\frac{v_1(0)}{v_2(0)}\Big)^2\sin^2\left({\aleph(t)\over \eps}\right)\right]^{-1}$$
with $\displaystyle{ \aleph (t)=-v_1(0)v_2(0)\int_0^t\gamma(u)du.}$ 
Hence, $E_{v^\eps}(t)/\gamma(t)$ is actually a function of $\int_0^t\gamma(u)du$, which oscillates between the extremal values $2v_1(0)^3v_2(0)$ and $2v_1(0)v_2(0)^3$ with a period of order $\eps$, unless $v_1(0)/v_2(0)=\pm 1$ in which case it is a constant.
\end{lem}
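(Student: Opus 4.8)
The plan is to integrate~(\ref{exple}) essentially in closed form. Write $v^\eps=(v_1^\eps,v_2^\eps)$ and note that the right-hand side of~(\ref{exple}) is the scalar $\gamma(t)|v_1^\eps(t)|^2$ multiplying the vector $(v_2^\eps,v_1^\eps)$; hence the combinations $p^\eps=v_1^\eps+v_2^\eps$ and $q^\eps=v_1^\eps-v_2^\eps$ decouple,
$$
i\eps\,\dot p^\eps=\gamma(t)|v_1^\eps|^2\,p^\eps,\qquad i\eps\,\dot q^\eps=-\gamma(t)|v_1^\eps|^2\,q^\eps .
$$
Introducing the real phase $\Phi^\eps(t)=\frac1\eps\int_0^t\gamma(u)|v_1^\eps(u)|^2\,du$, which is well defined, global and smooth by the global existence in Lemma~\ref{lem:elem}, we obtain $p^\eps(t)=p^\eps(0)\,\e^{-i\Phi^\eps(t)}$ and $q^\eps(t)=q^\eps(0)\,\e^{i\Phi^\eps(t)}$. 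Undoing the change of variables, and using that the initial data $v_1(0),v_2(0)$ are real, yields the closed forms
$$
v_1^\eps(t)=v_1(0)\cos\Phi^\eps(t)-i\,v_2(0)\sin\Phi^\eps(t),\qquad v_2^\eps(t)=v_2(0)\cos\Phi^\eps(t)-i\,v_1(0)\sin\Phi^\eps(t).
$$
Two facts follow at once: $|v_1^\eps(t)|^2=v_1(0)^2\cos^2\Phi^\eps(t)+v_2(0)^2\sin^2\Phi^\eps(t)$, a convex combination of $v_1(0)^2$ and $v_2(0)^2$, hence confined to $[\min(v_1(0)^2,v_2(0)^2),\max(v_1(0)^2,v_2(0)^2)]$; and $\Re(\overline{v_1^\eps}v_2^\eps)(t)\equiv v_1(0)v_2(0)$, since the $\Phi^\eps$-dependent part of $\overline{v_1^\eps(t)}v_2^\eps(t)$ is purely imaginary.

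Next I would make $\Phi^\eps$ explicit. Inserting the expression for $|v_1^\eps|^2$ into the definition of $\Phi^\eps$ shows that $\Phi^\eps$ solves the scalar, separable equation
$$
\dot\Phi^\eps(t)=\frac{\gamma(t)}{\eps}\Bigl(v_1(0)^2\cos^2\Phi^\eps(t)+v_2(0)^2\sin^2\Phi^\eps(t)\Bigr),\qquad \Phi^\eps(0)=0,
$$
whose right-hand side is bounded below by $\frac{\gamma_0}{\eps}\min(v_1(0)^2,v_2(0)^2)>0$, so $\Phi^\eps$ is a smooth strictly increasing bijection onto its range. Separating variables and evaluating $\int_0^{\Phi}\frac{d\phi}{v_1(0)^2\cos^2\phi+v_2(0)^2\sin^2\phi}$ through the substitution $\phi\mapsto\tan\phi$ gives, after squaring so as to dispense with branch bookkeeping,
$$
\tan^2\Phi^\eps(t)=\Bigl(\frac{v_1(0)}{v_2(0)}\Bigr)^2\tan^2\!\Bigl(\frac{\aleph(t)}{\eps}\Bigr),
$$
with $\aleph(t)=-v_1(0)v_2(0)\int_0^t\gamma(u)\,du$ as in the statement.

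It remains to substitute back. Writing $|v_1^\eps(t)|^2=\dfrac{v_1(0)^2+v_2(0)^2\tan^2\Phi^\eps(t)}{1+\tan^2\Phi^\eps(t)}$, inserting the relation above and multiplying numerator and denominator by $\cos^2(\aleph(t)/\eps)$, one gets
$$
|v_1^\eps(t)|^2=v_1(0)^2\Bigl[\cos^2\!\Bigl(\tfrac{\aleph(t)}{\eps}\Bigr)+\Bigl(\tfrac{v_1(0)}{v_2(0)}\Bigr)^2\sin^2\!\Bigl(\tfrac{\aleph(t)}{\eps}\Bigr)\Bigr]^{-1},
$$
whence, using $E_{v^\eps}(t)=2\gamma(t)|v_1^\eps(t)|^2\,\Re(\overline{v_1^\eps}v_2^\eps)(t)=2\gamma(t)\,v_1(0)v_2(0)\,|v_1^\eps(t)|^2$, the announced formula. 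Since $\aleph(t)$ depends on $t$ only through $\int_0^t\gamma$, the same is true of $E_{v^\eps}(t)/\gamma(t)$. The oscillation statement then follows by inspecting $D(\tau)=\cos^2\tau+(v_1(0)/v_2(0))^2\sin^2\tau$ at $\tau=\aleph(t)/\eps$: $D$ has period $\pi$, ranges over the closed interval with endpoints $1$ and $(v_1(0)/v_2(0))^2$, and is identically $1$ exactly when $v_1(0)/v_2(0)=\pm1$; since $\tau=-\frac{v_1(0)v_2(0)}{\eps}\int_0^t\gamma$, the period of $E_{v^\eps}(t)/\gamma(t)$ in the variable $\int_0^t\gamma$ is $\pi\eps/|v_1(0)v_2(0)|$, of order $\eps$, and the extrema $2v_1(0)^3v_2(0)$ and $2v_1(0)v_2(0)^3$ are attained at $\tau\in\pi\ZZ$ and $\tau\in\frac\pi2+\pi\ZZ$ respectively.

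The argument presents no genuine obstacle; the only points deserving a word of care are that $\Phi^\eps$ is indeed a bona fide global smooth increasing function (guaranteed by the global existence of the solution in Lemma~\ref{lem:elem} together with the explicit formulas above), and the treatment of the $\arctan$/$\tan$ branch as $\Phi^\eps$ passes through $\frac\pi2+\pi\ZZ$; the latter is harmless because $\Phi^\eps$ and $\aleph(\cdot)/\eps$ are continuous and strictly monotone and cross the half-integer multiples of $\pi$ in step, and because only $\tan^2\Phi^\eps$ ultimately enters the computation.
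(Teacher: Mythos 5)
Your proof is correct, and it takes a genuinely different route from the one in Appendix~B of the paper. The paper's proof reparametrises time by $s=\int_0^t\gamma$, passes to the four real coordinates $w_1=x+iy$, $w_2=z+it$, identifies three quadratic invariants $x^2+t^2$, $y^2+z^2$, $xz+yt$, and solves the resulting system by quadratures for the given initial data. You instead keep the complex structure, diagonalise the $2\times2$ off-diagonal matrix by passing to the $\sigma_x$-eigenbasis $p=v_1+v_2$, $q=v_1-v_2$, which reduces the whole evolution to a single cumulative phase $\Phi^\eps(t)=\eps^{-1}\int_0^t\gamma|v_1^\eps|^2$, and then solve the resulting autonomous first-order scalar ODE for $\Phi^\eps$ by separation of variables. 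Both routes are elementary; yours is shorter and makes the two key identities $\Re(\overline{v_1^\eps}v_2^\eps)\equiv v_1(0)v_2(0)$ and $|v_1^\eps|^2=v_1(0)^2\cos^2\Phi^\eps+v_2(0)^2\sin^2\Phi^\eps$ transparent, while the paper's approach exposes the full set of conserved quantities (which is useful if one wants to treat non-real initial data). One small remark on the branch issue you flag at the end: a clean way to settle it, without the squaring trick, is to observe that the relation $\tan\Phi^\eps=-\frac{v_1(0)}{v_2(0)}\tan(\aleph/\eps)$ holds globally because both sides satisfy the same first-order ODE $\dot g=\frac{\gamma}{\eps}(v_1(0)^2+v_2(0)^2g^2)$ with $g(0)=0$; the two functions therefore cross $\tfrac{\pi}{2}+\pi\ZZ$ simultaneously and the identity propagates through the poles. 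Squaring then recovers your displayed relation, and the rest of the argument is exactly as you wrote it.
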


 {
By contrast, the linear quantum adiabatic theorem implies that the energy content of {\it any} solution is given by an $\eps$-independent weighted sum of instantaneous eigenvalues of the Hamiltonian, to leading order.  More precisely, assume $t\mapsto H(t)$ is {\it independent of $x$} and satisfies the hypotheses of Theorem \ref{thm:main0}. Let $\{\ffi_{j}(t)\}_{j=1}^N$ be an orthonormal basis of instantaneous eigenvectors of $H(t)$ with phases normalised by $\bra \ffi_j(t) | \dot\ffi_j(t)\ket\equiv 0$.  Then, the energy content of $v^\eps(t)$, solution to (\ref{nlad}), which is {\it  linear} in this case, with arbitrary initial condition $v_0$ reads for any $t\in [0,1]$, 
\[E_{v^\eps}(t)=\sum_{j=1}^N |\alpha_j|^2 \lambda_j(t)+\ode(\eps), \ \ \mbox{where} \ \  v_0=\sum_{j=1}^N\alpha_j\ffi_j(0).\]
Indeed, the linear quantum adiabatic theorem \cite{K1} implies  $v^\eps(t)=\sum_{j=1}^N\alpha_je^{-i\int_0^t\lambda_j(s)ds/\eps}\ffi_j(t)+\ode(\eps)$, uniformly in $t\in [0,1]$, hence a direct computation of the energy content yields the above expression, thanks to $\bra\ffi_j(t)|\ffi_k(t)\ket=\delta_{j,k}$.
}

 \subsection{Organisation of the paper} 
 
  {
 We begin by proving the existence of the instantaneous nonlinear eigenvectors in Section~\ref{sec:eigenvector}. We also  discuss the limitation that may occur to their existence. This crucial part of our result is independent of the other sections and can be skipped at first reading.  Then we focus in Sections~\ref{sec:proofmain} and~\ref{sec:gen} on the proofs of the nonlinear adiabatic Theorems to which this article is devoted to. Sections~\ref{sec:proofmain} deals with the case of bounded Hamiltonians, with the proofs of  Theorems~\ref{thm:main0} and~\ref{thm:main}, while we explain in Section~\ref{sec:gen} how to  adapt the arguments to the unbounded setting of Theorem~\ref{thm:unbounded}. Finally, two Appendices are devoted to the discussion of examples: the first one shows a situation where the spectrum of the operator $F(t)$ is not necessarily real-valued  even though $H(t)$ is real, as emphasized in Remark~\ref{rem:appendixA} below; the second one focus on the analysis of the Example~\ref{lem:energycontent}. 
 }


\section{ { Instantaneous nonlinear eigenvectors} }\label{sec:eigenvector}

We focus in this section on the existence of the generalized nonlinear eigenvector $\omega(t)$ defined in Proposition~\ref{prop:eigenvector},  { and that we call instantaneous nonlinear eigenvectors}. We first recall well-known facts in the linear setting, mainly to introduce notations. Then, we explain why a similar result remains true locally in the nonlinear regime we consider and why the obtained eigenvectors may not  exist globally. 

\subsection{ Existence of smooth eigenvectors in the linear adiabatic setting} 
The question of local (and global) existence of  ${\mathcal C}^\infty$ eigenvectors is simple in the linear context. 
 Indeed, with the notations of  Assumption~{\bf H}$_2$ and 
 using Riesz formula on  $C_j(g/2)$, a circle of radius~$g/2$ and center $\lambda_j(t,x)$, 
$$
P_j(t,x)=-\frac{1}{2\pi i}\int_{C_j(g/2)}(H(t,x)-z)^{-1}dz,
$$
one gets that the projectors $P_j(t,x)$'s are  ${\mathcal C}^\infty$ as $H(t,x)$ is. Moreover, 
$\|\partial_{x_j}P_j(t,x)\|\leq 2 \delta/g$.
The finitely degenerate eigenvalues $\lambda_j(t,x)=\tr (P_j(t,x)H(t,x)) $ are thus ${\mathcal C}^\infty$, 
and the same is true if ${\rm Rank}\, P_j(t,x) =\infty$.

\medskip

Considering  $j=j_0$, 
for any $(t_0, x_0)\in {\cT}\times {\cX}^p$,  there exists an open neighbourhood of $(t_0, x_0)$ in which a ${\mathcal C}^\infty$ normalised eigenvector $\ffi_{j_0}(t,x)\in \cH$ exists such that
$$
P_{j_0}(t,x)=|\ffi_{j_0}(t,x)\ket\bra \ffi_{j_0}(t,x)|.
$$
 {Here $|\varphi\rangle\langle \psi | : \mathcal H \rightarrow \mathcal H $ maps $\eta$ to $\varphi \langle \psi | \eta\rangle$.}
More specifically, given $\ffi_{j_0}(t_0,x_0)$ an eigenvector of $H(t_0, x_0)$, the vector 
\be\label{defffiloc}
\ffi_{j_0}(t,x):=\frac{P_{j_0}(t,x)\ffi_{j_0}(t_0,x_0)}{\bra \ffi_{j_0}(t_0,x_0)| P_{j_0}(t,x) \ffi_{j_0}(t_0,x_0)\ket}
\ee
satisfies these conditions for all $(x,t)$ such that $P_{j_0}(t,x) \ffi_{j_0}(t_0,x_0)\neq 0$.

\medskip 

Actually, there exists  {an extension of this local map to} a global ${\mathcal C}^\infty$ map $\cT\times \cX^p\ni (t,x)\mapsto \ffi(t,x)$, which can be viewed as follows. Using the shorthand $p=(t,x)$, 
set $E=\cup_{p\in \cT\times \cX^p}(p,\ffi(p))$, and $\pi :  E\ni (p,\ffi(p)) \mapsto p \in \cT\times \cX^p$, so that $\pi: E\ra \cT\times \cX^p$ defines a rank one vector bundle over the base $ \cT\times \cX^p$. The base being contractible, it is known that the vector bundle is trivial, which is equivalent to the existence of a global~${\mathcal C}^\infty$ frame on the fibres of $E$, see {\it e.g.} \cite{LeP, Sp}. An alternative approach is by explicit construction, making use of the parallel transport operator defined by~(\ref{paraltrans}) below. Passing to spherical coordinates $(t, x)\mapsto (r, \theta)\in \R^+\times {\mathbb S}^{p}$ and integrating the parallel transport operator along $r$, keeping~$\theta$ as parameters, we get a ${\mathcal C}^\infty$ unit eigenvector  for each $(t,x)\in \cT\times \cX^p$, by the smoothness of the eigenprojector. This property holds for $\dim \cH =\infty$.

\subsection{Existence of nonlinear eigenvector}

We prove here Proposition~\ref{prop:eigenvector}.

\medskip 

\proof 
For $t_0\in \cT$ fixed, dropped from the notation, {\bf H$_3$} yields,
$$
|\ffi([\omega])\ket\bra \ffi( [\omega])|\omega\ket =\omega.
$$
This requires $\omega$ to be parallel to $\ffi([\omega])$ where the latter is normalised.
We use Schauder's fixed point Theorem in a Banach space to actually prove that, locally, there exists $\omega$ such that
$\omega=\ffi([\omega])$, and thus $\|\omega\|=1$. 
Set  $B_1(\cH)=\{v\in \cH \ | \ \|v\|\leq 1\}$ and  {$S: B_1(\cH)\mapsto B_1(\cH)$} by  {$S(v)=\ffi([v])$}. 
This map is well defined, continuous and $B_1(\cH)$ is closed, convex and nonempty. Thus  {$S$} will have a fixed point if  {$\overline{S(B_1(\cH))}$} is compact. Let $K_\ffi=\{\ffi([x]) \, | \,x\in {[0,1]}^p\}$. By continuity of $\ffi$ in the variable~$x$, and compactness of ${[0,1]}^p$, $K_\ffi$ is compact.
Thus the closed subset  {$\overline{S(B_1(\cH))}$} of $K_\ffi$ is compact and Schauder Theorem   {(see~\cite{Evans} or Theorem 2.9 in \cite{LeDretbook}})
implies the existence of a fixed point for  {$S$}, for each given value of $t_0$. Since $\|\ffi([v])\|\equiv 1$, the normalization of the fixed point $\omega(t_0)$ holds. 

\medskip

In order to prove the smoothness of the map $\cT_0 \ni t\mapsto \omega(t)$, we use the implicit function theorem on the ${\mathcal C}^\infty$ map $J: \cT\times \cH\times \cH\ra \cH\times \cH$ defined by
$$
 J(t,v, \bar v)=\begin{pmatrix} J_1(t,v, \bar v) \cr J_2(t,v, \bar v) \end{pmatrix} 
 =\begin{pmatrix}
 v-\ffi(t,[v]) \cr \bar v -\bar \ffi(t,[v])
 \end{pmatrix}.
 $$
The zeros of $J$ define $\omega(t)$, in a neighbourhood of  $(t_0, \omega(t_0))$. Note that by a smooth change of phase we can consider locally the continuous vector $\ffi(t,x)$ defined by (\ref{defffiloc}).
For $1\leq j\leq p$, we compute, with $\{e_j\}_{j\in\mathbb N}$ the chosen orthonormal basis of $\cH$,
\begin{align}
&\partial_{v_j}J_1(t,v, \bar v)=e_j-\partial_{x_j}\ffi(t,[v])\bar v_j, \ \ \partial_{\bar v_j}J_1(t,v, \bar v)=-\partial_{x_j}\ffi(t,[v]) v_j\nonumber\\
&\partial_{v_j}J_2(t,v, \bar v)=-\partial_{x_j}\bar \ffi(t,[v])\bar v_j, \hspace{.8cm} \partial_{\bar v_j}J_2(t,v, \bar v)=e_j-\partial_{x_j}\bar \ffi(t,[v]) v_j.
\end{align}
Therefore, using the notation $D_{v,\bar v}J(t,v,\bar v)\in \cL(\cH\times \cH)$ for the 
derivative with respect to the variables $(v, \bar v)\in \cH\times \cH$, we get 
\be\label{derivatives}
D_{v,\bar v}J(t,v,\bar v)\begin{pmatrix} h \cr \bar h \end{pmatrix} = \begin{pmatrix} h \cr \bar h \end{pmatrix} - 
\sum_{j=1}^p \begin{pmatrix}  \partial_{x_j}\ffi(t,[v])\bra v_j e_j | h\ket +  \partial_{x_j} \ffi(t,[v])\bra \bar v_j e_j | \bar h\ket
\cr 
\partial_{x_j}\bar \ffi(t,[v])\bra v_j e_j | h\ket +  \partial_{x_j} \bar \ffi(t,[v])\bra \bar v_j e_j | \bar h\ket
 \end{pmatrix}.
\ee
We recall the notation in the scalar case 
\begin{align}
\nonumber
&\partial_z=\frac12(\partial_x-i\partial_y), \ \ \partial_{\bar z}=\frac12(\partial_x+i\partial_y)
\end{align}
such that if 
$f(x,y)\equiv g(z,\bar z)$ with $z=x+iy\in \C$ and $t=h+ik\in \C$, then
$$
Df(x,y)(h,k)=\partial_xf(x,y) h+\partial_yf(x,y) k\equiv D_{z,\bar z}g(z,\bar z)(h+ik,h-ik)=\partial_z g(z, \bar z) t+ \partial_{\bar z} g(z, \bar z) \bar t.
$$
With these notations in mind, we obtain equivalently
\begin{equation*}
D_{v,\bar v}J(t,v,\bar v) = {\rm Id} -
\sum_{j=1}^p \left| \begin{pmatrix}  \partial_{x_j}\ffi(t,[v])
\cr 
\partial_{x_j}\bar \ffi(t,[v])
 \end{pmatrix} \right\rangle 
 \left\langle \begin{pmatrix}   v_j e_j 
\cr 
 \bar v_j e_j 
 \end{pmatrix} \right|.
\end{equation*}
Therefore, for $v\in B_1(\cH)$, it is enough to show that $\|\partial_{x_j}\ffi(t,x)\|<1/4$, say, to satisfy the assumptions of the implicit function theorem. We compute
\begin{align}
\partial_{x_j}\ffi(t, x)=\frac{\partial_{x_j}P(t,x)\ffi(t_0,x_0)- P(t,x)\ffi(t_0,x_0) \bra \ffi(t_0,x_0)| \partial_{x_j}P(t,x) \ffi(t_0,x_0)\ket}{\bra \ffi(t_0,x_0)| P(t,x) \ffi(t_0,x_0)\ket^2},
\end{align} 
the norm of which is bounded above by $8\delta/g$, in a neighbourhood of $(t_0, x_0)$ characterised by 
$$\|P(t,x)\ffi(t_0,x_0)\|\geq 2^{-1/4}.$$
Hence, {\bf H$_1$} with $\delta $ small enough yields the existence of an open neighbourhood $\cT_0\ni t_0$ and of a ${\mathcal C}^\infty$ map $t\mapsto \tilde \omega(t)$ with $\tilde\omega(t_0)=\omega(t_0)$ that is solution to~(\ref{defomega}) for all $ t\in \cT_0$. 

\medskip 

To conclude, the proof, we observe that the argument above ensures $\|\tilde\omega(t)\|\equiv 1$, so that the phase adjustment
$\omega(t)= \tilde \omega(t)e^{-\int_{t_0}^t\bra \tilde\omega(s)| \dot {\tilde\omega}(s)\ket ds}$
implies that $\omega(t)$ satisfies $\bra \omega(t)| \dot {\omega}(t)\ket\equiv 0.$
\qed

\begin{rem}\label{rem:unbounded}
Note that in the proof above, we have not used the assumption $H(t,x)\in{\mathcal L}({\mathcal H})$ so that the result of Proposition~\ref{prop:eigenvector} extends to unbounded families of operators $H(t,x)$. 
\end{rem}

\subsection{Failure of global nonlinear eigenvectors}

We illustrate with  an  example the fact that the eigenvector constructed in Proposition~\ref{prop:eigenvector} may only exists locally.  For this, we consider the matrix-valued case where $H(t,x)$ is the real, symmetric, traceless two by two matrix 
$$ H(t,x)=\begin{pmatrix}
\cos (t\theta(x) ) & \sin (t\theta(x))\\
\sin (t\theta(x)) & -\cos (t \theta(x))
\end{pmatrix},$$
where $\R\ni x\mapsto \theta(x)$ is ${\mathcal C}^\infty$ and will be  chosen later. 
The eigenvalues of $H(t,x)$ are~$+1$ and~$-1$ with associated eigenvectors 
$$V_+(t,x)= \begin{pmatrix} \cos \big({t \theta(x)\over 2}\big)  \\ \sin \big({t \theta ( x)\over 2}\big)\end{pmatrix},\;\; V_-(t,x)= \begin{pmatrix}-\sin\big({t \theta(x)\over 2}\big)\\ \cos \big(t{\theta(x)\over 2}\big)  \end{pmatrix},$$
respectively. We denote by $P(t,x)$ the eigenprojectors for the eigenvalue $+1$. Then a real normalised vector $\omega(t)=\begin{pmatrix}\omega_1(t)\\ \omega_2(t)\end{pmatrix}$ satisfies
$$P(t,[\omega(t)] )\omega(t)= \omega(t)$$ 
if and only if 
$$\omega_1(t) = \cos \left({ t\over 2}\, \theta( |\omega_1(t)|^2) \right),\;\; \omega_2(t) =\sin \left({t\over 2}\, \theta(  |\omega_1(t)|^2) \right),$$
up to a global sign.
It is then enough to find the function $t\mapsto \omega_1(t)$. For fixed $t$, it reduces to finding $Y\in[0,1]$ such that
\begin{equation}\label{eq:omega}
 Y= \cos \left({t\over 2} \, \theta(Y^2)\right).
 \end{equation}
Let us restrict to $t\in [0,1]$ and choose  the function $\theta$ according to the following picture

\medskip 

\centerline{
\begin{tikzpicture}
   \tkzTabInit{$y$ / 1 , $\theta(y^2)$ / 1.5}{$0$, $y_{\rm max}$,  $y_{\rm min}$, $1$}
   \tkzTabVar{-/ $0$, +/ $\theta_{\rm max}$,-/  $\theta_{min}$,  +/ ${\pi}$}
\end{tikzpicture}
}

\medskip

\noindent We  fix $\theta_{\rm max} < {\pi}$  so that $\cos {{t\over 2} \theta_{\rm max}} >0$ for all $t\in [0,1]$ and ${\rm cos}$ is decreasing on the set of values of~${t\over 2}\theta$.
For $t=0$, the uniqueness of the solution of the equation~(\ref{eq:omega}) is guaranteed and for $t\in(0,1]$, it depends on whether 
$\cos {{t\over 2}  \theta_{\rm max}} <y_{\rm max}$ or not. Therefore, if 
we choose $y_{\rm max}$ and $\theta_{max}$ such that 
$$\cos {  {1\over 2} \theta_{\rm max} } <y_{\rm max},$$
we know that there exists $\tau\in ]0,1[$ such that the equation~(\ref{eq:omega}) has a unique solution for $t\in[0,\tau)$ and exactly three solutions for times $t\in (\tau,1]$. Figure~1 illustrates that fact.
 {Hence a solution chosen in a neighbourhood of $y_{\rm max}$ for times $t > \tau $ will disappear as $t$ passes the value $\tau$. }

\medskip

\begin{figure}
  \centering
  \begin{subfigure}[h]{.41\textwidth}
    \includegraphics[width=\textwidth]{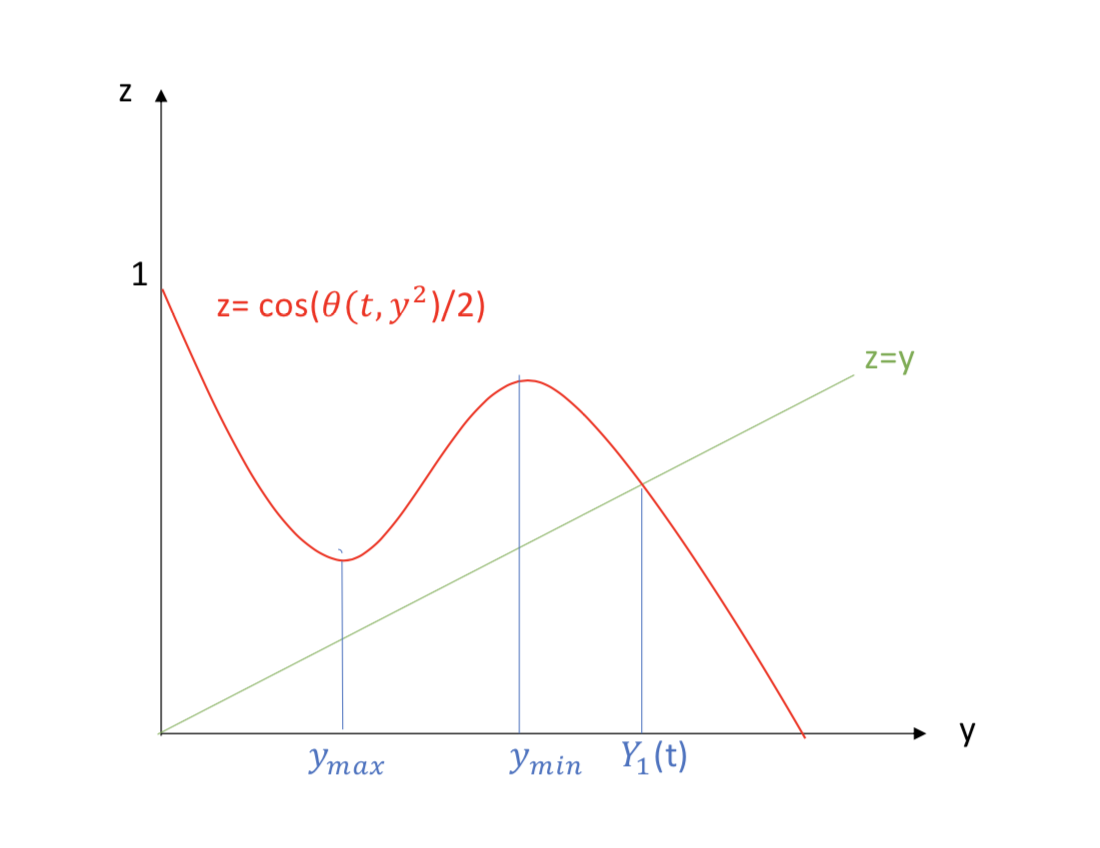}
   \caption{ $t<\tau$}
  \end{subfigure}
  \begin{subfigure}[h]{.40\textwidth}
    \includegraphics[width=\textwidth]{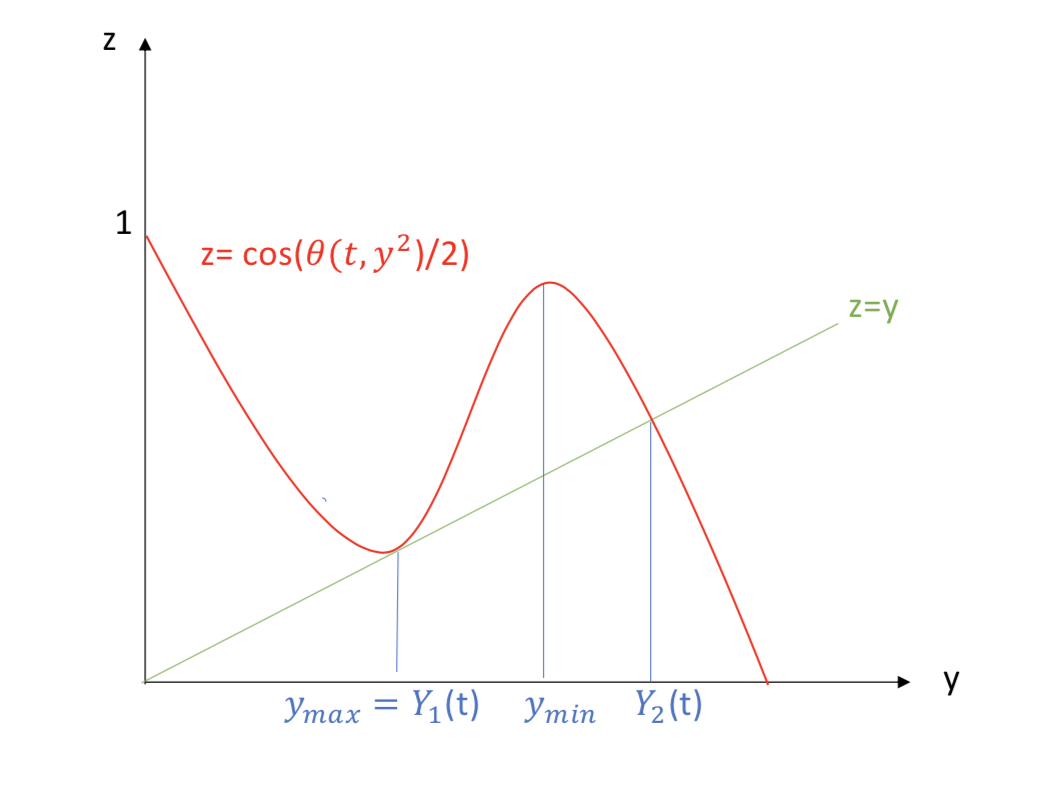}
     \caption{$t=\tau$ }
  \end{subfigure}
  \begin{subfigure}[h]{.40\textwidth}
    \includegraphics[width=\textwidth]{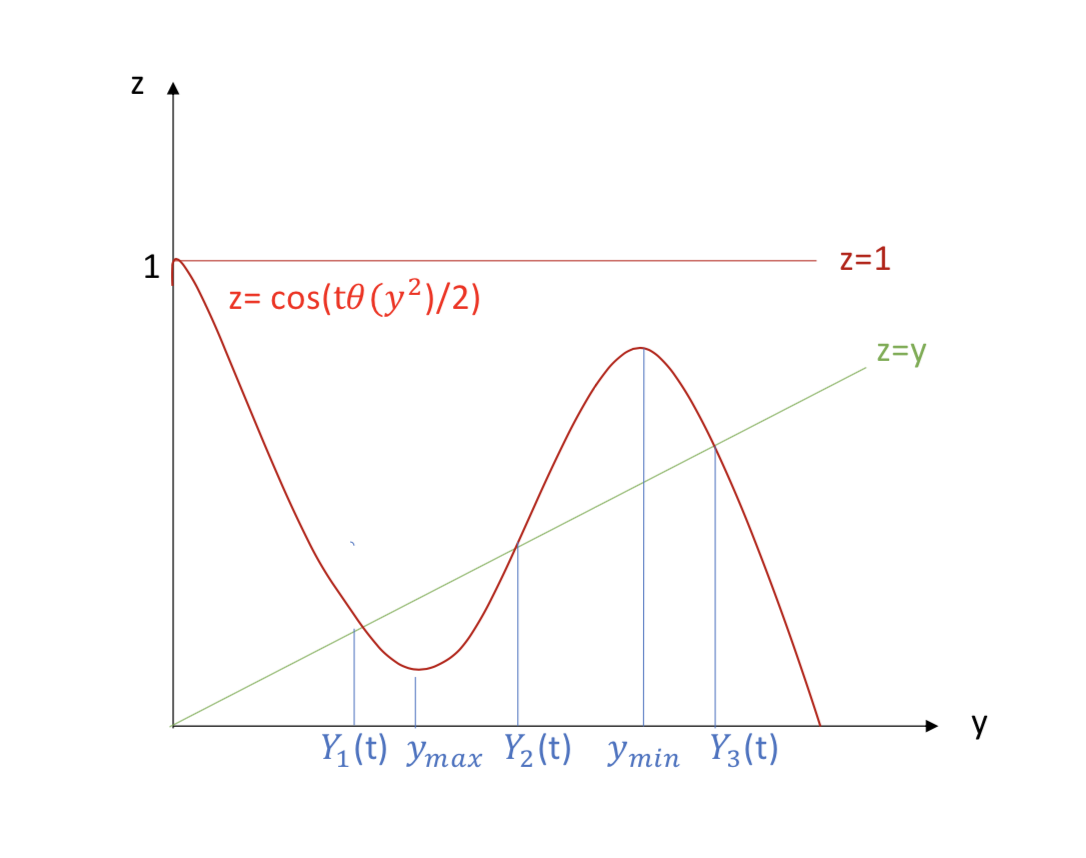}
  \caption{$t>\tau$}
  \end{subfigure}

 \caption{\small  Non uniqueness and degeneracy of the solutions of Equation~(\ref{eq:omega}): these three situations illustrating the dependance of the number of solutions depending  on the value of  
 $\cos {  {t\over 2} \theta_{\rm max} } $ comparatively with $y_{\rm max}$.}
\end{figure}


\section{ {The case of bounded operators}}\label{sec:proofmain} 

In this section, we prove Theorems~\ref{thm:main0} and~\ref{thm:main} the proofs of which both follow the same scheme. We first give the plan, spelling out the main steps and  lemmas  that we then successively prove in the next sections. 

\subsection{Proof of Theorems \ref{thm:main0} and~\ref{thm:main}}

Thanks to Lemma \ref{lem:elem} with $\chi(t,x)=-\lambda(t,x)$, we can reduce the analysis to the case $\lambda(t,x)=0$ without loss of generality, by considering the shift
\be\label{shiftham}
H(t,x)\mapsto H(t,x)-\lambda(t,x). 
\ee
The eigenvalues of the operator $H(t,x)$ are then all shifted by $\lambda(t,x)$ and we denote them by 
$$0 \ \mbox{and} \ \lambda_j(t,x), \;j\in\N^*,$$
where the functions $\lambda_j(t,x)$ may have changed compared to what they were in the introduction.
We set 
$\Delta(t)=v^\eps(t) -  \omega(t) $.
Then, the map $t\mapsto \Delta(t)$ (which also depends on $\eps$) satisfies the system 
$$i\eps \dot \Delta(t)= H(t,[v^\eps(t)]) v^\eps(t)-i\eps \dot \omega(t),\;\;\Delta(0)=0,$$ 
using a dot  to express derivatives with respect to time. 
For all $t\in \cT_0$, the interval $\cT_0$ is the set of times around~$t_0=0$ where $\omega(t)$ given in Proposition \ref{prop:eigenvector} exists, we have
$$H(t,[v^\eps(t)])= H(t, [\omega(t)+\Delta(t)] )= H(t,[\omega(t)]) + 2 \sum_{j=1}^p \partial_{x_j} H(t,[\omega(t)]) \Re  (\overline \omega_j \Delta_j)+O(\| \Delta\|^2),$$
and using $H(t,[\omega(t)])\omega(t)\equiv 0$, we obtain 
$$i\eps \dot \Delta (t)= -i\eps \dot \omega(t) +H(t,[\omega(t)]) \Delta(t) + 2 \sum_{j=1}^p \partial_{x_j} H(t,[\omega(t)]) \Re  (\overline \omega_j \Delta_j) \omega(t) +O(\| \Delta\|^2).$$
The equation involves a source term, $-i\eps \dot \omega(t) $, and its linear part depends on $\Delta(t)$ and $\overline\Delta(t)$. 
We write it as a  system for these two vectors:
$$
\left\{
\begin{array}{rcl}
i\eps \dot \Delta (t)&=& -i\eps \dot \omega(t) +H(t,[\omega(t)]) \Delta(t) +  \sum_{j=1}^p \partial_{x_j} H(t,[\omega(t)])  (\overline \omega_j \Delta_j + \omega_j \overline \Delta_j  ) \omega(t)  +O(\| \Delta\|^2),\\
i\eps \dot {\overline\Delta} (t)&=& -i\eps \dot{\overline \omega}(t) -\overline H(t,[\omega(t)]) \overline\Delta(t) -  \sum_{j=1}^p \partial_{x_j} \overline H(t,[\omega(t)]) (\overline \omega_j \Delta_j + \omega_j \overline \Delta_j ) \overline\omega(t)  +O(\| \Delta\|^2).
\end{array}
\right.
$$
We set for $j\in\{1,\cdots ,p\}$,
\begin{equation}\label{def:vj}
v_j(t)= \partial_{x_j} H(t,[\omega(t)])\omega(t),
\end{equation}
and, for later purposes, 
we notice that it follows from $P(t,x)H(t,x)\equiv 0$ that 
 {
$$(\partial_{x_j}P(t,x))H(t,x)+P(t,x)(\partial_{x_j}H(t,x))\equiv 0$$
 and, together with $H(t,[\omega(t)])\omega(t)\equiv 0$,}
 we get 
\be\label{vjperp}
v_j(t)=
(\un - P(t,[\omega(t)]))v_j(t).
\ee
We also set, for $j\in\{1,\cdots ,p\}$,
 $$\mu_j= \begin{pmatrix} v_j \\ -\overline v_j \end{pmatrix} \;\;{\rm and} \;\;  \nu_j= \begin{pmatrix} \omega_j e_j \\ \overline \omega_j e_j  \end{pmatrix}$$
and rewrite the system as~\eqref{eq:system}, namely
$$i\eps 
\begin{pmatrix} \dot\Delta \\ \dot {\overline \Delta}\end{pmatrix}
= 
-i\eps \begin{pmatrix} \dot \omega \\ \dot {\overline \omega}\end{pmatrix}
 + F(t) \begin{pmatrix}   \Delta \\ {\overline \Delta}\end{pmatrix}
 +\begin{pmatrix}  r^\eps \\ -\overline r^\eps \end{pmatrix} ,\;\;
 \Delta(0)=0,
 $$
 with $ r^\eps(t) =O(\| \Delta(t)\|^2)$ and
  {
 \begin{equation}
 \label{defF}
 F(t)= F_0(t)+G(t)
 \end{equation}
 with 
 \[
 F_0(t)= \begin{pmatrix}  H(t,[\omega(t)]) & 0 \\ 0 & - \overline H(t,[\omega(t)]) \end{pmatrix},
 \]
 and }
 \begin{align}
 \label{def:G}
 G(t)&=\sum_{j=1}^p 
 \begin{pmatrix}
 \overline \omega_j(t) |v_j(t)\rangle \langle e_j| &  \omega_j(t)  v_j(t)\rangle \langle e_j|\\
 -\overline \omega_j(t) |\overline v_j(t)\rangle \langle e_j |&-  \omega_j(t) |\overline v_j(t)\rangle \langle e_j|
 \end{pmatrix}
= \sum_{j=1}^p    \left| \begin{matrix} v_j(t) \\ -\overline v_j (t)\end{matrix} \right\rangle 
 \left\langle \begin{matrix} \omega_j (t)e_j \\ \overline \omega_j(t)e_j \end{matrix} \right|  
 \\
 \nonumber
 & = \sum_{j=1}^p  |\mu_j(t) \rangle \langle \nu_j(t) |.
 \end{align}
Note that $F(t), G(t)\in {\mathcal L}({\mathcal H}\times {\mathcal H})$, $F(t)$  is non self-adjoint and $G(t)$ is of finite rank.  {Hence},   $G(t)$ can be treated as a perturbation of the self-adjoint
operator $F_0(t)$. One then observes that two classical consequences of Weinstein-Aronszajn formula are that  $\sigma_{ess}(F(t))=\sigma_{ess}(F_0(t))$, and  that $\sigma_d(F(t))$ consists in finitely many of eigenvalues (see e.g. \cite{K2}, Chap. IV, $\S$~6).

\medskip 

The structure of the spectrum of $F(t)$ is crucial for our analysis. As we shall see in the following, the proof of Theorems~\ref{thm:main0} and~\ref{thm:main} works out when the spectrum of $F(t)$ is semisimple with real eigenvalues of constant multiplicity for all $t\in{\mathcal T}_0$. Moreover, there are situations where this can be proved and the next lemma describes such cases.  According to the assumptions of Theorems~\ref{thm:main0} and~\ref{thm:main}, we  focus on the case where $H(t,[\omega(t)])$ is real.

\begin{lem}\label{lem:specft2}
a) There exists $\delta_0>0$ such that if, for all 
 $t\in \cT_0$,  we have ${\bf H}_0$, ${\bf H}_1$ for some $\delta<\delta_0$, ${\bf H}_2$ and ${\bf H}_3$, then $0\in \sigma(F(t))$ as a doubly degenerate isolated eigenvalue, with corresponding eigennilpotent ${\mathbb N}_0(t)\equiv 0$. \\
b)  Moreover, if $H(t,x)=\overline{H}(t,x)$, $\sigma(H(t,x))$ is simple and $\sigma(H(t,x))\cap \sigma(-H(t,x))=\{0\}$
 for all $(t,x)\in \cT_0\times \cX^p$, then $\delta_0$ can be chosen so that 
the spectrum of $F(t)$ is real-valued for all $t \in \cT_0$ and takes the form 
\be \label{simplespect}
-\ell_{N-1}(t) <\cdots <-\ell_1(t)<0<\ell_1(t)<\cdots< \ell_{N-1}(t),
\ee
where $\ell_0(t)\equiv 0$ is of multiplicity two, and each eigenvalue $\pm \ell_k(t)$, $1\leq k\leq N-1$ is simple.\\
c) Finally, in the special case $p=1$, we have a) and if moreover $H(t,x)=\overline{H}(t,x)$ and  $\sigma(F(t))\setminus{\sigma(F_0(t))}$ consists in exactly
 $2(N-1)$ perturbed eigenvalues, then $\sigma(F(t))\subset \R$ and all corresponding eigennilpotents are zero. 
\end{lem}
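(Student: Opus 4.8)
\textbf{Plan of proof for Lemma~\ref{lem:specft2}.} The strategy is to exploit the block structure of $F_0(t)$ together with the reality and symmetry of $H(t,[\omega(t)])$, and then control the finite-rank perturbation $G(t)$ by analytic perturbation theory, paying attention to a symplectic/$\cJ$-self-adjointness structure that forces perturbed eigenvalues to come in pairs and to stay real as long as they remain simple.

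\emph{Part a).} First I would note that $\lambda=0$ is an eigenvalue of $H(t,[\omega(t)])$ with eigenvector $\omega(t)$ and, since $H$ is real, $\overline H(t,[\omega(t)])=H(t,[\omega(t)])$ also has $0$ as eigenvalue with eigenvector $\overline\omega(t)$ (note $C\omega(t)$ is an eigenvector; after the reality reduction one may take $\omega(t)$ real, but in any case the diagonal block $-\overline H$ has $0$ in its spectrum). Hence $0\in\sigma(F_0(t))$ with multiplicity exactly two (geometric and algebraic, since $F_0$ is self-adjoint), the eigenspace being spanned by $(\omega(t),0)$ and $(0,\overline\omega(t))$, and it is isolated because by {\bf H}$_2$ the nonzero eigenvalues of $H(t,[\omega(t)])$ are at distance $\geq g$ from $0$. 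Now $G(t)$ is finite rank with norm $O(\delta)$: from \eqref{def:vj} and {\bf H}$_1$, $\|v_j(t)\|\leq\delta$, and $|\omega_j(t)|\leq 1$, so $\|G(t)\|\leq C\delta$. The key computational step is that, using \eqref{vjperp}, the vectors $v_j(t)$ are orthogonal to $\omega(t)=P(t,[\omega(t)])\omega(t)$; this makes $G(t)$ annihilate the zero-eigenspace and, dually, the range of $G(t)$ avoids that eigenspace at leading order, so that the reduced perturbation matrix of $F(t)$ acting on the unperturbed two-dimensional kernel vanishes identically. By the Weinstein--Aronszajn / Kato reduction-process argument (\cite{K2}, Chap.~IV), for $\delta<\delta_0$ small enough the total projection onto the spectral subspace near $0$ remains two-dimensional, the perturbed eigenvalue(s) near $0$ coincide with the eigenvalues of the reduced operator, which is the zero $2\times2$ matrix; hence $0$ stays a doubly degenerate isolated eigenvalue of $F(t)$ with eigennilpotent $\M_0(t)\equiv 0$.

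\emph{Part b).} Here I would introduce $\cJ=\left(\begin{smallmatrix}0&\mathrm{Id}\\\mathrm{Id}&0\end{smallmatrix}\right)$ on $\cH\times\cH$ and observe that, writing $K(t)=\cJ F(t)$, one has $K(t)^*=K(t)$ composed with complex conjugation — more precisely $F(t)$ is $\cJ$-self-adjoint in the sense $\cJ F(t)\cJ=\overline{F(t)}{}^{*}$; this is exactly the structure carried by linearisations of Hamiltonian systems, and it implies that $\sigma(F(t))$ is symmetric about the real axis and invariant under $z\mapsto -\bar z$, so eigenvalues are either real, or come in quadruples $\{\pm\ell,\pm\bar\ell\}$. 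When $\delta=0$, $\sigma(F_0(t))=\sigma(H(t,[\omega(t)]))\cup(-\sigma(H(t,[\omega(t)])))$; the hypothesis $\sigma(H(t,x))$ simple and $\sigma(H(t,x))\cap\sigma(-H(t,x))=\{0\}$ means the nonzero eigenvalues $\pm\ell_k(t)$ are each simple and mutually distinct, yielding the ordering \eqref{simplespect} with $0$ doubly degenerate. Then for $\delta<\delta_0$ each simple real eigenvalue $\pm\ell_k(t)$ of $F_0(t)$ perturbs to a single eigenvalue of $F(t)$ which, being simple, must be its own image under $z\mapsto\bar z$, hence real; and $0$ is handled by part a). A uniform choice of $\delta_0$ follows from the uniform gap $g$ in {\bf H}$_2$ and the uniform bound $\delta$ on $G$. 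Smoothness in $t$ of the $\ell_k(t)$ follows from analytic perturbation theory for isolated simple eigenvalues.

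\emph{Part c).} For $p=1$, $G(t)=|\mu_1(t)\rangle\langle\nu_1(t)|$ has rank one, so by Weinstein--Aronszajn $\sigma(F(t))\setminus\sigma(F_0(t))$ consists of the zeros of a single meromorphic function $w(z)=1+\langle\nu_1(t)|(F_0(t)-z)^{-1}\mu_1(t)\rangle$, with at most $2(N-1)$ of them counted with multiplicity (one ``new'' eigenvalue can be produced near each of the $2(N-1)$ simple poles). I would argue: by the $\cJ$-symmetry the set of perturbed eigenvalues is symmetric under $z\mapsto\bar z$, and if there are exactly $2(N-1)$ of them then (together with the persistent double eigenvalue $0$ from part a), accounting for total algebraic multiplicity $2N$) each must be simple and hence real; zero eigennilpotents then follow because an isolated simple eigenvalue has no nilpotent part, and $0$ has none by part a). The main obstacle I anticipate is precisely this counting/bracketing step in c): one must rule out that the rank-one perturbation creates a complex-conjugate pair of ``new'' eigenvalues while two real unperturbed eigenvalues collide and leave the real axis — the hypothesis ``$\sigma(F(t))\setminus\sigma(F_0(t))$ consists in exactly $2(N-1)$ perturbed eigenvalues'' is what is designed to exclude this, and the delicate point is to show this count forces simplicity (so that no Jordan block, hence no room for leaving $\R$) rather than merely reality; interlacing-type arguments for $w(z)$ on each real gap, combined with the $\cJ$-symmetry, should close it. In parts a)--b) the main technical care is verifying the leading-order vanishing of the reduced perturbation at $0$ via \eqref{vjperp}, and extracting a $t$-uniform $\delta_0$.
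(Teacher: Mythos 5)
Your overall architecture for parts a) and b) is close to the paper's, but there are a few slips in the details, and part c) has a genuine gap that you yourself flag but do not close.

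For a), the correct one–sided cancellation is $\tilde P_0(t) G(t)=0$ (because $\mu_j=\begin{pmatrix}v_j\\-\overline v_j\end{pmatrix}$ and $\langle\omega|v_j\rangle=0$ by \eqref{vjperp}), which says the \emph{range} of $G$ avoids the zero-eigenspace; $G\tilde P_0$ is generically nonzero, since $\tilde P_0\nu_j\neq 0$. So the phrase ``$G(t)$ annihilates the zero-eigenspace'' is the wrong direction, though your ``dually'' clause has the right statement, and $\tilde P_0 G=0$ is indeed enough (every term in the reduction process begins with $\tilde P_0 G$). Where your write-up is weaker than the paper's is the eigennilpotent: ``the reduced perturbation matrix vanishes'' does not by itself rule out a Jordan block in the perturbed group. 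The paper avoids this by factoring the resolvent,
$(F-z)^{-1}=\bigl[\un+(F_0-z)^{-1}_{\tilde Q_0}G\bigr]^{-1}(F_0-z)^{-1}$,
reading off that the only singularity near $0$ is the simple pole of $(F_0-z)^{-1}$, so the residue is rank two and $\oint z(F-z)^{-1}dz=0$ directly. You should state this (or an equivalent) rather than infer the nilpotent from the reduction.

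For b), your symmetry equation $\cJ F\cJ=\overline F^{*}$ is off by a sign: with $\cJ=\left(\begin{smallmatrix}0&\un\\\un&0\end{smallmatrix}\right)$ one finds $\cJ F_0\cJ=-\overline{F_0}^{*}$, and the cleaner invariant formulation is that the antiunitary involution $\cJ C$ anticommutes with $F$ (giving $\sigma(F)=-\overline{\sigma(F)}$), while the reality $H=\overline H$, $\omega=\overline\omega$ makes $F$ itself real (giving $\sigma(F)=\overline{\sigma(F)}$). Once both symmetries are in place, your conclusion -- each simple unperturbed eigenvalue $\pm\lambda_k$ of $F_0$ produces one perturbed eigenvalue that is fixed by conjugation and hence real -- is the same as the paper's, which reaches it instead by showing $w(z)=\overline{w(-\bar z)}=w(-z)$ for the Aronszajn--Weinstein determinant. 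Both routes are fine; the paper's determinant formulation is what it then reuses in c).

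Part c) is where you have a real gap, and you half-acknowledge it. Your multiplicity count ``total algebraic multiplicity $2N$'' has no meaning here: the unperturbed eigenvalues $\pm\lambda_j$ of $F_0$ may have arbitrary, even infinite, multiplicity (that is the point of c) compared with b)), so a global dimension count cannot force simplicity, and ``interlacing on real gaps'' is not obviously available for a non-self-adjoint $F$. The paper closes the argument differently and more explicitly: for $p=1$ the determinant $w(z)$ is a rational function whose numerator $\tilde w(z)$ is a genuine polynomial of degree $2N'=2(N-1)$; the hypothesis that $\sigma(F)\setminus\sigma(F_0)$ has exactly $2N'$ elements means $\tilde w$ has $2N'$ distinct roots, which by the $\{z,\bar z,-z,-\bar z\}$ symmetry and their proximity to $\sigma(F_0)\setminus\{0\}$ must all be real and simple. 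Then, to kill the eigennilpotents at the possibly degenerate unperturbed eigenvalues $\pm\lambda_k$, the paper computes $(F-z)^{-1}$ explicitly from the rank-one resolvent formula, expands near $z=\lambda_k$ using $w(z)^{-1}=(\lambda_k-z)s_k(1+O(\lambda_k-z))$, and checks there is no second-order pole (and identifies $\mathbb P_k$ as a bona fide projector). None of this is implied by a symmetry-plus-count argument, so to make your proof of c) complete you would need to reproduce this explicit computation or find a substitute that controls the resolvent near the (possibly multiple) unperturbed eigenvalues.
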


 {Recall that the eigennilpotents correspond to the Jordan blocks in finite dimension.}
The points~a) and~b) imply that under the assumptions of Theorem~\ref{thm:main0}, the spectrum of $F(t)$ is semisimple with real eigenvalues of constant multiplicity for all $t\in{\mathcal T}_0$, and thus that the assumptions of Theorem~\ref{thm:main} are satisfied. The point c) gives another situation with possibly degenerate eigenvalues where the assumptions of Theorem~\ref{thm:main} hold.

\begin{rem} \label{rem:gen}
i) Note that for b), it is enough to assume $\sigma(H(0,x))\cap \sigma(-H(0,x))=\{0\}$, at the cost of making $|\cT_0|$ smaller. This is a generic hypothesis which automatically satisfied whenever $\lambda_{j_0}$ is the ground state or the upper eigenvalue. \\
ii)  The condition $\#\{\sigma(F(t))\setminus \sigma(F_0(t))\}=2(N-1)$ states that the spectral effect of the rank one perturbation $G(t)$ is maximal, which is a genericity assumption.%
 The multiplicities of the eigenvalues of $F_0(t)$ are arbitrary, possibly infinite, so that case c) does not necessary reduce to finite dimension, in contrast to the situation dealt with in Theorem~\ref{thm:main0}.  \\
iii) Besides,  if the spectral effect of the rank one perturbation is maximal on $\cT_0$, then $\sigma(F(t))$ takes the form (\ref{simplespect})  for all $t\in \cT_0$, with 
$4(N-1)$ non zero distinct eigenvalues instead of $2(N-1)$, $2(N-1)$ of which are simple  {(provided all eigenvalues of $F_0(t)$ have multiplicity at least two)}.\\
iv) The condition $H(t,x)$ real does not seem strong enough to ensure $\sigma(F(t))\subset {\mathbb R}$ for $p\geq 2$; 
see the example of the Hamiltonian given by equation~(\ref{ex2}) in Appendix {\em A}.
\end{rem}

We 
 postpone the proof of Lemma~\ref{lem:specft2}  {to Section~\ref{sec:32}} and we go to the next step of the proof which consists in controlling the adiabatic limit of the two-parameter evolution operator $T^\eps(t,s)$ generated by $F(t)$ (see~(\ref{semiF}) below), and using it to estimate $\left\|  \begin{pmatrix}   \Delta \\ {\overline \Delta}\end{pmatrix}\right\|$ via Duhamel formula. Since $F(t)$ is not self-adjoint, this requires some care because the possible occurence of nilpotent operators in its spectral decomposition leads to subexponential divergence of the semigroup  as $\eps\ra 0$ (see~\cite{J2}), that we cannot accommodate.  However,  Lemma \ref{lem:specft2} ensures that 
under the assumptions of Theorem~\ref{thm:main0}, and by hypothesis in Theorem~\ref{thm:main},  for all $t\in\cT_0$, $F(t)$ is semi-simple, with spectral decomposition
\be\label{decspec}
F(t)=\sum_{j=-N'}^{N'} \ell_j(t) {\mathbb P}_j(t), \ \mbox{with the convention} \ \ell_{-|j|}(t)=-\ell_{|j|}(t)  {<0},
\ee
where we have set  $N'=N-1$ for convenience  and where $ {\mathbb P}_j(t)$ are ${\mathcal C}^\infty$ eigenprojectors corresponding to real eigenvalues $\ell_j(t)$. We now work under these assumptions.

\medskip 

Despite the eigenprojectors ${\mathbb P}_j(t)$ not being orthogonal, with norms possibly larger than~$1$,  we prove in the next lemma that any operator 
$F(t)$ with real spectrum satisfying~(\ref{decspec}) generates an evolution operator which is uniformly bounded in $\eps$ and almost intertwines its eigenprojectors in the adiabatic limit,  {in the sense of (\ref{commutation}) below}. In line with Kato's approach (\cite{K1} and {\it e.g.}~\cite{HJ}), we introduce the dynamical phase operator~$\Phi^\eps(t,s)$ defined by 
\be\label{dynpha}
\Phi^\eps(t,s)= \sum_{j=-N'}^{N'} {\mathbb P}_j(0) {\rm e}^{-{i\over \eps}\int_s^t \ell_j(\sigma) d\sigma}, \ \mbox{s.t.} \ \Phi^\eps(t,s)^{-1}=\Phi^\eps(s,t), 
\ee
and the intertwining operator $W(t)$ given by
\be\label{paraltrans}
i\partial_t W(t) = K(t) W(t) ,\;\;W(0)={\rm Id},\;\;
{\rm with}\;\; 
K(t)=i \sum_{j=-N'}^{N'} \dot {\mathbb P}_j(t){\mathbb P}_j(t) .
\ee
As is well known (see~\cite{K2}), for all $t\in\R$, we have 
\be\label{commutation}
{\mathbb P}_j(t)W(t) =W(t){\mathbb P}_j(0),
\ee
and thanks to Lemma \ref{lem:specft2}, $\|\Phi^\eps(t,s)\|$ is uniformly bounded in $\eps$. 
Moreover, we check that
\be\label{diffeqphi}
i\eps\partial_t \Phi^\eps(t,s)=W(t)^{-1}F(t)W(t)\Phi^\eps(t,s)\equiv \tilde F(t) \Phi^\eps(t,s).
\ee
We then introduce the bounded family of operators 
\begin{equation}\label{def:Veps}
V^\eps(t,s)=W(t) \Phi^\eps(t,s) W(s)^{-1},
\end{equation}
which satisfy $V^\eps(t,s)^{-1}=V^\eps(s,t)$ and 
\begin{align}\nonumber
&V^\eps(t,s){\mathbb P}_j(s)={\mathbb P}_j(t)V^\eps(t,s)= W(t) {\mathbb P}_j(0) {\rm e}^{-{i\over \eps}\int_s^t \ell_j(\sigma) d\sigma} W(s)^{-1}.
\end{align}
Moreover, because $F(t)$ is semi-simple, $V^\eps(t,s)$ approximates the evolution operator generated by $F(t)$, as described by the next lemma which applies in a quite general setting. 

\begin{lem}\label{lem:adiab} 
Let  $\cT$ be an open bounded interval of $\R$ containing $0$ and consider the operator defined on a Hilbert space $\cK$ for all
$(t,s)\in \cT\times \cT$  by the strong differential equation
\be\label{semiF}
i\eps\partial_t T^\eps(t,s)= F(t) T^\eps(t,s),\;\; T^\eps(s,s)={\rm Id}.
\ee
If $F\in C^2(\cT, \cL({\mathcal K}))$ with continuous derivatives at $\partial \cT$ and if $F(t)$ is semi-simple and satisfies (\ref{decspec}) for all $t\in\overline{\cT}$, then 
we have  in ${\mathcal L}( \cK)$, 
$$
 T^\eps(t,s)= V^\eps(t,s) +O_{t,s}(\eps),
$$   
which implies the uniform boundedness of the family of operators $(T^\eps(t,s))_{\eps>0}$.
\end{lem}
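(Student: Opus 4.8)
The plan is to follow Kato's adiabatic scheme adapted to the non-self-adjoint but semisimple, real-spectrum setting. First I would record the basic structural facts: since $F(t)$ is semisimple with the spectral decomposition~(\ref{decspec}) and the eigenvalues $\ell_j(t)$ are real and of constant multiplicity on $\overline{\cT}$, the eigenprojectors ${\mathbb P}_j(t)$ are $C^2$ (via Riesz integrals over fixed contours, using $F\in C^2$), the intertwining equation~(\ref{paraltrans}) for $W(t)$ has a unique $C^2$ solution on $\overline{\cT}$ with $W(t)$ invertible and $W(t), W(t)^{-1}$ uniformly bounded (the generator $K(t)$ being bounded and continuous), and the commutation relation~(\ref{commutation}) holds. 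Because all $\ell_j$ are real, $\Phi^\eps(t,s)$ is a finite sum of unitary-times-projector terms, so $\|\Phi^\eps(t,s)\|\le \sum_j\|{\mathbb P}_j(0)\|$ uniformly in $\eps$; hence $V^\eps(t,s)$ defined by~(\ref{def:Veps}) is uniformly bounded in $\eps$ on $\overline\cT\times\overline\cT$, and one checks directly from~(\ref{diffeqphi}) that $i\eps\partial_t V^\eps(t,s) = \big(F(t) + i\eps \dot W(t)W(t)^{-1} - W(t)K(t)W(t)^{-1}\cdot(\text{correction})\big)V^\eps(t,s)$; more precisely, differentiating $V^\eps = W\Phi^\eps W^{-1}$ and using $i\eps\partial_t\Phi^\eps = \tilde F\Phi^\eps$ with $\tilde F = W^{-1}FW$ gives
$$ i\eps\partial_t V^\eps(t,s) = F(t)V^\eps(t,s) + i\eps\,\dot W(t)W(t)^{-1}V^\eps(t,s) - i\eps\, W(t)K(t)W(t)^{-1}V^\eps(t,s). $$

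Now by~(\ref{paraltrans}) we have $i\dot W(t) = K(t)W(t)$, i.e. $i\dot W(t)W(t)^{-1} = K(t)$, so the two $O(\eps)$ terms on the right are $i\eps\big(\,W(t)K(t)W(t)^{-1} \cdot 0\,\big)$—wait, let me state it cleanly: $i\eps\dot W W^{-1} V^\eps = \eps K(t)(\cdots)$ is wrong in placement; the correct bookkeeping is that $i\eps\partial_t V^\eps = F(t)V^\eps + \eps R(t)V^\eps$ where $R(t) := i\dot W(t)W(t)^{-1} - iW(t)K(t)W(t)^{-1}W(t)W(t)^{-1}$ simplifies, using $i\dot W W^{-1}=K$ and the fact that $W$ conjugates the generators, to a bounded $C^1$ operator-valued function on $\overline\cT$. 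The key algebraic point, which I would verify carefully, is that the naive candidate $V^\eps$ solves the desired equation up to an $O(\eps)$ bounded generator error; this is exactly Kato's lemma that the adiabatic intertwiner produces an evolution that, to leading order, transports the spectral subspaces.

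Then I would close the argument by Duhamel. Write $T^\eps(t,s) - V^\eps(t,s) = \int_s^t \frac{\partial}{\partial u}\big(T^\eps(t,u)V^\eps(u,s)\big)\,du$, and since $i\eps\partial_u T^\eps(t,u) = -T^\eps(t,u)F(u)$ and $i\eps\partial_u V^\eps(u,s) = F(u)V^\eps(u,s) + \eps R(u)V^\eps(u,s)$, the $F(u)$ terms cancel and one gets
$$ T^\eps(t,s) - V^\eps(t,s) = -\int_s^t T^\eps(t,u)\, R(u)\, V^\eps(u,s)\,du. $$
To conclude $T^\eps(t,s) = V^\eps(t,s) + O_{t,s}(\eps)$ I need $\|T^\eps(t,u)\|$ bounded uniformly in $\eps$; this I would obtain by a standard bootstrap/Gr\"onwall argument: let $M(t) = \sup_{u}\|T^\eps(t,u)\|$, the integral identity plus uniform bounds on $\|R\|$ and $\|V^\eps\|$ give $\|T^\eps(t,s)\| \le \|V^\eps(t,s)\| + C|t-s|\sup_{u\in[s,t]}\|T^\eps(t,u)\|$, and iterating over a partition of $\cT$ into subintervals of length $<1/(2C)$ yields a uniform bound; feeding that back into the Duhamel identity gives the $O(\eps)$ remainder, uniformly for $t,s$ in compacts of $\cT$ (hence on $\overline\cT$ by the continuity-at-the-boundary hypothesis).

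The main obstacle is the non-self-adjointness: unlike the unitary case, $\|T^\eps(t,s)\|$ is not automatically $1$, so the uniform-in-$\eps$ boundedness of $T^\eps$ cannot be assumed a priori and must be extracted from the construction itself—this is why realness of the spectrum is essential (it keeps $\Phi^\eps$, hence $V^\eps$, uniformly bounded; complex eigenvalues would give exponential growth, and nontrivial eigennilpotents would give polynomial-in-$1/\eps$ growth, cf.~\cite{J2}), and why the Gr\"onwall bootstrap is needed rather than a one-line estimate. A secondary point requiring care is checking that $R(t)$ really is $O(1)$ and $C^1$—i.e. that all the $1/\eps$ terms genuinely cancel in $i\eps\partial_t V^\eps - F V^\eps$—which rests on the defining ODE~(\ref{paraltrans}) for $W$ together with the semisimplicity of $F(t)$ (so that $F(t) = \sum_j \ell_j(t){\mathbb P}_j(t)$ with no nilpotent part, making $\tilde F(t) = \sum_j \ell_j(t){\mathbb P}_j(0)$ commute with each $\Phi^\eps$).
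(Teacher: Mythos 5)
Your setup is correct and largely matches the paper's: the identity $i\eps\partial_t V^\eps = F(t)V^\eps + \eps K(t)V^\eps$ (with $R=K$ once the bookkeeping is finished), the Duhamel formula
$$T^\eps(t,s)-V^\eps(t,s)=-i\int_s^t T^\eps(t,u)\,K(u)\,V^\eps(u,s)\,du,$$
and the observation that realness of the $\ell_j$ is what keeps $\Phi^\eps$ and hence $V^\eps$ uniformly bounded. But the conclusion does not follow from what you wrote: this Duhamel identity alone, together with a uniform bound on $\|T^\eps\|$, gives only $\|T^\eps(t,s)-V^\eps(t,s)\|\le C|t-s|$, which is $O(1)$, not $O(\eps)$. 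There is no explicit $\eps$ in the integrand, and your Gr\"onwall bootstrap produces a uniform-in-$\eps$ bound on $T^\eps$ but cannot manufacture the missing power of $\eps$.

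The essential step you are missing is the oscillatory integration by parts that actually extracts the factor $\eps$. In the paper one works with $\Omega^\eps=(V^\eps)^{-1}T^\eps$ and projects onto each ${\mathbb P}_j(0)$; the crucial structural fact is that $\tilde K:=W^{-1}KW$ is strictly off-diagonal with respect to the fixed projectors, ${\mathbb P}_j(0)\tilde K(t){\mathbb P}_j(0)=0$, so the integrand $\Phi^\eps(s,t')\tilde K(t')\Phi^\eps(t',s)$ involves only phases $\e^{\,i\int(\ell_j-\ell_k)/\eps}$ with $j\ne k$. Because $F(t)$ is semisimple, one can write this phase as $\tilde R_j(t')\cdot i\eps\partial_{t'}(\cdots)$ using the reduced resolvent $\tilde R_j=\tilde F_j^{-1}|_{(\un-{\mathbb P}_j(0))}$, and integrate by parts to obtain $\|{\mathbb P}_j(s)\Omega(t,s)-{\mathbb P}_j(s)\|\le c\,\eps\,|||\Omega|||$. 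Summing and absorbing the right side for $\eps$ small then gives $|||\Omega-\un|||=O(\eps)$, whence the claim. Without this integration by parts exploiting the nonresonant off-diagonal phases and the invertibility of $\tilde F_j$ off its kernel, there is no mechanism to gain the $\eps$; your bootstrap is circular on precisely this point.
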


\begin{rem}
i) As a consequence, $T^\eps(t,s){\mathbb P}_j(s)-{\mathbb P}_j(t)T^\eps(t,s)=O_{t,s}(\eps)$.
\\
ii) Note that $N'$ in (\ref{decspec}) is independent of $t\in\cT$, the multiplicities of the eigenvalues of $F(t)$ are arbitrary, possibly infinite. 
\end{rem}

We postpone the proof of this lemma   {to Section~\ref{sec:adiab} below} and conclude the proof of  Theorem \ref{thm:main}. As already mentioned, Lemma~\ref{lem:specft2} ensures we can apply Lemma~\ref{lem:adiab}  to  $\cK={\mathcal H} \times{\mathcal H}$ and $\cT=\cT_0$
under the assumptions of Theorems~\ref{thm:main0} and~\ref{thm:main}.
We write 
\begin{eqnarray}\label{bestbound}
 \begin{pmatrix}   \Delta(t) \\ {\overline \Delta}(t)\end{pmatrix}
&=&-  \int_0^t T^\eps (t,s) \begin{pmatrix} \dot \omega(s) \\ \dot {\overline \omega}(s)\end{pmatrix} ds
-{i\over\eps} \int_0^t T^\eps (t,s)
 \begin{pmatrix} r^\eps(s) \\ -\overline r^\eps(s) \end{pmatrix}
 ds \\ \nonumber
&=&-  \int_0^t V^\eps (t,s) \begin{pmatrix} \dot \omega(s) \\ \dot {\overline \omega}(s)\end{pmatrix} ds
-{i\over\eps} \int_0^t T^\eps (t,s)
 \begin{pmatrix} r^\eps(s)  \\ -\overline r^\eps(s) \end{pmatrix}
 ds +O_t(\eps).
\end{eqnarray} 
It follows  the definition of $\omega(t)$ that ${\mathbb P}_0(t)\begin{pmatrix} \dot \omega(t) \\ \dot {\overline \omega}(t)\end{pmatrix}= 0$ 
for all time $t\in\cT_0$. Therefore,  a classical adiabatic 
 argument (that we spell out in Section~\ref{sec:adiab} below)  {yields that Lemma~\ref{lem:adiab} has the consequence stated below.}

\begin{cor}\label{lem:dotomega}
For all $t\in\cT_0$, we have 
 $$\int_0^t V^\eps (t,s) \begin{pmatrix} \dot \omega(s) \\ \dot {\overline \omega}(s)\end{pmatrix} ds =O_t(\eps).$$
\end{cor}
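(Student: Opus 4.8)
The plan is to exploit the key spectral fact that $\mathbb{P}_0(t)\begin{pmatrix}\dot\omega(t)\\\dot{\overline\omega}(t)\end{pmatrix}\equiv 0$, which holds because differentiating $P(t,[\omega(t)])\omega(t)=\omega(t)$ (together with the gauge normalisation $\langle\omega|\dot\omega\rangle\equiv 0$) forces $\dot\omega(t)$ to lie in the range of $\un-P(t,[\omega(t)])$, and $G(t)^*$ annihilates that complementary range in the relevant sense; one checks directly from \eqref{def:G} and \eqref{vjperp} that the vector $\begin{pmatrix}\dot\omega\\\dot{\overline\omega}\end{pmatrix}$ is orthogonal to the left eigenvectors spanning $\mathbb{P}_0(t)^*$, hence killed by $\mathbb{P}_0(t)$. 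Consequently $\begin{pmatrix}\dot\omega(s)\\\dot{\overline\omega}(s)\end{pmatrix}=\sum_{j\neq 0}\mathbb{P}_j(s)\begin{pmatrix}\dot\omega(s)\\\dot{\overline\omega}(s)\end{pmatrix}$ involves only the spectral subspaces attached to the nonzero eigenvalues $\ell_j(t)$ which, by Lemma~\ref{lem:specft2}, are bounded away from $0$ uniformly on $\cT_0$.

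Next I would substitute $V^\eps(t,s)\mathbb{P}_j(s)=W(t)\mathbb{P}_j(0)\e^{-\frac{i}{\eps}\int_s^t\ell_j(\sigma)d\sigma}W(s)^{-1}$ into the integral and integrate by parts in $s$ for each $j\neq 0$, writing $\e^{-\frac{i}{\eps}\int_s^t\ell_j}=\frac{i\eps}{\ell_j(s)}\partial_s\bigl(\e^{-\frac{i}{\eps}\int_s^t\ell_j}\bigr)$. This is the standard non-stationary-phase / Kato adiabatic trick: the factor $1/\ell_j(s)$ is well-defined and $C^1$ on $\cT_0$ since $|\ell_j(s)|\geq \ell_1(s)\geq c>0$, the boundary terms at $s=0$ and $s=t$ carry a prefactor $\eps$, and the remaining integral has an integrand that is a $C^1$ function of $s$ (derivatives of $W(s)^{\pm1}$, of $\dot\omega(s)$, and of $1/\ell_j(s)$, all bounded on the compact closure of $\cT_0$), so it is $O_t(1)$ and picks up another $\eps$ from the prefactor. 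Summing over the finitely many indices $j\in\{\pm1,\dots,\pm N'\}$ (and noting $W(t), W(s)^{-1}$ are uniformly bounded) gives the claimed $O_t(\eps)$ bound, with the implicit constant controlled by $\sup_{\cT_0}\|\dot\omega\|$, $\sup_{\cT_0}\|\partial_t\dot\omega\|$, $\sup_{\cT_0}\|\dot W\|$, $\sup_{\cT_0}\|\mathbb{P}_j\|$, $\|\dot{\mathbb{P}}_j\|$ and $1/c$.

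The main obstacle — really the only subtle point — is justifying that $\mathbb{P}_0(t)\begin{pmatrix}\dot\omega(t)\\\dot{\overline\omega}(t)\end{pmatrix}= 0$ rigorously, since $\mathbb{P}_0(t)$ is the (non-orthogonal) spectral projector of the \emph{non-self-adjoint} operator $F(t)$ onto its kernel, and $\ker F(t)$ is two-dimensional and need not coincide with $\ker F_0(t)=\mathrm{span}\{(\omega,0),(0,\overline\omega)\}$. One resolves this by identifying the left kernel of $F(t)$: from $F(t)=F_0(t)+\sum_j|\mu_j\rangle\langle\nu_j|$ and the orthogonality relation \eqref{vjperp} stating $v_j\perp P(t,[\omega])$, one shows the range of $F(t)$ (equivalently, the annihilator of $\ker F(t)^*$) still contains the subspace where $\begin{pmatrix}\dot\omega\\\dot{\overline\omega}\end{pmatrix}$ lives; concretely, differentiating the defining relations gives $\dot\omega=-\,(\un-P)\dot P\,\omega$ up to the phase correction, which lies in $\mathrm{Ran}(\un-P)$, and one then verifies directly that any vector of this form is annihilated by $\mathbb{P}_0(t)$ by pairing against the explicit left eigenvectors of $F(t)$ at eigenvalue $0$. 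Once this identity is in hand, the corollary follows from Lemma~\ref{lem:adiab} by the routine integration-by-parts argument sketched above; I expect the authors will either cite the standard adiabatic lemma or reproduce this half-page computation in Section~\ref{sec:adiab}.
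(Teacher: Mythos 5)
Your proposal is correct and takes essentially the same route as the paper: both hinge on the identity $\mathbb{P}_0(s)\begin{pmatrix}\dot\omega(s)\\ \dot{\overline\omega}(s)\end{pmatrix}\equiv 0$, followed by a non-stationary-phase integration by parts against the oscillating dynamical phase. The paper obtains the vanishing more directly, from the factored formula $\mathbb{P}_0(s)=\bigl[\un+F_0(s)_{\tilde Q_0(s)}^{-1}G(s)\bigr]^{-1}\tilde P_0(s)$ together with the elementary fact $\tilde P_0(s)\begin{pmatrix}\dot\omega(s)\\ \dot{\overline\omega}(s)\end{pmatrix}=0$, and performs the integration by parts in a single stroke using the reduced resolvent of $\tilde F(s)$ on $(\un-\mathbb{P}_0(0))\cH\times\cH$ rather than spectral sector by spectral sector; these are presentational rather than substantive differences.
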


 Therefore, focusing on the first component of (\ref{bestbound}) and setting
 $$\delta_\tau^\eps=\sup_{t\in[0,\tau]} \|\Delta(t)\|,$$ 
with $\tau \in \cT_0$, we deduce from the above that there exist $a,b>0$ such that  
 $$\delta_\tau^\eps\leq \eps a+ {b\over \eps}\tau  \delta_\tau^\eps{} ^2.$$
 Setting $X_\eps(\tau)= \eps^{-1} \delta_\tau^\eps$, we are led to study of the second order equation 
 \be\label{smalldelta}
 b\tau X^2 -X+a\geq 0.
 \ee
 Since $X_\eps(0)=0$, we deduce that $X_\eps(\tau)\leq {1\over 2b\tau} \left(1-\sqrt{ 1-4ab\tau}\right)=2a/(1+\sqrt{ 1-4ab\tau})$, as long as $4ab\tau\leq 1$. Finally, we obtain 
 $$\forall \tau\in [0,{(4ab)}^{-1}]\cap \cT_0,\;\; \delta_\tau^\eps \leq 2a\eps.$$
To justify the estimate (\ref{balancebound0}) for $t$ small, we start from (\ref{bestbound}) to get the existence of $\alpha, \beta >0$ such that
$$
\delta_\tau^\eps\leq \alpha \tau +{\beta\over \eps}\tau  \delta_\tau^\eps{} ^2.
$$
Focusing on times $\tau\leq \eps$, we consider 
$
\delta_\tau^\eps\leq \alpha \tau +{\beta} \delta_\tau^\eps{} ^2,
$
which, by a similar argument using $\delta_0^\eps=0$, implies, as long as $4 \alpha\beta\tau \leq 1$,
$
\delta_\tau^\eps\leq 2\alpha \tau.
$
Increasing the constant $\alpha$  if necessary, we get (\ref{balancebound0}).
 \qed

\medskip 

 {The two next subsections are respectively devoted to the spectral analysis of $F(t)$ with the proof of Lemma~\ref{lem:specft2},  and to the non self-adjoint adiabatic estimates with the proof of Lemma~\ref{lem:adiab} and its Corollary~\ref{lem:dotomega}.}

\subsection{Spectral analysis of $F(t)$}\label{sec:32}

We proceed with the proof of Lemma \ref{lem:specft2}, which relies on a careful analysis of the eigenvalues of $F(t)$ and of their multiplicity.
 
 \medskip 
 
Recall that $C$ denotes the anti-unitary involution defined on ${\mathcal H}$ by $C\psi=\overline \psi$ for all $\psi\in{\mathcal H}$. 
It is at this stage of the proof that we shall use the assumption $H=CHC=\overline H$, which implies $\omega=C\omega=\overline \omega$ and $v_j=Cv_j=\overline v_j$ for all $1\leq j\leq p$. 
Due to  assumption {\bf H}$_1$, we consider the operator $F(t)$ as a perturbation of the bloc diagonal operator
$F_0(t)$. 
Hence, since $\sigma(\overline {H}(t,[\omega(t)]))=\sigma(H(t,[\omega(t)])) $, 
$$
\sigma(F_0(t))=\sigma(H(t,[\omega(t)]))\cup \sigma(- H(t,[\omega(t)])).
$$
By our
genericity assumption, and due to the reduction we have made to the case where $\lambda(t,[\omega(t)])\equiv0$, the spectrum of $F_0(t)$ consists of $2N-1=2N'+1$ isolated eigenvalues 
$$-|\lambda_{N'}(t,[\omega(t)])|<\cdots <-|\lambda_1(t,[\omega(t)])|<0<|\lambda_1(t,[\omega(t)])|<\cdots<|\lambda_{N'}(t,[\omega(t)])|.$$
Since the operator $G(t)$ is of small norm by assumption {\bf H}$_1$ and its definition (equations~(\ref{def:vj}) and~\ref{def:G})), the spectrum of $F(t)$ can be inferred from that of $F_0(t)$ by  perturbation theory.  Hence $F(t)$ has  eigenvalues located in small discs ${\mathcal B}_j^\pm$  centered at  $\pm\lambda_j(t,[\omega(t)])$ and in a disk ${\mathcal B}_0$ with center~$0$. One can assume that these disks are of same radius~$r>0$ and that they do not intersect. Besides
\begin{itemize} 
\item in~${\mathcal B}_j^\pm$, $F(t)$ has as many eigenvalues (counted with multiplicity) as the 
multiplicity of $\lambda_j(t,[\omega(t)])$ as an eigenvalue of $F_0(t)$, and in case the multiplicity is infinite, there are only finitely many eigenvalues of $F(t)$ in ${\mathcal B}_j^\pm$ that differ from $\lambda_j(t,[\omega(t)])$,
\item in ${\mathcal B}_0$, $F(t)$ has at most two eigenvalues (counted with multiplicity). 
\end{itemize}
We are going to use symmetry considerations to prove that these eigenvalues are real-valued and have the same symmetry properties as those of $F_0(t)$.

\begin{rem}\label{rem:appendixA}
 We develop in Appendix~A an argument showing that the spectrum of $F(t)$ is not necessarily real if $H(t,x)$ is real, in order to motivate the assumptions  that its eigenvalues are simple. 
\end{rem}

\proof
a) We start by considering the spectrum of $F(t)$ in a neighbourhood of zero. For any $z\in {\mathcal B}_0\setminus\{0\}$, we can write
\begin{align}
F(t)-z = (F_0(t)-z)\left[\un + (F_0(t)-z)^{-1}G(t)\right].
\end{align}
Introducing the spectral projector  $\tilde P_0(t)$ associated with the doubly degenerate eigenvalue zero of $F_0(t)$ and the corresponding reduced resolvent
acting on $\tilde Q_0(t)(\cH\times\cH)$,
 $\tilde Q_0(t)=\un -\tilde P_0(t)$, we have for $z\in {\mathcal B}_0\setminus\{0\}$,
\be\label{resolvanteF_0} 
(F_0(t)-z)^{-1}=-\frac{\tilde P_0(t)}{z}+(F_0(t)-z)^{-1}_{\tilde Q_0(t)},
\ee
where we denote by $A_{\tilde Q_0}$ the restriction of the operator $A$ to the range of ${\tilde Q_0}$. 
Since 
 {
$$
\tilde P_0(t)=\begin{pmatrix}
|\omega(t)\ket\bra \omega(t) | & 0 \cr
0 & |{\omega}(t)\ket \bra { \omega }(t)| 
\end{pmatrix}
$$
}
and $\bra\omega(t) | v_j(t)\ket\equiv 0$ for all $1\leq j\leq p$, see (\ref{vjperp}), we get $\tilde P_0(t)G(t)\equiv 0$ so that,
\be\label{pertresF}
(F(t)-z)^{-1}=\left[\un + (F_0(t)-z)^{-1}_{\tilde Q_0(t)}G(t)\right]^{-1}(F_0(t)-z)^{-1}.
\ee
Indeed, the reduced resolvent is analytic in $z\in {\mathcal B}_0$ and $\|G(t)\|=2\delta$, so for $\delta_0$ small enough, the square bracket is invertible. Therefore, the only singularity of the resolvent of $F(t)$ lies at $z=0$, which remains a doubly degenerate eigenvalue after perturbation. The corresponding spectral projector is 
\be\label{pertP_0}
{\mathbb P}_0(t)=-\frac{1}{2i\pi}\int_{\partial {\mathcal B}_0}(F(t)-z)^{-1}dz=\left[\un + F_0(t)_{\tilde Q_0(t)}^{-1}G(t)\right]^{-1}\tilde P_0(t).
\ee
and, in view of~(\ref{resolvanteF_0}) and~(\ref{pertresF}), the corresponding eigennilpotent ${\mathbb N}_0(t)=F(t){\mathbb P}_0(t)$ writes, (see~\cite{K2} Chapt. III,~$\S$5)
\begin{align}
{\mathbb N}_0(t)& 
=-\frac{1}{2i\pi}\int_{\partial {\mathcal B}_0}z(F(t)-z)^{-1} dz \\ \nonumber
&=-\frac{1}{2i\pi}\int_{\partial {\mathcal B}_0}\left[\un + (F_0(t)-z)^{-1}_{\tilde Q_0(t)}G(t)\right]^{-1}\left[-{\tilde P_0(t)}+(F_0(t)-z)^{-1}_{\tilde Q_0(t)}z\right] dz.
\end{align}
Since the integrand is analytic in $ {\mathcal B}_0$, we get that ${\mathbb N}_0(t)\equiv 0$, which ends the proof of a) of Lemma~\ref{lem:specft2}.

\medskip

b) The perturbation $G(t)$ being of finite rank, we compute the  Aronszajn-Weinstein determinant (\cite{K2}, p. 245) which reads in our case for all $z\in \rho(F_0(t))$, the resolvent set of $F_0(t)$,
\begin{align}\label{AWdet}
w(z)&=\det (\delta_{j,k} + \langle \nu_k(t)  |(F_0(t)-z)^{-1}\mu_j(t) \rangle )_{1\leq j,k \leq  p}\\ \nonumber
&=\det (\delta_{j,k} +  \bar \omega_k\langle e_k(t)  |(H(t)-z)^{-1}v_j(t) \rangle + \omega_k \langle (H(t)+\bar z)^{-1}v_j(t) | e_k(t) \rangle)_{1\leq j,k \leq  p}.
\end{align}
It follows that $w(z)=\overline{w(-\bar z)}$ for all $z\in {\mathcal B}_0$. Since the zeros of $w(z)$ yield the eigenvalues of $F(t)$ in $\rho(F_0(t))$, we obtain
$$
z\in \sigma(F(t))\cap \rho(F_0(t)) \Leftrightarrow -\bar z\in \sigma(F(t))\cap \rho(F_0(t)).
$$
Since $\overline{H} (t,x)=H(t,x)$, we deduce
\begin{align}\label{AWdetsym}
w(z)&=\det (\delta_{j,k} +\omega_k\langle e_k(t)  |((H(t)-z)^{-1}+(H(t)+z)^{-1})v_j(t) \rangle)_{1\leq j,k \leq  p}=w(-z).
\end{align}
It follows then that 
\be\label{symspec}
z\in \sigma(F(t))\cap \rho(F_0(t)) \Rightarrow \{z,\bar z, -z, -\bar z\}\in  \sigma(F(t))\cap \rho(F_0(t)).
\ee
The nonzero eigenvalues of $F_0(t)$ being simple by assumption, the same is true by perturbation theory for those of $F(t)$ and 
 (\ref{symspec}) shows they must be real. Moreover, these conclusions hold for any $t\in \cT_0$ under the stated hypotheses.  
 
 \medskip

c) We now assume $p=1$. Let $t\in\cT_0$ fixed and let us drop the time variable. We make use of~(\ref{specdec}), with a possible  relabelling of the eigenvalues due to the shift~(\ref{shiftham}), to write with $N'=N-1$
\be\nonumber
(H-z)^{-1}=\frac{P_0}{-z}+\sum_{j=1}^{N'}\frac{P_j }{\lambda_j-z}, \ \ \mbox{where}  \ \ \lambda_j\neq 0 \ \mbox{if}\ j\geq 1, \ \mbox{and} \  \lambda_0=0.
\ee
Thus, with $p=1$, $z\in \rho(F_0)$, and $P_0v_1=0$,
\be\label{AWpone}
w(z)=1+2\omega_1\sum_{j=1}^{N'} \frac{\lambda_j\bra e_1 | P_j v_1\ket}{\lambda_j^2-z^2}=\frac{\Pi_{k=1}^{N'}(\lambda_k^2-z^2)+
2\omega_1\sum_{j=1}^{N'} \Pi_{k=1 \atop k\neq j}^{N'}(\lambda_k^2-z^2)\lambda_j\bra e_1 | P_j v_1\ket}{\Pi_{k=1}^{N'} (\lambda_k^2-z^2)}.
\ee
The numerator is a polynomial of degree $2N'$ which, by assumption, possesses $2N'$ distinct simple roots in $\rho(F_0)$. These roots being in the neighbourhood of $\sigma(F_0)\setminus \{0\}$ for $\delta$ small, (\ref{symspec}) implies that they are real. This proves $\sigma (F)\subset \R$. 

\medskip

We now consider the eigennilpotents. The potentially nonzero eigennilpotents~${\mathbb N}_{\pm\lambda_j}$ are thus attached to the unperturbed eigenvalues $\pm \lambda_j$ with sufficient multiplicity, {\it i.e.}
with $\dim \tilde P_j\geq 3$
only. For $p=1$ and $z\not\in\R$, the resolvent takes the explicit form
\begin{align}\nonumber
(F-z)^{-1}&=\left\{\un - \frac{(F_0-z)^{-1}|\mu\ket\bra\nu | }{1+\bra \nu |(F_0-z)^{-1} \mu \ket }\right\}(F_0-z)^{-1}\\
\nonumber
&=\left\{\un - \frac{\omega_1}{w(z)}\sum_{j=1}^{N'}\left| \begin{matrix} \frac{P_jv_1}{\lambda_j-z} \\ \frac{P_jv_1}{\lambda_j+z}\end{matrix} \right\rangle 
 \left\langle \begin{matrix} e_1\phantom{\frac{1}{\lambda_j-z} } \\  e_1\phantom{\frac{1}{\lambda_j-z}}  \end{matrix} \hspace{-.8cm}\right | \right\}(F_0(t)-z)^{-1}.
\end{align}
The eigennilpotents are the coefficients, up to a sign, of the poles of order two of the resolvent at the eigenvalues. We consider the nonzero eigenvalue $\lambda_k$ only, $-\lambda_k$ being similar. Using the fact that the numerator $\tilde w(z)$ of $w(z)$ in (\ref{AWpone}) is nonzero at $\lambda_k$ by assumption, we have in a neighbourhood of $\lambda_k$
\be\nonumber
w(z)^{-1}=(\lambda_k-z)(\lambda_k+z)\Pi_{j\neq k}(\lambda_j^2-z^2)/\tilde w(z):=(\lambda_k-z)s_{k}(1+O(\lambda_k-z)),
\ee
with $s_k={2\lambda_k\Pi_{j\neq k}(\lambda_j^2-\lambda_k^2)}/{\tilde w(\lambda_k)}\neq 0$. Hence, for $z$ close to $\lambda_k$,
\begin{align}\nonumber
(F-z)^{-1}&=\left\{\un - \omega_1s_k\left| \begin{matrix} {P_kv_1} \\ 0 \end{matrix} \right\rangle 
 \left\langle \begin{matrix} e_1 \\  e_1 \end{matrix} \right | + O(z-\lambda_k) \right\}\begin{pmatrix}\frac{P_k}{\lambda_k-z} & 0 \\ 0 & 0\end{pmatrix}+ O(1)\\
 \nonumber
 &=\frac{1}{\lambda_k-z}\begin{pmatrix}{P_k}(\un- \omega_1s_k
 | v_1\ket \bra e_1| )P_k & 0 \\ 0 & 0\end{pmatrix}+ O(1).
\end{align}
The absence of pole of order two shows that ${\mathbb N}_{k}=0$, 
 and the computation above further yields 
\be\nonumber
{\mathbb P}_{k}=\begin{pmatrix}{P_k}(\un- \omega_1s_k
 | v_1\ket \bra e_1| )P_k & 0 \\ 0 & 0\end{pmatrix},
\ee
which concludes the proof.

\medskip 

We end the argument by briefly checking that ${\mathbb P}_{k}$ is a projector on ${\mathcal H}\times{\mathcal H}$, or equivalently that  ${P_k}- \omega_1s_k
 P_k| v_1\ket \bra e_1| )P_k$ is a projector on $\cH$. Since $P_k^2=P_k$
 $$
( {P_k}- \omega_1s_k
 P_k| v_1\ket \bra e_1| )P_k)^2={P_k}- 2\omega_1s_k P_k| v_1\ket \bra e_1| )P_k+\omega_1^2s_k^2
 P_k| v_1\ket \bra e_1| P_kv_1\ket \bra e_1| P_k.
 $$
 The right hand side equals  ${P_k}- \omega_1s_k P_k| v_1\ket \bra e_1| )P_k$ if
$ \omega_1s_k\bra e_1| P_kv_1\ket =1.$
 With the definition of $s_k$, this is equivalent to
$ 2\omega_1\lambda_k \Pi_{j\neq k}(\lambda_j^2-\lambda_k^2)\bra e_1| P_kv_1\ket=\tilde w(\lambda_k).$
Now, (\ref{AWpone}) gives
$$
\tilde w (\lambda_k)=\Pi_{l=1}^{N'}(\lambda_l^2-\lambda_k^2)+
2\omega_1\sum_{j=1}^{N'} \Pi_{l=1 \atop l\neq j}^{N'}(\lambda_l^2-\lambda_k^2)\lambda_j\bra e_1 | P_j v_1\ket,
$$
where the first term equals zero, while the only non zero term in the sum corresponds to $j=k$. Altogether, 
$\tilde w (\lambda_k)=2\omega_1\Pi_{l=1 \atop l\neq k}^{N'}(\lambda_l^2-\lambda_k^2)\lambda_k\bra e_1 | P_k v_1\ket$ which 
yields the result. 
 \qed

\subsection{Non-selfadjoint adiabatic estimates}\label{sec:adiab}

We prove here Lemma~\ref{lem:adiab} in a way that naturally adapts to the unbounded setting that we shall consider in Section~\ref{sec:gen}. 

\medskip

\proof
We first note that by the definition of $V^\eps$ and $K$ (see~(\ref{def:Veps}) and~(\ref{paraltrans})), we have 
$$i\eps\partial_t V^\eps(t,s)=W(t) \sum_{j=-N'}^{N'}  \ell_j(t){\mathbb P}_j(0){\rm e} ^{-{i\over \eps} \int_s^t \ell_j(s')ds' }W(s)^{-1} + \eps K(t) V^\eps(t,s).$$
Using~(\ref{commutation}) and ${\mathbb P}_j(0)^2={\mathbb P}_j(0)$, we obtain 
\begin{eqnarray*}
i\eps\partial_t V^\eps(t,s)&=&  \sum_{j=-N'}^{N'} \ell_j(t) {\mathbb P}_j(t)W(t) {\mathbb P}_j(0){\rm e} ^{-{i\over \eps} \int_s^t \ell_j(s')ds' }
 W(s)^{-1} + \eps K(t) V^\eps(t,s) \\
 & = &F(t)  \sum_{j=-N'}^{N'}
  {\mathbb P}_j(t)
W(t) {\mathbb P}_j(0){\rm e} ^{-{i\over \eps} \int_s^t \ell_j(s')ds' } W(s)^{-1} + \eps K(t) V^\eps(t,s),
\end{eqnarray*}
whence 
\be\label{defadiabV}
i\eps\partial_t V^\eps(t,s)=F(t) V^\eps(t,s)+ \eps K(t) V^\eps (t,s).
\ee
We can now  compare $T^\eps(t,s)$ and $V^\eps(t,s)$. Let $\Omega^\eps (t,s)= V^\eps(t,s)^{-1} T^\eps(t,s)$, we have
\be\label{diffomeg}
i\partial_t \Omega^\eps(t,s)=-V^\eps (t,s)^{-1} K(t) T^\eps(t,s)= - (V^\eps(t,s)^{-1} K(t) V^\eps(t,s)) \, \Omega^\eps(t,s),
\ee
or, equivalently
\be\label{intomeg}
\Omega^\eps(t,s) ={\rm Id} +i\int_s^t V^\eps(t',s)^{-1} K(t') V^\eps(t',s)\Omega^\eps (t',s) dt'.
\ee
With the shorthand $\tilde K(t')=W^{-1}(t')K(t')W(t')$, we have 
$$
V^\eps(t',s)^{-1} K(t') V^\eps(t',s)=W(s)\Phi^\eps(s,t')\tilde K(t')\Phi^\eps(t',s)W^{-1}(s)
$$
and 
$${\mathbb P}_j(0) \tilde K(t') {\mathbb P}_k(0) = i (1-\delta_{j,k}){\mathbb P}_j(0) \tilde K(t') {\mathbb P}_k(0) .$$
Therefore, for any $j$, 
\be\label{shiftint}
 {\mathbb P}_j(s)\Omega(t,s)= {\mathbb P}_j(s)+iW(s)\int_s^t  {\mathbb P}_j(0)e^{{i\over\eps}\int_s^{t'}\ell_j(u)du}\tilde K(t')(\un - {\mathbb P}_j(0))\Phi^\eps(t',s)W^{-1}(s)\Omega^\eps (t',s) dt'.
\ee
Now, observe that,
\be\label{redeqdiff}
i\eps\partial_{t'} e^{{i\over\eps}\int_s^{t'}\ell_j(u)du}\Phi^\eps(t',s)=\tilde F_j(t')e^{{i\over\eps}\int_s^{t'}\ell_j(u)du}\Phi^\eps(t',s)
\ee
where
\be\label{defftilde}
\tilde F_j(t)=\sum_{k} {\mathbb P}_k(0)(\ell_k(t)-\ell_j(t))=\tilde F(t)-\ell_j(t)\un 
\ee
is invertible on $(\un-{\mathbb P}_j(0)\cH\times\cH$, with reduced resolvent we denote by
$$
\tilde R_j(t):=\tilde F_j^{-1}(t)|_{\un - {\mathbb P}_j(0)}=\sum_{k \atop k\neq j} {\mathbb P}_k(0)/(\ell_k(t)-\ell_j(t)).
$$
 Thus the integrand in (\ref{shiftint}) reads, using (\ref{diffomeg}) in the last step,
\begin{eqnarray*}
I:&=&e^{{i\over\eps} \int_s^{t'}\ell_j(u)du/}\tilde K(t')(\un - {\mathbb P}_j(0))\Phi^\eps(t',s)W^{-1}(s)\Omega^\eps (t',s)\\
&=&
\tilde K(t')\tilde R_j(t')\tilde F_j(t')e^{{i\over\eps}\int_s^{t'}\ell_j(u)du}\Phi^\eps(t',s)W^{-1}(s)\Omega^\eps (t',s)\\
&=&\tilde K(t')\tilde R_j(t')\{i\eps\partial_{t'} e^{{i\over\eps}\int_s^{t'}\ell_j(u)du}\Phi^\eps(t',s)\}W^{-1}(s)\Omega^\eps (t',s).
\end{eqnarray*}
We deduce 
\begin{eqnarray}
\nonumber 
I&=&i\eps\partial_{t'} \left\{\tilde K(t')\tilde R_j(t')e^{{i\over\eps}\int_s^{t'}\ell_j(u)du}\Phi^\eps(t',s)W^{-1}(s)\Omega^\eps (t',s)\right\}\nonumber\\
&&-i\eps\partial_{t'} \{\tilde K(t')\tilde R_j(t')\}e^{{i\over\eps}\int_s^{t'}\ell_j(u)du}\Phi^\eps(t',s)W^{-1}(s)\Omega^\eps (t',s)\nonumber\\
\label{ippformul}
&&+\eps \tilde K(t')\tilde R_j(t')e^{{i\over\eps}\int_s^{t'}\ell_j(u)du}\tilde K(t')\Phi^\eps(t',s)W^{-1}(s)\Omega^\eps(t',s).
\end{eqnarray}
Note that thanks to  our spectral hypothesis, we have 
$$
\sup_{t\in\overline{\mathcal T}} \{\| \tilde R_j(t) \|, \| \partial_t \tilde R_j(t) \| \}  <c_0
$$
for some constant $c_0$. 
We can thus integrate (\ref{shiftint}) by parts to get the existence of a constant $c>0$ (that may change from line to line below) such that for all $t,s\in\overline{\mathcal T}$
\begin{align}
\|  {\mathbb P}_j(s)\Omega(t,s)- {\mathbb P}_j(s)\| \leq c\, c_0  \,\eps ||| \Omega |||,
\end{align}
where $||| \Omega |||=\sup_{(s,t)\in \overline \cT}\|\Omega(t,s)\|$. Therefore, 
$$
\sup_{(s,t)\in \overline \cT}\| \Omega(t,s)-\un \|^2\leq c\, c_0^2  \,\eps^2 ||| \Omega |||^2\leq c \,\eps^2 \Big(||| \Omega -\un |||^2+ ||| \un |||^2\Big),
$$
from which we get the existence of  $\eps_0>0$, independent of $t$, such that $\eps <\eps_0$ implies 
$$ ||| \Omega -\un |||=O(\eps).$$
Hence we infer the sought for bounds
$$
\Omega^\eps(t,s)= V^\eps(t,s)^{-1} T^\eps(t,s)={\rm Id} + O_{t,s}(\eps), \ \ \mbox{and} \ \ T^\eps(t,s)=O_{t,s}(1).
$$
\qed

\medskip 

\noindent Let us now prove  {Corollary}~\ref{lem:dotomega}. 
\medskip 

 \proof
Set $\chi_\omega(s)=\begin{pmatrix} \dot \omega(s) \\ \dot {\overline \omega}(s)\end{pmatrix}$ and 
recall that
$$\tilde P_0(s)\chi_\omega(s)= \tilde P_0(s) \begin{pmatrix} \dot \omega(s) \\ \dot { \overline \omega}(s)\end{pmatrix}\equiv 0.$$
Therefore, the perturbed projector ${\mathbb P}_0(s)$ associated to the kernel of $F(s)$ given by (\ref{pertP_0}) satisfies
\begin{equation*}
{\mathbb P}_0(s)\chi_\omega(s)
=\left[\un + F_0(s)_{\tilde Q_0(s)}^{-1}G(s)\right]^{-1}\tilde P_0(s)\chi_\omega(s)
\equiv 0.
\end{equation*}
Hence, writing $\tilde F(s)=W^{-1}(s)F(s)W(s)$, we have
$$
V^\eps(t,s)\chi_\omega(s)=W(t)\Phi^\eps(t,s)(\un-{\mathbb P}_0(0))W^{-1}(s)\chi_\omega(s)=
W(t)\Phi^\eps(t,s)\tilde F(s) (\tilde F(s)^{-1}W^{-1}(s)\chi_\omega(s)),
$$
where $\tilde F(s)^{-1}$ is to be understood as the reduced resolvent of $\tilde F(s)$ acting on $(\un-{\mathbb P}_0(0))\cH\times \cH$.
Thanks to (\ref{diffeqphi}) we can rewrite
\begin{align}
\int_0^t V^\eps(t,s)\chi_\omega(s) ds&=-\i\eps W(t)  \int_0^t \partial_s  \{\Phi^\eps(t,s)  (\tilde F(s)^{-1}W^{-1}(s)\chi_\omega(s))\}ds\\ \nonumber
&=-\i\eps W(t)\{\Phi^\eps(t,s)  (\tilde F(s)^{-1}W^{-1}(s)\chi_\omega(s))\}|^t_0=O_t(\eps).
\end{align}
\qed


\section{Generalization to unbounded operators}\label{sec:gen}

In this section, we prove Theorem~\ref{thm:unbounded}.  {To start with, we focus on the existence of global weak solutions in Section~\ref{sec:existence}. Then, to deal with the adiabatic approximation, we follow the same scheme of proof than in Section~\ref{sec:proofmain}, analyzing the function $\Delta(t)=\psi^\eps(t)-\omega(t)$  that solves a system of the form~\ref{eq:system} but now in the weak sense (see~\eqref{weakdeldot} ). This is explained in Section~\ref{sec:proof(2)}. However,  due to the unboundedness of the operator~$H(t,x)$, several technical points have to be taken care of:
\begin{enumerate}
\item The existence of the propagator associated with the operator $F(t)$ (Section~\ref{sec:proF(t)}),
\item The analysis of the (unbounded) spectrum of $F(t)$ (Section~\ref{sec:specF(t)}) proving an extension of Lemma~\ref{lem:specft2} b) with an infinite number of eigenvalues.
\item The construction of the associated adiabatic approximate propagator and of its properties (Section~\ref{sec:adiaF(t)}). 
\end{enumerate}
We can then conclude the proof of the Theorem~\ref{thm:unbounded} in Section~\ref{sec:conclusion}. 
}

\subsection{Proof of Theorem~\ref{thm:unbounded}(1)}\label{sec:existence}
We prove  the existence of a unique global solution to the nonlinear Schr\"odinger equation (\ref{eq:schroabs}) in the weak sense, {\it i.e.}
for any $\chi\in \cD$, we have equation~(\ref{weakschroabs}), that is 
$$
i\eps\partial_t \bra \chi | \psi ^\eps (t)\ket=\bra (H_0 +W(t,[\psi^\eps(t)]) \chi | \psi ^\eps (t)\ket, \ \ \psi ^\eps (0)= \omega(0).
$$
We denote by ${\rm e}^{-i tH_0}$ the evolution group associated with $H_0$ which 
maps $\cD$ into $\cD$ and is differentiable on $\cD$ only. 
We first consider a solution of~(\ref{eq:schroabs}) as an integral solution, {\it i.e.} a { continuous} 
function $t\mapsto \psi^\eps(t)\in \cH$ such that 
\be\label{intsol}
\forall t\in \cT,\;\; \psi^\eps(t) = {\rm e}^{-{i\over \eps} t H_0} \omega(0) +{1\over i\eps} \int_0^t {\rm e}^{-{i\over \eps} (t-s) H_0}  W(s, [\psi^\eps(s)]) \psi^\eps(s)ds.
\ee
Indeed, 
such a $ \psi^\eps(t)$ satisfies (\ref{weakschroabs}) for all $\chi\in{\mathcal D}$.
Besides, if it does exist, we will  show that  the solution satisfies $\| \psi^\eps(t)\| =\| \omega(0)\|=1$. 

\medskip 

To construct $\psi^\eps(t)$, we  consider $M\geq 1$, $\tau>0$ such that 
$$1+\tau M \sup \| W \| \leq M,\;\;\mbox {and}\;\; \sup (\|W \| + 4M \| d_xW \|)\tau <1, $$  
 the ball $B(0,M)$ of $\cH$  and the map
 $\Phi: {\mathcal C}([0,\tau\eps], \cH)\ra {\mathcal C}([0,\tau\eps], \cH)$ 
 $$\Phi : v(t) \mapsto {\rm e}^{-{i\over \eps} t H_0}\omega(0) +{1\over i\eps} \int_0^t {\rm e}^{-{i\over \eps} (t-s) H_0}  W(s, [v(s)]) v(s) ds.$$
By the choice of $\tau$, $\Phi$ maps  ${\mathcal C}([0,\tau\eps], B(0,M))$ into itself. 
Besides,
$\Phi$ is a contraction:
\begin{align}
\Phi(v)(t)-\Phi(w)(t)&={1\over i\eps} \int_0^t {\rm e}^{-{i\over \eps} (t-s) H_0}  (W(s, [v(s)]) v(s)-W(s, [w(s)]) w(s)) ds \\ \nonumber
&={1\over i\eps} \int_0^t {\rm e}^{-{i\over \eps} (t-s) H_0}  (W(s, [v(s)]) (v(s)-w(s))-(W(s, [v(s)]) - W(s, [w(s)])w(s)) ds,
\end{align} 
hence, uniformly in $t\in\overline{\cT}$, 
$$\|\Phi(v)-\Phi(w)\|\leq  \sup (\|W \| + 4M \| d_xW \|)\tau \| v-w\|$$
with $\sup (\|W \| + 4M \| d_xW \|)\tau<1$.
Therefore, $\Phi$ has a unique fixed point $\psi^\eps(t)\in C([0,\tau\eps],B(0,M))$,  which is the unique integral solution of the equation~(\ref{eq:schroabs}) on $[0,\tau\eps]$. 

\medskip

Now, the vector $\ffi^\eps(t)= {\rm e}^{{i\over \eps} t H_0} \psi^\eps(t)$ satisfies $\forall t\in [0,\tau\eps]$,
\be\label{intrep}
\ffi^\eps(t) = \omega(0) +{1\over i\eps} \int_0^t {\rm e}^{{i\over \eps} s H_0}  W(s, [\psi^\eps(s)]) {\rm e}^{-{i\over \eps} s H_0} \ffi^\eps(s)ds,
\ee
where the integrand is continuous, so that strong differentiation with respect to time is allowed. 
Since the operator ${\rm e}^{{i\over \eps} t H_0}  W(t, [\psi^\eps(t)]) {\rm e}^{{-i\over \eps} t H_0} $
 is self-adjoint, one gets in the usual way that, 
$$
\forall t \in [0,\tau\eps ],\;\; \|\ffi^\eps(t)\|=\|\psi^\eps(t)\|\equiv 1.
$$

\medskip

Observe that the choice of $\tau$ only depends on $\| W\| , \| d_xW\| $ and $M$, and since $\|\psi^\eps(\tau\eps)\|=1$, we can reiterate the same argument on $[\tau\eps, 2\tau\eps]$ starting from the initial data $\psi^\eps(\tau\eps)$ instead of 
$\omega(0)$. One then constructs the unique normalised integral solution of~(\ref{eq:schroabs}) on $[\tau\eps, 2\tau\eps]$, so that $\|\psi^\eps(2\tau\eps)\|=1$. Iterating the process, we see that we have a unique global integral solution of the form (\ref{intsol}) to the equation~(\ref{eq:schroabs}).

\subsection{Preparation of the proof  of Theorem~\ref{thm:unbounded} (2)}  \label{sec:proof(2)}

At this point, we follow the same strategy as in Section~\ref{sec:proofmain}. Here again, 
the gauge invariance manifested in the conclusions of Lemma \ref{lem:elem} holds in this case as well. This allows us to consider
the replacement
$H(t,x)\mapsto H(t,x)-\lambda(t,x)\un$,
keeping the  notation $H(t,x)$ for the shifted Hamiltonian, which admits~$0$ in its spectrum and finitely many negative eigenvalues. 
 We set 
$\Delta(t)=\psi^\eps(t)- \omega(t),$
which solves a system similar to~(\ref{eq:system}), as we now check.
With the definitions  
\be\label{defvj2}
|e_j\rangle \langle e_j| \omega(t)=\omega_j(t) e_j ,\;\; v_j(t)= \partial_{x_j} H(t,[\omega(t)]) \omega(t),
\ee
and for all normalized $\chi\in \cD$, we have, using the smoothness of the bounded operator $W(t,x)$, 
\begin{align*}
\nonumber
i\eps \partial_t \bra\chi | \Delta(t)\ket=& \bra (H_0+W(t,[\psi^\eps(t)]))\chi |\psi^\eps(t)\ket -i\eps \bra \chi |\dot\omega(t)\ket \\
\nonumber
&=\bra (H_0+W(t,[\omega(t)]))\chi | \Delta\ket +2\sum_{j=1}^p\bra \chi | \big( \partial_{x_j}W(t,[\omega(t)])\Re (\omega_j \overline{\Delta_j(t)})\omega(t)\ket 
 \\
 \nonumber
 &\qquad -i\eps \bra \chi |\dot\omega(t)\ket + \bra \chi |r^\eps(t)\ket.
\end{align*}
where 
$r^\eps(t)$ is of order $\|\Delta(t)\|^2$. 
Indeed, it takes 
the form 
\begin{eqnarray*}
 r^\eps(t)
& =&  \sum_{1\leq j, k\leq p}( \Delta_j(t) \overline  \Delta_k(t)  b_{j,k} (t) + \Delta_j(t)  \Delta_k(t)  \tilde b_{j,k} (t) + \overline \Delta_j(t) \overline  \Delta_k(t)  \underline b_{j,k} (t))\\
\nonumber
&& 
  + \sum_{1\leq j\leq p} (\Delta_j(t) B_j(t) \Delta(t) +\overline \Delta_j(t) \tilde B_j(t) \Delta(t)
 \end{eqnarray*}
for some  uniformly bounded vectors $\tilde b_{j,k}(t), \underline b_{j,k}(t), b_{j,k}(t)\in{\mathcal H}$   and  uniformly bounded operators $B_j(t)$ and~$\tilde B_j(t)$  
(which may also depend on $\Delta(t)$ and $\overline\Delta(t)$):
\begin{eqnarray}\label{prop:reps}
 r^\eps(t)
& =& \sum_{1\leq j\leq p} \int_0^1 \partial_{x_j} W(t, [\omega +s\Delta]) (2 \Re (\omega_j\overline \Delta_j) \Delta + |\Delta_j|^2(\Delta+\omega)) ds \\
\nonumber
&&+  \sum_{1\leq j,k\leq p} \int_0^1 (1-s)(2 \Re (\omega_j\overline \Delta_j)) (2\Re (\omega_k\overline \Delta_k)+|\Delta_k|^2)ds\;
\partial^2_{x_j,x_k} W(t,[\omega+s\Delta])  \omega ds.
\end{eqnarray}
Besides,  $i\eps \partial_t \bra\chi | \overline{\Delta(t)}\ket$ satisfies a similar equation corresponding to (\ref{eq:system}). 
Thus for the nonlinear problem, we need to consider weak solutions on $\cD\times \cD$  of the coupled equations: 
For any $(\chi_1, \chi_2)\in \cD\times \cD$, 
\be\label{weakdeldot}
i\eps \partial_t \left\langle \begin{pmatrix} \chi_1  \\ \chi_2 \end{pmatrix} \Big| \begin{pmatrix} \Delta \\ {\overline \Delta}\end{pmatrix}\right\rangle
= 
-i\eps \left\langle \begin{pmatrix} \chi_1  \\ \chi_2 \end{pmatrix} \Big| \begin{pmatrix} \dot \omega \\ \dot { \omega}\end{pmatrix}\right\rangle
 + \left\langle F^*(t) \begin{pmatrix} \chi_1  \\ \chi_2 \end{pmatrix} \Big|  \begin{pmatrix}   \Delta \\ {\overline \Delta}\end{pmatrix}\right\rangle
 + \left\langle \begin{pmatrix} \chi_1  \\ \chi_2 \end{pmatrix} \Big| \begin{pmatrix}  r^\eps(t)  \\ -\overline  r^\eps(t)  \end{pmatrix} \right\rangle ,\;\;
 \Delta(0)=0,
 \ee
 with $ r^\eps(t) =O(\| \Delta(t)\|^2)$ 
 and 
 $$\displaylines{
 F(t)= F_0(t) +G(t) \ \mbox{with} \;
 F_0(t) = \begin{pmatrix}  H(t,[\omega(t)]) & 0 \\ 0 & -  H(t,[\omega(t)]) \end{pmatrix},\cr
 \;\; G(t)=\sum_{j=1}^p 
\omega_j(t) \begin{pmatrix}
   |v_j(t)\rangle \langle e_j| &   | v_j(t)\rangle \langle e_j|\\
 -  | v_j(t)\rangle \langle e_j |&-   | v_j(t)\rangle \langle e_j|
 \end{pmatrix}
 = \sum_{j=1}^p   \omega_j (t) \left| \begin{matrix} v_j(t) \\ - v_j (t)\end{matrix} \right\rangle 
 \left\langle \begin{matrix} e_j \\  e_j \end{matrix} \right|  .\cr}$$
 The conjugates do not appear in the definition of $F_0$, $F$ and $G$  since assumption {\bf R$_0$} entails the fact that $H(t,x)$ is real.

\medskip 

 {To analyse  the domain of $F(t)$, it is useful to see $F_0(t)$ as a perturbation of 
$F_0 = \begin{pmatrix}  H_0 & 0 \\ 0 & -  H_0 \end{pmatrix}$
by writing  $F_0(t)=F_0+B(t)$ with 
 $B(t) = \begin{pmatrix} W(t,[\omega(t)])   & 0 \\ 0 & -  W(t,[\omega(t)])  \end{pmatrix}$ bounded, self-adjoint and smooth in $t$.
 Indeed, this shows  shows that $F_0(t)$  is self-adjoint on $\tilde \cD:= \cD\times \cD$ and
has domain~$\tilde \cD$, and the same is true for $F(t)$ since $G(t)$ is also bounded.   We will also use this decomposition  to analyse the existence of a two-parameter semigroup associated with $F(t)$. }

\medskip 

  In the next three paragraphs, we develop the arguments of the proof paying attention to the difficulties induced by the fact that $H_0$, and thus~$F(t)$ are unbounded. As a fundamental preliminary, we first prove  the existence of an evolution semigroup propagator associated with the operator $F(t)$.   {Then,  the first step consists in  
  proving that   the (unbounded) spectrum of $F(t)$ consists in real eigenvalues that are all simple, except the eigenvalue zero, 
  and the second step in constructing the associated adiabatic approximate propagator as in Lemma~\ref{lem:adiab} and on its properties.}

\subsection{Existence of a two-parameter semigroup generated by  $F(t)$}\label{sec:proF(t)}

Using the latter remark, we get the following regularity result on the solutions to the linear part of the equation for $({\Delta(t)},  \overline{\Delta(t)})$ in $\cH\times\cH$.

\begin{lem}\label{domT} Let ${\mathcal T}$ be an interval such that $0\in\cT$ and let $F(t)=F_0+B(t)+G(t)$ such that $F_0$ is self-adjoint on $\tilde \cD=\cD\times\cD$ and $B(t)+G(t)$ defined for all $t\in{\mathcal T}$ is ${\mathcal C}^\infty$ and bounded.  Then, the equation 
$$
i\eps\partial_t T^\eps(t,s)= F(t) T^\eps(t,s),\;\; T^\eps(s,s)={\rm Id},
$$
admits a unique  strong solution with values in $\tilde\cD$, that is $C^1$ in time. Moreover, the same is true for the equation
\be\label{diffeqTstar}
i\eps\partial_t {T^\eps(t,s)}^*= - {T^\eps(t,s)}^*F^*(t),\;\; {T^\eps(s,s)}^*={\rm Id}.
\ee
 \end{lem}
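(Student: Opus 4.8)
The strategy is to treat $F(t) = F_0 + B(t) + G(t)$ as a bounded perturbation of the time-independent self-adjoint operator $F_0$, and to invoke the classical theory of linear evolution equations with a fixed domain (as in Kato's theorem on well-posedness, or Reed--Simon vol. II, Theorem X.70, together with the fact that the perturbation is bounded and smooth). First I would note that $F_0$ is self-adjoint on $\tilde\cD = \cD\times\cD$, being the orthogonal sum of $H_0$ and $-H_0$, hence $-iF_0$ generates a strongly continuous unitary group $\e^{-itF_0}$ on $\cH\times\cH$ that leaves $\tilde\cD$ invariant and is differentiable on $\tilde\cD$. The operator $C(t) := B(t) + G(t)$ is bounded, self-adjoint up to the finite-rank non-self-adjoint piece $G(t)$ (but that does not matter for existence), and $\cC^\infty$ in $t$ in the operator-norm topology on a neighbourhood of $0$; in particular $t\mapsto C(t)$ and $t\mapsto \dot C(t)$ are norm-continuous and bounded on $\overline{\cT}$.

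\textbf{Key steps.} With this setup I would proceed as follows. (i) Pass to the interaction picture: seek $T^\eps(t,s) = \e^{-i(t-s)F_0/\eps} \, U^\eps(t,s)$, so that $U^\eps$ should solve the integral equation
\[
U^\eps(t,s) = \mathrm{Id} - \frac{i}{\eps}\int_s^t \e^{i(t'-s)F_0/\eps}\, C(t')\, \e^{-i(t'-s)F_0/\eps}\, U^\eps(t',s)\, dt',
\]
where the kernel $\tilde C(t') := \e^{i(t'-s)F_0/\eps} C(t') \e^{-i(t'-s)F_0/\eps}$ is bounded, norm-continuous in $t'$, with $\sup_{t'\in\overline\cT}\|\tilde C(t')\| = \sup \|C(t')\| < \infty$. (ii) Solve this by the usual Picard/Dyson iteration on $\cL(\cH\times\cH)$: the series $\sum_{n\ge 0} (-i/\eps)^n \int_{s\le t_n\le\cdots\le t_1\le t} \tilde C(t_1)\cdots\tilde C(t_n)\,dt$ converges in operator norm, uniformly on $\overline\cT\times\overline\cT$, giving a unique bounded $U^\eps(t,s)$, strongly continuous and (since $\tilde C$ is continuous) strongly $\cC^1$ in $(t,s)$ with $i\eps\,\partial_t U^\eps(t,s) = \tilde C(t) U^\eps(t,s)$. (iii) Transfer back: $T^\eps(t,s) = \e^{-i(t-s)F_0/\eps} U^\eps(t,s)$ is bounded, satisfies $T^\eps(s,s)=\mathrm{Id}$, the cocycle property $T^\eps(t,r)T^\eps(r,s)=T^\eps(t,s)$, and — because $\e^{-i(t-s)F_0/\eps}$ maps $\tilde\cD$ into $\tilde\cD$ and is strongly differentiable there while $C(t)$ is everywhere-defined and bounded — for any $\psi\in\tilde\cD$ the map $t\mapsto T^\eps(t,s)\psi$ is $\cC^1$ with values in $\tilde\cD$ and solves $i\eps\,\partial_t T^\eps(t,s)\psi = F(t)T^\eps(t,s)\psi$. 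Here I would use that $\e^{-i(t-s)F_0/\eps}$ commutes with $F_0$ and preserves $\tilde\cD$, so the regularity of $U^\eps$ (which is a norm-limit of products of bounded operators mapping $\cH\times\cH$ to itself, but composed with $\e^{\pm itF_0/\eps}$ in a way that respects $\tilde\cD$) suffices to conclude $T^\eps(t,s)\tilde\cD\subset\tilde\cD$; the cleanest route is actually to verify directly that $\e^{i(t-s)F_0/\eps}T^\eps(t,s)\psi = U^\eps(t,s)\psi$ stays in $\tilde\cD$ and is $\cC^1$ there, since $\tilde C(t')\tilde\cD\subset\tilde\cD$ is \emph{not} automatic — this is the one delicate point, handled by observing $C(t')\tilde\cD\subset\tilde\cD$ would be needed, which fails in general, so instead one argues at the level of $T^\eps$: for $\psi\in\tilde\cD$, $t\mapsto T^\eps(t,s)\psi$ lies in $\tilde\cD$ because $F_0 T^\eps(t,s)\psi = i\eps\,\partial_t T^\eps(t,s)\psi - C(t)T^\eps(t,s)\psi$ is well-defined as soon as the right side is, which one bootstraps from the $\cC^1$ regularity of $T^\eps$ in the graph norm of $F_0$, obtained by differentiating the Dyson series. (iv) Uniqueness: if $\tilde T$ is another $\tilde\cD$-valued strong $\cC^1$ solution, then $\partial_t(T^\eps(s,t)\tilde T(t,s)) = 0$ by the product rule (legitimate since both factors are $\cC^1$ and one maps into the domain), so $\tilde T = T^\eps$. (v) For the adjoint equation \eqref{diffeqTstar}: since $F^*(t) = F_0 + B(t) + G^*(t)$ has exactly the same structure ($F_0$ self-adjoint on $\tilde\cD$, bounded smooth perturbation), the identical argument — or equivalently, taking Hilbert-space adjoints in the Dyson series for $T^\eps$ and noting $(\e^{-i(t-s)F_0/\eps})^* = \e^{i(t-s)F_0/\eps}$ — produces the unique strong $\cC^1$, $\tilde\cD$-valued solution, which is ${T^\eps(t,s)}^*$.

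\textbf{Main obstacle.} The only genuine subtlety, as indicated in step (iii), is the domain-invariance statement $T^\eps(t,s)\tilde\cD\subset\tilde\cD$ together with $\cC^1$-in-time regularity \emph{into $\tilde\cD$}: the perturbation $C(t)=B(t)+G(t)$ is bounded but does not preserve $\tilde\cD$, so one cannot naively iterate within $\tilde\cD$. The resolution is standard but must be spelled out: one first builds $T^\eps(t,s)$ as a bounded, strongly continuous, strongly $\cC^1$ family on all of $\cH\times\cH$ via the Dyson series (this only uses boundedness of $C(t)$), then shows that for $\psi\in\tilde\cD$ the vector $T^\eps(t,s)\psi$ remains in $\tilde\cD$ by the interaction-picture identity $\e^{i(t-s)F_0/\eps}T^\eps(t,s)\psi = U^\eps(t,s)\psi$ and the fact that $U^\eps(t,s)$, being a norm-convergent series whose every term is built from $\e^{\pm i\sigma F_0/\eps}$ (which preserve $\tilde\cD$) sandwiching bounded operators, maps $\tilde\cD$ continuously into itself with the graph-norm estimate controlled by $\sup\|C\|$ and $|\cT|/\eps$; equivalently one checks $\sup_t\|F_0 T^\eps(t,s)\psi\| < \infty$ and continuity of $t\mapsto F_0 T^\eps(t,s)\psi$ directly. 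Everything else — convergence of the Dyson series, the cocycle and adjoint properties, uniqueness — is routine. I would therefore organise the write-up so that the boundedness and norm-continuity of $C(t)$ together with self-adjointness of $F_0$ on the \emph{fixed} domain $\tilde\cD$ do all the work, and cite \cite{RS} vol.~II for the underlying linear propagator construction.
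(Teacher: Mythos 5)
Your overall strategy (write $F(t)=F_0+$ bounded smooth perturbation, pass to the interaction picture, invoke a Dyson series) is indeed the structure the paper uses, and you correctly identify domain invariance as the delicate point. However, the way you propose to close that delicate point does not work, and your treatment of the adjoint equation is too quick.

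\textbf{Gap in the domain-invariance step.} You assert that $U^\eps(t,s)$ maps $\tilde\cD$ into itself because ``every term is built from $\e^{\pm i\sigma F_0/\eps}$ (which preserve $\tilde\cD$) sandwiching bounded operators.'' This is exactly backwards: the bounded operator $C(t')=B(t')+G(t')$ in the middle of each sandwich does \emph{not} preserve $\tilde\cD$, so $\tilde C(t')=\e^{i\sigma F_0/\eps}C(t')\e^{-i\sigma F_0/\eps}$ does not preserve $\tilde\cD$ either, and neither does any Dyson term, let alone the series. Likewise, your claim that $T^\eps(t,s)$ is ``strongly $\cC^1$ on all of $\cH\times\cH$'' cannot be right, since $T^\eps(t,s)=\e^{-i(t-s)F_0/\eps}U^\eps(t,s)$ and $\e^{-i(t-s)F_0/\eps}$ is strongly differentiable only on $\tilde\cD$. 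The bootstrap you then describe (``$F_0T^\eps\psi$ is well-defined as soon as $\partial_t T^\eps\psi$ is'') is circular: one cannot differentiate $T^\eps(t,s)\psi$ in $t$ without already knowing $U^\eps(t,s)\psi\in\tilde\cD$. This is why the paper does not try to extract domain invariance from the Dyson series. Instead it appeals directly to Theorem X.70 in~\cite{RS}, vol.~II (see also~\cite{Kr}): the hypotheses—a common dense domain $\tilde\cD$, norm-smooth dependence of $F(t)$ in $t$, and that each fixed $F(t)$ generates a (quasi-)contraction semigroup, the last point obtained by rescaling $F(t)-\|G(t)\|\bbbone$—are exactly what is given, and the theorem delivers the strong $\cC^1$, $\tilde\cD$-valued propagator in one stroke. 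You do mention citing~\cite{RS} at the very end, but your own write-up substitutes a Dyson argument that does not establish the required invariance.

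\textbf{Gap in the adjoint equation.} The paper flags, correctly, that strong differentiability of $T^\eps(t,s)$ on $\tilde\cD$ does \emph{not} pass automatically to $T^\eps(t,s)^*$: taking adjoints can destroy strong continuity, and the operator equation~\eqref{diffeqTstar} has $F^*(t)$ acting on the right with a sign flip, so it is not simply the forward equation for the adjoint generator. Your first suggestion (``the identical argument applied to $F^*(t)$'') constructs a propagator $S^\eps(t,s)$ with $i\eps\partial_tS^\eps(t,s)=F^*(t)S^\eps(t,s)$; one would still have to identify $T^\eps(t,s)^*$ with $S^\eps(s,t)$, which requires a separate conservation argument
$\partial_t\langle S^\eps(t,s)\varphi, T^\eps(t,s)\psi\rangle=0$ for $\varphi,\psi\in\tilde\cD$ that you do not supply. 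Your second suggestion (``take Hilbert-space adjoints in the Dyson series'') is the paper's route, but it needs the explicit inductive work the paper does: verify norm continuity of $t\mapsto{\Theta^\eps_j(t,s)}^*$, derive the integral recursion for ${\Theta^\eps_j(t,s)}^*\psi$, pass to the limit, and only then transfer back through $\e^{\pm it F_0/\eps}$ to get differentiability of $T^\eps(t,s)^*$ on $\tilde\cD$. Waving at ``identical argument'' or ``take adjoints'' skips the part of the proof the lemma is actually about.
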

 
 \begin{proof} The first statement follows from Thm X.70 in \cite{RS}, see also \cite{Kr}: the regularity assumption in time of $F(t)$ is satisfied thanks to {\bf R$_{1}$} so we need to show that for all fixed $t\in \cT_0$, $F(t)$ generates a contraction semigroup on $\cH\times \cH$. The operator $F_0(t)$ being self-adjoint on $\tilde\cD$, it generates a unitary group on $\cH\times \cH$. Since $G(t)$ is bounded, $F(t)=F_0(t)+G(t)$ generates a strongly continuous semigroup $S(s)_{s\geq t}$ (see Thm III.1.3 in \cite{EN}) which satisfies  $\|S(s)\|\leq {\rm e}^{\| G(t) \|s}$ in the operator norm of $\cH\times\cH$. By rescaling, $F(t)-\| G(t) \|\un$, defined on $\tilde\cD$, generates a contraction semigroup, so that Thm X.70 in \cite{RS} applies and the first statement follows.

\medskip 

 Since the existence of a strong derivative of $T(t,s)$ on $\tilde\cD$ does not imply directly the same for ${T(t,s)}^*$, we resort to the following decomposition: we write again $F(t)=F_0+A(t)$, where $A(t)=B(t)+G(t)$ and define the bounded operator $\Theta^\eps$ by
 $$
 \Theta^\eps(t,s)=e^{{i\over \eps}t F_0}T^\eps(t,s)e^{-{i\over \eps}s F_0}, \ \ \mbox{s.t.} \ \  \Theta^\eps(t,s)^{-1}= \Theta^\eps(s,t).
 $$
It satisfies the strong differential equation on $\cH\times\cH$
$$
i\eps\partial_t \Theta^\eps(t,s)= \tilde A^\eps(t) \Theta^\eps(t,s),\;\; \Theta^\eps(s,s)={\rm Id}, \ \ \mbox{with} \ \ \tilde A^\eps(t) = e^{{i\over \eps}t F_0}A(t)e^{-{i\over \eps}t F_0}.
$$
The generator $\tilde A^\eps(t)$ is strongly continuous on $\tilde\cD$ and  satisfies $\| \tilde A^\eps(t) \| = \|A(t)\|$ for all $t\in \bar \cT$. Hence we
can write $ \Theta^\eps(t,s)$ as a norm convergent Dyson series, uniformly in $t\in \bar \cT$, where the integrals are understood in the strong sense
$$
 \Theta^\eps(t,s)=\sum_{j\in\N} \Theta^\eps_j(t,s), \ \  \Theta^\eps_j(t,s)=\left(-{i\over \eps}\right)^j\int_s^t \int_s^{u_j}\dots \int_s^{u_2}\tilde A^\eps(u_j)\tilde A^\eps(u_{j-1})\dots \tilde A^\eps(u_1) du_{1}\dots du_{j-1}du_j. 
$$
The relation for $j\geq 1$,
$$
\Theta^\eps_j(t,s)=-{i\over \eps}\int_s^t \tilde A^\eps(u) \Theta^\eps_{j-1}(u,s)  du
$$
allows to prove by induction that $ t\mapsto \Theta^\eps_j(t,s)$ is continuous in norm and, for all $\ffi\in\cH\times\cH$
$$
i\eps\partial_t \Theta^\eps_j(t,s)\ffi = \tilde A^\eps(t) \Theta^\eps_{j-1}(t,s) \ffi.
$$
Hence $ t\mapsto {\Theta^\eps_j(t,s)}^*$ is norm continuous as well, and the same is true for $ {\Theta^\eps(t,s)}^*=\sum_{j\in\N} {\Theta^\eps_j(t,s)}^*$. 
Moreover, ${\Theta^\eps_j(t,s)}^*\psi$, for any $\psi\in\cH\times\cH$,  satisfies for any $\ffi\in\cH\times\cH$
\begin{align}\label{strstar}\nonumber
\bra \ffi | {\Theta^\eps_j(t,s)}^*\psi\ket&=\left\langle -{i\over \eps}\int_s^t \tilde A^\eps(u) \Theta^\eps_{j-1}(u,s) \ffi du \Big| \psi \right\rangle={i\over \eps} \int_s^t \bra 
 \tilde A^\eps(u) \Theta^\eps_{j-1}(u,s) \ffi | \psi \ket du\\
 &={i\over \eps} \int_s^t \bra \ffi |  { \Theta^\eps_{j-1}(u,s)}^* { \tilde A^\eps(u)}^*\psi \ket du.
\end{align}
Since $ { \tilde A^\eps(t)}^*= e^{it F_0/\eps}A^*(t)e^{-it F_0/\eps}$, where $t\mapsto A(t)$ is norm continuous, we get that   $t\mapsto  { \tilde A^\eps(t)}^*$ is strongly continuous, see e.g. \cite{Kr},  and so is $t\mapsto  { \Theta^\eps_{j-1}(t,s)}^* { \tilde A^\eps(t)}^*$. Hence we deduce 
from (\ref{strstar}) that for any $\psi\in \cH\times\cH$,
$$
 {\Theta^\eps_j(t,s)}^*\psi={i\over \eps }\int_s^t  { \Theta^\eps_{j-1}(u,s)}^* { \tilde A^\eps(u)}^*\psi du,
$$
 which, as above, implies for all $j\geq 1$ and all $\psi\in \cH\times\cH$,
$$
 i\eps\partial_t {\Theta^\eps_j(t,s)}^*\psi=-  { \Theta^\eps_{j-1}(t,s)}^* { \tilde A^\eps(t)}^*\psi .
$$
 This differential identity allows then to get the key property
$$
  i\eps\partial_t {\Theta^\eps (t,s)}^*\psi=- { \Theta^\eps(t,s)}^* { \tilde A^\eps(t)}^*\psi,
$$
 which derives from the Dyson representation for ${\Theta^\eps (t,s)}^*$. Therefore, $ {T^\eps(t,s)}^*=e^{-is F_0/\eps}{ \Theta^\eps(t,s)}^*e^{it F_0/\eps}$
 is strongly continuously differentiable in $t$ on $\tilde\cD$, since all operators in the composition are, and (\ref{diffeqTstar}) holds. 
 \qed
 \end{proof}

\subsection{The spectrum of $F(t)$}\label{sec:specF(t)}

 {We prove here that 
  the spectrum of $F(t)$ has the required properties
  for $\delta$ small enough.}  
  
  \medskip 
  
In that purpose,   we use that, as a consequence of the hypothesis~{\bf S}$_2$: 
$$\forall (t,x)\in \cT_0\times \cX^p, \;\;\sigma(H(t,x))\cap \sigma(-H(t,x))=\{0\}.$$
Note that  the operator $F_0(t)$ satisfies the assumptions of Theorem~4.15a in~\cite{K2}, with the generalization stated in b) of Remark~4.16a. We deduce that the spectrum of~$F(t)$ consists in a sequence of eigenvalues
\[
\dots<-\ell_{j}(t) <\cdots <-\ell_2(t)<-\ell_1(t)<0<\ell_1(t)<\ell_2(t)<\cdots< \ell_{j}(t)<\dots,
\]
where $\pm\ell_j(t)$ are simple eigenvalues, while $\ell_0(t)\equiv 0$ has multiplicity 2, with zero eigennilpotent. Each 
$\ell_j(t)$ corresponds to a unique eigenvalue of the unperturbed operator $F_0(t)$ determined by $H(t,[\omega(t)])$. We denote 
those corresponding eigenvalues of $F_0(t)$ by $\pm \lambda_j(t)$, {$j\in \N$} (recall that the labelling of the $\lambda_j$s may differ from that
of the eigenvalues of $H$.
Besides, there exists a constant $c$  such that $\forall t\in{\mathcal T}_0, \forall j\in\Z,\;\; $
\begin{align}\label{controlel}
|\ell _{j+1}(t)-\ell_j (t)|\geq c|j|^\alpha, \ \
\ell_j(t)=\ell_j(0)+(\ell_j(t)-\ell_j(0)),  \ \mbox{and }  \ \sup_{j\in\Z}\sup_{t\in \overline{\cT_0}}|\ell_j(t)-\ell_j(0)|<\infty.
\end{align}
Moreover, 
\begin{equation}\label{toto4}
\forall t\in{\mathcal T}_0, \;\; \forall j\in \Z,\;\; |\dot \ell_j(t)|\leq c,
 \end{equation}
which derives from the observation 
$F(t) {\mathbb P}_j(t)=\ell_j(t) {\mathbb P}_j(t)$:  By differentiation, 
$$\dot A(t) {\mathbb P}_j(t) + F(t) \dot {\mathbb P}_j(t)=\dot \ell_j(t) {\mathbb P}_j(t)+\ell_j(t) \dot {\mathbb P}_j(t)$$
whence, 
using ${\mathbb P}_j(t)\dot {\mathbb P}_j(t){\mathbb P}_j(t)=0$, one gets for the rank one projector ${\mathbb P}_j(t)$, $j\neq 0$,
$$\dot \ell_j(t){\mathbb P}_j(t)= {\mathbb P}_j(t) \dot B(t) {\mathbb P}_j(t).$$

\medskip 

The fact that $F(t)$ is a slightly non-selfadjoint operator in the sense of Section V.5 in \cite{K2} allows us to apply Theorem 4.16 in \cite{K2} and Remark 4.17 following it, to get the following spectral decomposition, under our assumption $\alpha>1/2$ in {\bf S}$_1$, and for $\delta_0$ small enough:
\begin{align}\label{decspecub}
F(t)&=\sum_{j=-\infty}^{\infty} \ell_j(t) {\mathbb P}_j(t), \ \mbox{with the convention} \ \ell_{-|j|}(t)=-\ell_{|j|}(t), \ \mbox{where}\\
\label{def:PhijPsij}
{\mathbb P}_j(t)&=|\Psi_j(t)\ket\bra \Phi_j(t)|, \ j\neq 0,  \ \ \ {\mathbb P}_0(t)=\sum_{\sigma=1,2}|\Psi_0^\sigma(t)\ket\bra \Phi_0^\sigma(t)|,
\end{align}
with $\{\Psi_j(t), \Phi_j(t)\}_{j\neq 0}\cup \{\Psi_0^\sigma(t), \Phi_0^\sigma(t)\}_{\sigma\in\{1,2\}}$ a biorthogonal family of vectors, with 
$\|\Psi_j\|=\|\Psi_0^\sigma\|=1$. The sum (\ref{decspecub}) is understood in the strong convergence sense on the time independent domain 
\be\label{dtilde}
\tilde \cD =\{\chi = \sum_{j\in \Z} \alpha_j \Psi_j(0) \ \mbox{s.t.} \ \sum_{j\in \Z} |\alpha_j \ell_j(0)|^2<\infty\} \subset \cH\times \cH.
\ee
Indeed, Theorem 4.16 in \cite{K2} states that the normalised basis $\{\Psi_j(t) \}_{j\in\Z}$ is a Riesz basis, and 
Theorem~3.4.5 in \cite{D}, giving a characterisation of Riesz basis, allows for the explicit description of the domain~$\tilde \cD$. 
In particular, there exist $0<C, M<\infty $ such that for all~$t\in\cT$,
\begin{align}
\label{toto3}
&\big\|\sum_{j\in I} {\mathbb P}_j(t)\big\|\leq M, \ \ \forall I\in \Z,\\
\label{normriesz}
& C^{-1}\|\chi\|^2\leq \sum_{j\in\Z} \|{\mathbb P}_j(t)\chi\|^2 \leq C\|\chi\|^2   ,\ \ \forall \chi\in \cH\times \cH ,
\end{align}
where
$$\displaylines{
\chi=\sum_{j\in\Z \atop j\neq 0}\alpha_j(t)\Psi_j(t) + \sum_{\sigma=1}^2\Psi_0^\sigma(t)\alpha_0^\sigma(t),\cr
 {\mathbb P}_j(t)\chi=\bra \Phi_j(t) | \chi\ket \Psi_j(t),\;\;\forall  j\neq 0,\cr
{\mathbb P}_0(t)\chi=\sum_{\sigma=1}^2\bra \Phi_0^\sigma(t) | \chi\ket\Psi_0^\sigma(t).\cr}$$
Note that the domain
of $H_0$ 
is 
$$\cD=\{\ffi=\sum_{k\in \N} \beta_k \ffi_k\ \mbox{s.t.} \ \sum_{k\in \N} |\beta_j \lambda_k|^2<\infty\},$$ 
where $(\lambda_k, \ffi_k)$ are the eigenvectors and eigenvalues of $H_0$.
The reader can refer to the paper~\cite{GZ}, for example, in which Riesz spectral systems are studied.

\subsection{The  adiabatic propagator and its properties} \label{sec:adiaF(t)}

 {We now focus on the construction of the adiabatic propagator as in Lemma~\ref{lem:adiab}. Since its proof follows that of the bounded case, 
we only have to focus on domain issues.}

\medskip 
 
 {In view of what we have done in the previous sections}, we can define, as in the bounded case,
the dynamical phase operator~$\Phi^\eps(t,s)$ (see~(\ref{dynpha}) and~(\ref{paraltrans}))
\be\label{dynpha2}
\Phi^\eps(t,s)= \sum_{j=-\infty}^{\infty} {\mathbb P}_j(0) {\rm e}^{-{i\over \eps}\int_s^t \ell_j(\sigma) d\sigma}, \ \mbox{s.t.} \ \Phi^\eps(t,s)^{-1}=\Phi^\eps(s,t),
\ee
which is a family of uniformly bounded operators that map $\tilde \cD $ on $\tilde \cD $, thanks to  (\ref{normriesz}). 
At this point, further making use of (\ref{controlel}) and of the fact that $|(e^{ix}-1)/x|$ is uniformly bounded in $x\in \R$, one sees by a dominated convergence argument that $t\mapsto \Phi^\eps(t,s)$ is also a strongly continuously differentiable two-parameter evolution operator on $\tilde \cD $, where (\ref{diffeqphi}) holds. 

\medskip 

We also define the intertwining operator $W(t)$ given by
\be\label{paraltrans2}
i\partial_t W(t) = K(t) W(t) ,\;\;W(0)={\rm Id},\;\;
{\rm with}\;\; 
K(t)=i \sum_{j=-\infty}^{\infty} \dot {\mathbb P}_j(t){\mathbb P}_j(t) .
\ee
It is shown in Proposition~3.1 and Lemma~3.2 of \cite{J2}, 
that as soon as $\alpha >0$, $K(t)$ is well defined, ${\mathcal C}^\infty$, and $W(t)$ satisfies the intertwining property (\ref{commutation}) with each of the projectors.

\medskip 

Actually, theses properties of~$W$ are shown in~\cite{J2}  for orthogonal projectors~${\mathbb P}_j(t)$. However, as a routine inspection reveals, the proofs hold {\it mutatis mutandis} in the non selfadjoint case, provided the growing gap assumption {\bf S} holds, and the 
resolvent $(F(t)-z)^{-1}$ can be bounded in an approximate way by the inverse of the distance to the spectrum. Our perturbative framework, characterised by $\delta_0$
 small ensures that this is the case. 

 \medskip

We then introduce the bounded family of operators 
\begin{equation}\label{def:Veps2}
V^\eps(t,s)=W(t) \Phi^\eps(t,s) W(s)^{-1},
\end{equation}
which map $\tilde \cD $ on $\tilde \cD $ and satisfy $V^\eps(t,s)^{-1}=V^\eps(s,t)$, together with
\begin{align}\nonumber
&V^\eps(t,s){\mathbb P}_j(s)={\mathbb P}_j(t)V^\eps(t,s)= W(t) {\mathbb P}_j(0) {\rm e}^{-{i\over \eps}\int_s^t \ell_j(\sigma) d\sigma} W(s)^{-1}.
\end{align}
The latter intertwining property implies that $W(t)$ maps $\tilde \cD $ on $\tilde \cD $: 
From (\ref{decspecub})  and the definition of~$\Phi_j(t)$ and~$\Psi_j(t)$ in~(\ref{def:PhijPsij}),  for $j\neq 0$, 
$$
W(t)|\Psi_j(0)\ket\bra \Phi_j(0)|=|\Psi_j(t)\ket\bra \Phi_j(t)| W(t) \ \Rightarrow \ W(t)\Psi_j(0)=\Psi_j(t) \bra \Phi_j(t)| W(t)\Psi_j(0)\ket,
$$
so that we have the following property: if $\chi\in \tilde D$, see (\ref{dtilde}), with coefficient $\alpha_j=\bra \Phi_j(0)| \chi\ket$, $j\neq 0$, in the basis at time $0$, then $W(t)\chi$ has an expansion in the basis at time $t$ with coefficients $\alpha_j(t)=\bra \Phi_j(t)|W(t)\Psi_j(0)\ket \alpha_j$, $j\neq 0$, where 
$|\bra \Phi_j(t)|W(t)\Psi_j(0)\ket|$ is uniformly bounded in $j\neq 0$, thanks to (\ref{toto3}).

\medskip

We now describe the adjustments requested to argue as in Section~\ref{sec:adiab} to prove the analogue of Lemma \ref{lem:adiab}, that is 
$$T^\eps(t,s)= V^\eps(t,s) +O_{t,s} (\eps).$$
We recall that the differential equation (\ref{semiF}) has to be understood in the strong sense on $\tilde \cD$, and~$T^\eps(t,s)$ is ${\mathcal C}^1$ on~$\tilde \cD$ and maps  $\tilde \cD$ on  $\tilde \cD$, according to Lemma \ref{domT}. 
Analogously, $V^\eps(t,s)$ satisfies~(\ref{defadiabV}) in the strong sense on $\tilde D$, and the same holds for $\Omega^\eps(t,s)$ defined by (\ref{diffomeg}). Then, integration by parts on the integrand of (\ref{diffomeg}) is to be understood in the strong sense, on vectors of $\tilde D$.
To deal with (\ref{shiftint}),  one notes that 
(\ref{redeqdiff}) holds  in the strong sense on $\tilde \cD$, with $\tilde F_j(t)=\tilde F(t)-\ell_j(t)\un$ the closed operator on $\tilde D$ obtained by extending the summation to $k\in\Z$ in (\ref{defftilde}). Similarly, its reduced resolvent on $(\un-{\mathbb P}_j(0))\cH\times\cH$ simply  reads
$\tilde R_j(t)=\sum_{k\in\Z\atop k\neq j} {\mathbb P}_k(0)/(\ell_k(t)-\ell_j(t)).$ Note that thanks to (\ref{normriesz}) and the spectral behaviours (\ref{controlel}) and (\ref{toto4}), we have with the notation $\bra j\ket=(1+j^2)^{1/2}$
$$
\max \{\| \tilde R_j(t) \|, \| \partial_t \tilde R_j(t) \| \}  \leq c\bra j\ket^{-\alpha}, 
$$
for some constant $c$ uniform in $t\in \overline{\cT}$, that may change from line to line below.
Using this estimate in the integration by parts formula (\ref{ippformul}) we now get 
\begin{align}
\|  {\mathbb P}_j(s)\Omega(t,s)- {\mathbb P}_j(s)\| \leq c \,{\eps}\,\bra j\ket^{-\alpha}||| \Omega |||,
\end{align}
where $||| \Omega |||=\sup_{(s,t)\in \overline \cT}\|\Omega(t,s)\|$. Therefore, since $2\alpha >1$,
$$
\sup_{(s,t)\in \overline \cT}\| \Omega(t,s)-\un \|^2\leq c \,\eps^2 ||| \Omega |||^2,
$$
from which we get, as in Section~\ref{sec:adiab}, that for $\eps < \eps_0$, $\eps_0$ independent of $t$, 
$$
 ||| \Omega -\un |||=O(\eps).
$$
In turn, this proves Lemma \ref{lem:adiab} in our current unbounded context.

\medskip 

Given the observations above, we also note that the arguments used in proof of  {Corollary}~\ref{lem:dotomega} are valid in the unbounded case as well.

\subsection{Conclusion of the proof of Theorem~\ref{thm:unbounded}\ (2)} \label{sec:conclusion}

We set 
$$\delta^\eps_\tau= \sup_{t\in [0,\tau]} \, \sup_{\chi\in\cD,\,\|\chi\|=1}| \bra \chi|\Delta(t)\ket| = \sup_{t\in [0,\tau]} \,  \| \Delta(t) \|.$$
In particular, since $e_j\in\cD$ for all $j\in\{1,\cdots,p\}$, we have 
$$\forall t\in[0,\tau],\;\; |\Delta_j(t)| \leq \delta^\eps_\tau.$$  
Besides, for any  family of bounded operators  $C(t)$ on $\cH\times\cH$, for $\begin{pmatrix} \chi_1 \\ \chi_2 \end{pmatrix}\in  \tilde \cD $ normalized  and for $0\leq s\leq t\leq \tau$, using (\ref{lem:adiab}) 
$$\left|\left\langle T^\eps(t,s)^* \begin{pmatrix} \chi_1 \\ \chi_2 \end{pmatrix}\, \Big| C(s) \begin{pmatrix} \Delta(t)\\ \bar \Delta(t) \end{pmatrix} \right\rangle \right| \leq \sup_{s\in[0,\tau] } \| T^\eps(t,s) C(s)\|  \sqrt 2 \delta_\tau^\eps \equiv \theta \sup_{s\in[0,\tau] } \| C(s)\| \delta_\tau^\eps.$$
We then deduce from~(\ref{prop:reps}) that there exists a constant $b>0$ such that for any $\begin{pmatrix} \chi_1 \\ \chi_2 \end{pmatrix}\in  \tilde \cD $ and $0\leq s\leq t\leq \tau$,
\begin{equation}\label{estimateb}
| \bra \chi_1| r^\eps(s)\ket |+ | \bra \chi_2  |  \overline r^\eps(s) \ket |\leq  b ( \delta^\eps_\tau)^2.
\end{equation}
We observe that $T^\eps(s,t)^*= (T^\eps(t,s)^{-1})^*$ satisfies in the strong sense on $\tilde \cD $
$$i\eps\partial_t (T^\eps(s,t)^*) = F(t)^* T^\eps(s,t)^*,\;\; T^\eps(s,s)^*=\un.$$
In view of (\ref{weakdeldot}),  
for any $\chi_1,\chi_2\in\cD$, 
$$\displaylines{
i\eps \partial_t \left\langle T^\eps(s,t)^* 
\begin{pmatrix} \chi_1 \\ \chi_2 \end{pmatrix} \Big|
\begin{pmatrix} \Delta(t)\\ \bar \Delta(t) \end{pmatrix}
\right\rangle
= - \left\langle  F(t) ^*T^\eps(s,t)^* 
\begin{pmatrix} \chi_1 \\ \chi_2  \end{pmatrix} \Big| 
\begin{pmatrix} \Delta(t)\\ \bar \Delta(t) \end{pmatrix}
\right\rangle
\hfill\cr\hfill
+ \left\langle  F(t)^* T^\eps(s,t)^* 
\begin{pmatrix} \chi_1 \\ \chi_2  \end{pmatrix} \Big| 
\begin{pmatrix} \Delta(t)\\ \bar \Delta(t) \end{pmatrix}
\right\rangle
+\left\langle T^\eps(s,t)^* 
\begin{pmatrix} \chi_1 \\ \chi_2  \end{pmatrix} \Big| 
\begin{pmatrix} r^\eps(t)\\ -\overline r^\eps(t) \end{pmatrix} \right\rangle
-i\eps  \left\langle T^\eps(s,t)^* \begin{pmatrix} \chi_1 \\ \chi_2 \end{pmatrix} \Big|
\begin{pmatrix} \dot \omega(t)\\ {\dot  \omega}(t) 
\end{pmatrix}\right\rangle,\cr}$$
where the first term of the right hand side comes from the equation of $T^\eps(s,t)^*$, 
and the second term comes from the fact that $\Delta$ satisfies the equation in the weak sense, 
making use of $T^\eps(s,t)^*:\widetilde \cD\ra \widetilde \cD$.
Therefore, integrating between $0$ and $s$, we obtain 
$$
\left\langle 
\begin{pmatrix}\chi_1 \\ \chi_2  \end{pmatrix} \Big|
\begin{pmatrix} \Delta(s)\\ \bar \Delta(s) \end{pmatrix}
\right\rangle
= {1\over i\eps} \int_0^s 
 \left\langle T^\eps(s,t)^* 
\begin{pmatrix}\chi_1 \\ \chi_2  \end{pmatrix} \Big| \begin{pmatrix} r^\eps(t)\\ -\overline r^\eps(t) \end{pmatrix}\right\rangle dt
- \int_0^s   \left\langle T^\eps(s,t)^* \begin{pmatrix}\chi_1 \\ \chi_2  \end{pmatrix} \Big|
\begin{pmatrix} \dot \omega(t)\\ {\dot  \omega}(t) 
\end{pmatrix}\right\rangle
 dt.
$$
Since $ T^\eps(s,t)^* \begin{pmatrix}\chi_1 \\ \chi_2  \end{pmatrix}\in\cD\times\cD$, we can use estimate~(\ref{estimateb}) for normalised $\begin{pmatrix}\chi_1 \\ \chi_2  \end{pmatrix} $ and there exists $\widetilde b$ such that  
$$\left| \int_0^s 
 \left\langle T^\eps(s,t)^* 
\begin{pmatrix}\chi_1 \\ \chi_2  \end{pmatrix} \Big| 
\begin{pmatrix} r^\eps(t)\\ -\overline r^\eps(t) \end{pmatrix}
\right\rangle dt\right| \leq {\tilde b} |s|  (\delta^s_\eps)^2.$$
Besides, 
\begin{eqnarray*}
\int_0^s   \left\langle T^\eps(s,t)^* \begin{pmatrix}\chi_1 \\ \chi_2  \end{pmatrix} \Big|
\begin{pmatrix} \dot \omega(t)\\ {\dot  \omega}(t) 
\end{pmatrix}\right\rangle
 dt&=&
 \int_0^s   \left\langle \begin{pmatrix}\chi_1 \\ \chi_2  \end{pmatrix} \Big| T^\eps(s,t)
\begin{pmatrix} \dot \omega(t)\\ {\dot  \omega}(t) 
\end{pmatrix}\right\rangle dt\\
& =&  \int_0^s   \left\langle \begin{pmatrix}\chi_1 \\ \chi_2  \end{pmatrix} \Big| V^\eps(s,t)
\begin{pmatrix} \dot \omega(t)\\ {\dot  \omega}(t) 
\end{pmatrix}\right\rangle dt +O_s(\eps) =O_s(\eps)
\end{eqnarray*}
by Lemma~\ref{lem:adiab} and~\ref{lem:dotomega}.
Finally, 
 by 
choosing $\chi_1=\chi$, $\chi_2=0$, we obtain  that there exists constants $a,\widetilde b>0$, uniform in $0\leq s \leq \tau$
$$|\langle \chi, \Delta(s)\rangle| \leq  a\eps + {\tilde b\over \eps} |s|  (\delta^s_\eps)^2, $$
whence
$$\delta^\tau_\eps \leq a\eps + {b\over \eps} \tau (\delta^\tau_\eps)^2,$$
which allows to conclude the proof. \qed


\appendix
\section{Appendix A}
 
According to Remark~\ref{rem:appendixA}, we provide here an argument showing the spectrum of $F(t)$ is not necessarily real if $H(t,x)$ is real.
We consider a smooth Hamiltonian $\mathbb R\times \mathbb R^2 \ni (t,x)\mapsto H(t,x)$ on a Hilbert space~$\cH$, and of the form
\be\label{ex2}
H(t,x)=\lambda_1(t,x)P_1(t,x)+\lambda_2(t,x)P_2(t,x) \ \ \mbox{with} \ \ \sum_{j=1}^3 P_j(t,x)\equiv \un,
\ee
with the assumption that the eigenvalue $0$ is simple and that the $\lambda_j(t,x)$ are of arbitrary multiplicities ($j=1,2$). With the assumptions of Section~\ref{sec:32},  that means $N'=p=2$ and, dropping the arguments 
$(t,[\omega(t)])$ in the variables, the Aronszajn-Weinstein determinant (\ref{AWdet}) takes the form
$$
w(z)=\det \left(\delta_{jk}+\sum_{l=1}^2 \frac{2\lambda_l\omega_j\bra e_j | P_l v_k\ket}{(\lambda_l-z)(\lambda_l+z)}\right)_{1\leq j,k\leq 2}.
$$
Introducing $q_{12}(z)=(\lambda_1-z)(\lambda_1+z)(\lambda_2-z)(\lambda_2+z)$ and $q_j(z)=(\lambda_j-z)(\lambda_j+z)$, $j=1,2$, we have
$$
w(z)=\frac{1}{q_{12}(z)}\det \left(q_{12}(z)\un + 2\begin{pmatrix}  \omega_1 \bra e_1 | (\sum_{l=1}^2 P_l \lambda_l q_{\bar l}(z)) v_1\ket  & \omega_1 \bra e_1 | (\sum_{l=1}^2 P_l \lambda_l q_{\bar l}(z)) v_2\ket \cr \omega_2 \bra e_2 | (\sum_{l=1}^2 P_l \lambda_l q_{\bar l}(z)) v_1\ket & \omega_2 \bra e_2 | (\sum_{l=1}^2 P_l \lambda_l q_{\bar l}(z)) v_2\ket \end{pmatrix} \right),
$$
where $\bar 1 = 2$ and $\bar 2 =1$. By assumption, all matrix elements are real-valued. If $z_0\in \mathbb R\setminus \{\lambda_1, \lambda_2\}$ is a zero of $w(z)$, that is a real eigenvalue of $F$, that means $-q_{12}(z_0)/2$ is a real nonzero eigenvalue of the matrix
$$
b(z_0)=
\begin{pmatrix}  \omega_1 \bra e_1 | (\sum_{l=1}^2 P_l \lambda_l q_{\bar l}(z_0)) v_1\ket  & \omega_1 \bra e_1 | (\sum_{l=1}^2 P_l \lambda_l q_{\bar l}(z_0)) v_2\ket \cr \omega_2 \bra e_2 | (\sum_{l=1}^2 P_l \lambda_l q_{\bar l}(z_0)) v_1\ket & \omega_2 \bra e_2 | (\sum_{l=1}^2 P_l \lambda_l q_{\bar l}(z_0)) v_2\ket \end{pmatrix} \in M_2(\mathbb R).
$$
This requires $(\tr \ b(z_0))^2-4\det b(z_0)>0$, which is not granted for a generic matrix in $M_2(\R)$. While $b(z_0)$ is not completely arbitrary, it doesn't necessarily possess the symmetries that enforce this, as we argue below. Hence, the existence of nonzero real eigenvalues for $F$ cannot be inferred from the sole requirement that
$H$ is real. 

\medskip

To be more quantitative, assume the eigenvalue $\lambda_2$ of $H(t,x)$ is independent of $(t,x)$. Thus $\omega(t)$ is independent of $\lambda_2$ that we will consider as a large parameter. Consider $t$ fixed and $z_0$ in the vicinity of $\lambda_1(t,[\omega(t)])$, assumed to be of order one. Then, for $\lambda_2>0$ large, we have $q_{12}(z_0)=\lambda_2^2(\lambda_1^2-z_0^2)+O(1)$, $q_2(z_0)=\lambda_2^2+O(1)$, $q_1(z_0)=O(1)$ so that
$
\sum_{l=1}^2 P_l \lambda_l q_{\bar l}(z_0)=P_1\lambda_1\lambda_2^2+O(\lambda_2)
$  
and 
\be\label{leadbz}
b(z_0)=\lambda_1\lambda_2^2 \begin{pmatrix}  \omega_1 \bra e_1 | P_1 v_1\ket  & \omega_1 \bra e_1 |P_1 v_2\ket \cr \omega_2 \bra e_2 | P_1  v_1\ket & \omega_2 \bra e_2 |P_1  v_2\ket \end{pmatrix}+ O(\lambda_2).
\ee
The condition $(\tr \ b(z_0))^2-4\det b(z_0)>0$  for $\lambda_2$ large, is equivalent to saying the $z_0$ independent leading order matrix in (\ref{leadbz}) has real eigenvalues, 
{\it i.e.} to having
\be\label{cotrdet}
(\omega_1 \bra e_1 | P_1 v_1\ket - \omega_2 \bra e_2 | P_1 v_2\ket )^2+4 \omega_1\omega_2 \bra e_1 | P_1 v_2\ket \bra e_2 | P_1 v_1\ket )>0.
\ee
Recall that given $(\omega_1, \omega_2)=[\omega]$,  the operators $H([\omega])$, $P_1([\omega])$ and $\partial_{x_j}H([\omega])$, $j=1,2$ are fixed,
as is $\omega=\ffi([\omega])$. 
Hence, the same is true for
\be\label{uj}
u_j=P_1([\omega])v_j=P_1([\omega])\partial_{x_j}H([\omega])\omega\equiv K_j([\omega])\omega, \ \ \mbox{with} \ \ \bra u_j | \omega \ket=0, \ \ j=1,2,
\ee  
so that (\ref{cotrdet}) reads
\be\label{realev}
(\omega_1 \bra e_1 | u_1\ket - \omega_2 \bra e_2 | u_2\ket )^2+4 \omega_1\omega_2 \bra e_1 | u_2\ket \bra e_2 | u_1\ket )>0.
\ee
For generic vectors $\{e_1, e_2, \omega, u_1, u_2\}$ satisfying (\ref{uj}), the above condition needs not be true.
Actually, for any real unitary operator $R$ such that $R\omega=\omega$, we have $\omega_j=\bra \omega | e_j\ket=\bra \omega | R e_j\ket$, so that
$\{f_1,f_2\}=\{R e_1, R e_2\}$ forms another orthonormal family defining the nonlinearity of the problem, keeping $\omega_j$, $j=1,2$ fixed. It can be shown that if
(\ref{realev}) holds for $\{e_1, e_2, \omega, u_1, u_2\}$, with $\dim({\mathbb C \omega })^\perp \geq 3$, $\omega_j\neq 0$, and $0<|\bra u_1|u_2\ket|<\|u_1\|\|u_2\|$, a real unitary $R$ leaving $\omega$ invariant can be chosen to that (\ref{realev}) is false for $\{f_1, f_2, \omega, u_1, u_2\}$. The idea consists in discussing the restriction of $R$ to $({\mathbb C \omega })^\perp$ so that the orthonormal vectors $\{f_1,f_2\}$ have scalar products with 
$\{u_1,u_2\}$ which make (\ref{realev}) false.

 \section{Appendix B} 
 
Let us look for more general solutions to (\ref{exple}) and prove Lemma~\ref{lem:energycontent}. Reparametrising the time variable  
$t\mapsto s(t)=\int_0^t\gamma(u)du$ and writing $w(s(t))=v(t)$ allows us to get rid of the factor $\gamma(t)$,
$$
i\eps\partial_s \begin{pmatrix} w_1 \cr w_2 \end{pmatrix} = |w_1|^2\begin{pmatrix} w_2 \cr w_1 \end{pmatrix}.  
$$
Writing out $w_1(s)=x(s)+iy(s), w_2(s)=z(s)+it(s)$, we get the equivalent system 
$$
\left\{ \begin{matrix}
\eps \dot x=(x^2+y^2)t \cr
\eps \dot y=-(x^2+y^2)z \cr
\eps \dot z=(x^2+y^2)y \cr
\eps \dot t=-(x^2+y^2)x
\end{matrix}\right.
$$
It is readily checked that the three following expressions are constants of the motion
$$
x^2+t^2, \ \ y^2+z^2, \ \ xz+yt,
$$
so that the system can be solved by quadratures. Refraining from spelling out the solution in full generality, we consider solutions corresponding to the initial conditions
$$
y(0)=t(0)=0, \ \ x(0)>0, z(0)\neq 0.
$$
We get for all $s\in\R$ with $\alpha_\eps(s)=-x(0)z(0)s/\eps$
$$
\left\{ \begin{matrix}
x(s)=\frac{x(0)\cos(\alpha_\eps(s))}{\left(\cos^2(\alpha_\eps(s))+\big(\frac{x(0)}{z(0)}\big)^2\sin^2(\alpha_\eps(s))\right)^{1/2}} \cr
y(s)=\frac{x(0)\sin(\alpha_\eps(s))}{\left(\cos^2(\alpha_\eps(s))+\big(\frac{x(0)}{z(0)}\big)^2\sin^2(\alpha_\eps(s))\right)^{1/2}} 
\end{matrix}\right. , \\ 
\left\{ \begin{matrix}
z(s)=\frac{z(0)\cos(\alpha_\eps(s))}{\left(\cos^2(\alpha_\eps(s))+\big(\frac{x(0)}{z(0)}\big)^2\sin^2(\alpha_\eps(s))\right)^{1/2}} \cr
t(s)=\frac{x^2(0)\sin(\alpha_\eps(s))/z(0)}{\left(\cos^2(\alpha_\eps(s))+\big(\frac{x(0)}{z(0)}\big)^2\sin^2(\alpha_\eps(s))\right)^{1/2}}. 
\end{matrix}\right.
$$
In case $x(0)=1=\pm z(0)$, we recover (\ref{appex}), modulo the reparametrization of the time variable. In all other cases, noting that 
$\Re (w_1\overline{w_2})$ is conserved, we compute in the $s$ variable
$$
E_{w}(s)=2\frac{x(0)^3z(0)}{\cos^2(\alpha_\eps(s))+\big(\frac{x(0)}{z(0)}\big)^2\sin^2(\alpha_\eps(s))},
$$
which gives the result of the Lemma~\ref{lem:energycontent} with $\aleph(t)= \eps \alpha_\eps(s)$, $x(0)= v_1(0)$ and $z(0)= v_2(0)$.


 {


\begin{thebibliography}{99}

\bibitem[A-SF]{A-SF} Abou Salem, W., Fr\"ohlich, J., Adiabatic theorems and reversible isothermal processes. {\it Lett. Math. Phys.} {72}  (2005), p. 153--163.

\bibitem[A]{A} Agrawal, G.P., Applications of Nonlinear Fiber Optics; Academic, SanDiego (2001)

\bibitem[AE]{AE} Avron, J.E., Elgart, A., Adiabatic theorem without a gap condition, {\it Commun. Math. Phys.}, {203} (1999), p. 445--463. 

\bibitem[AFGG]{AFGG} Avron, J.E.,  Fraas, M., Graf, G.M., Grech, P., Adiabatic theorems for generators of contracting evolutions, {\it Commun. Math. Phys.}, {314} (2012), p. 163--191.

\bibitem[AHS]{AHS} Avron, J. E., Howland, J. S., Simon, B., Adiabatic theorems for dense point spectra, {\it Commun. Math. Phys.}, {128} (1990), p. 497--507.

\bibitem[ASY]{ASY} Avron, J.E., Seiler, R., Yaffe, L.G., Adiabatic theorems and applications to the quantum Hall effect, {\it Commun. Math. Phys.}, {110}(1987), p. 33--49.

\bibitem[BDR]{BDF} Bachmann, S.,  De Roeck, W.,  Fraas, M., The adiabatic theorem and linear response theory for extended quantum systems, {\it Commun. Math. Phys.}, {361}(2018), p. 997--1027.

\bibitem[BQ]{BQ} Biao, W. and Qian, N., Nonlinear Landau-Zener tunneling, {\it Phys. Rev. A}, { 61} (2000), p.~023402.

\bibitem[BF]{BF} Born, M., Fock, V., Beweis des Adiabatensatzes. {\it Z. Phys.}, {51} (1928), p. 165--180.

\bibitem[CFK1]{CFK1} Carles, R.  and  Fermanian-Kammerer, C., A nonlinear adiabatic theorem for coherent states, {\it Nonlinearity}, {24} (2011) p. 1--22.

\bibitem[CFK2]{CFK2} Carles, R.  and  Fermanian-Kammerer, C.,  A Nonlinear Landau-Zener formula, {\it J. Stat. Phys.}, {152} (2012), p. 619--656.

\bibitem[D]{D} Davies, E. B., Linear operators and their spectra,  Cambridge studies in advanced mathematics 106, CUP, 2007.

\bibitem[DKS]{DKS} Dranov, A., Kellendonk, J. and Seiler, R., Discrete time adiabatic theorems for quantum mechanical systems, {\it J. Math. Phys.}, {39} (1998), p. 1340--1349.

\bibitem[E]{Evans} Evans, L.C., Partial Differential Equations, Graduate Studies in Mathematics 10, American Mathematical Society, Providence, RI, 1998. 

\bibitem[EN]{EN} Engel, K.-J., Nagel, R., One-parameter semigroups for linear evolution equations, Springer, 2000.

\bibitem[GG]{GG} Gang, Z. and Grech,  P., Adiabatic theorem for the Gross-Pitaevskii equation, {\it Commun. in PDE}, {42} (2017), p. 731--756.

\bibitem[GZ]{GZ} Guo, B. Z., Zwart, H. J. Riesz spectral systems, Memorandum, Department of Applied Mathematics, University of Twente
No. 1594 ISSN	0169-2690 (2001).


\bibitem[HJ]{HJ} Joye, A. and  Hagedorn, G., Recent Results on Non--Adiabatic Transitions in Quantum Mechanics,
{\it AMS Contemporary Mathematics Series}, {412} (2006) p. 183--198.

\bibitem[HJPR1]{HJPR1} Hanson, E. P., Joye, A., Pautrat, Y.,  Raqu\'epas, R.,  Landauer's Principle in Repeated Interaction Systems  {\it Commun. Math. Phys.}, {349} (2017), p. 285--327.


\bibitem[HJPR2]{HJPR2} Hanson, E. P., Joye, A., Pautrat, Y.,  Raqu\'epas, R., Landauer's Principle for Trajectories of Repeated Interaction Systems, {\it Ann. H. Poincar\'e}, {19} (2018), p. 1939--1991.

\bibitem[H]{Hari}  Hari, L., Coherent States for Systems of $L^2-$supercritical Nonlinear Schr\"odinger Equations,
{\it Comm. Partial Differential Equations}, {\bf 38}, p.~529--573, (2013).

\bibitem[Je]{Je} Jensen, S.M., The nonlinear coherent coupler, {\it IEEE J. Quantum Electron.} {\bf 18}, 1580--1583, (1982)

\bibitem[J1]{J} Joye, A., Proof of the Landau-Zener formula, {\it Asymp. Analysis}, {9} (1994), p. 209--258.

\bibitem[J2]{J1}  Joye, A., Absence of Absolutely Continuous Spectrum of Floquet Operators, {\it 
J.  Stat. Phys.}, {75} (1994), p. 929--952. 

\bibitem[J3]{J2}  Joye, A.,  General Adiabatic Evolution with a Gap Condition,  {\it Commun. Math. Phys.}, {275} (2007), p. 139--162.

\bibitem[JKP]{JKP}  Joye, A.,  Kunz, H.,   Pfister, C.-E, Exponential Decay and Geometric Aspect of Transition Probabilities in the Adiabatic Limit, {\it Ann. Phys.}, {208} (1991), p. 299--332.

\bibitem[JP]{JP} Joye, A., Pfister, C.-E., Exponentially Small Adiabatic Invariant for the Schr\"odinger Equation, {\it Commun. Math. Phys.} {140}, (1991), p. 15--41.

\bibitem[K1]{K1} Kato,  T., On the Adiabatic Theorem of Quantum Mechanics, {\it J. Phys. Soc. Japan}, {5} (1950), p. 435--439.

\bibitem[K2]{K2} Kato,  T., Perturbation Theory for Linear Operators (Springer-Verlag Berlin Heidelberg New York 1980).

\bibitem[Kh]{Kh10}
{ Khomeriki, R}, Nonlinear {L}andau-{Z}ener tunneling in coupled
  waveguide arrays, {\it Phys. Rev. A}, 82 (2010), p.~013839.

\bibitem[KhRu]{KhRu05}
Khomeriki, R. and Ruffo, S., {Nonadiabatic {L}andau-{Z}ener tunneling
  in waveguide arrays with a step in the refractive index}, {\it Phys. Rev. Lett.},
  94 (2005), p.~113904.

\bibitem[Kr]{Kr}
{Krein, S.G.}, Linear Differential Equations in Banach Space, {\it Translations of Mathematical Monographs, vol. 29}, AMS, 1971. 

\bibitem[LeD]{LeDretbook} {Le Dret, H.}, Nonlinear Elliptic Partial Differential Equations: An Introduction, Universitext, Springer, 253 p., 2018. 



\bibitem[LeP]{LeP} Le Potier, J., Lectures on Vector Bundles, CUP, 1997.


\bibitem[J-L et al.]{PRL03}
Jona-Lasinio, M., Morsch, O., Cristiani, M., Malossi, N., M{\"u}ller, J.~H.,
  Courtade, E., Anderlini, M. and Arimondo, E., {Asymmetric {L}andau-{Z}ener
  tunneling in a periodic potential}, {\it Phys. Rev. Lett.}, {91} (2003), p.~230406.


\bibitem[LLFY]{LLFY} Liu, J.,   Li, S.-C.,  Fu, L.-B.,  Ye,D.-F.,  Nonlinear Adiabatic Evolution of Quantum Systems Springer Singapore, 2018.

\bibitem[MCWW]{MCWW} Milburn, G.J., Corney, J., Wright, E.M., Walls, D.F., Quantum dynamics of an atomic Bose-Einstein condensate in a double-well potential,
{\it Phys. Rev. A} {55}, (1997), p. 4318--4324. 

\bibitem[N1]{N} Nenciu,  G., On the adiabatic theorem of quantum mechanics, {\it J. Phys. A}, Math. Gen., {13} (1980), p. 15--18.

\bibitem[N2]{N2}  Nenciu, G., Linear adiabatic theory. Exponential estimates,  {\it Commun. Math. Phys.}, {152} (1993), p. 479--496.

\bibitem[RS]{RS} Reed, M.,  Simon, B., Methods of Modern Mathematical Physics,  Academic Press, 1972.

\bibitem[Sc]{Sch} Schmid, J.,  Adiabatic theorems with and without spectral gap condition for non- semisimple spectral
values. In: Exner, P., K\"onig, W., Neidhardt, H. (eds.) Mathematical Results in Quantum Mechanics: Proceedings of the QMath12 Conference. World Scientific Publishing, Singapore, 2014.

\bibitem[Sp]{Sp} Spanier, E., Algebraic Topology,  Springer, 1994.

\bibitem[S]{S} Sparber, C., Weakly Nonlinear Time-Adiabatic Theory, {\it Ann. H. Poincar\'e}, {17} (2016), p.~913--936. 

\bibitem [Ta]{Ta} Tanaka, A., Adiabatic theorem for discrete time evolution, {\it J. Phys. Soc. Japan}, {80} (2011), p.~12.

\bibitem[Te]{T}  Teufel, S., A note on the adiabatic theorem without gap condition, {\it Lett. Math. Phys.},  {58} (2001), p.~261--266.

\bibitem[V]{V} Voros, A., Exact resolution method for general 1D polynomial Schr\"odinger equations, {\it J. Phys. A} {32} (1999), p.5993--6007. 

\end{thebibliography}
 \end{document}